\newtheorem{definition}{Definition}
\newtheorem{proposition}{Proposition}
\newtheorem{theorem}{Theorem}
\newtheorem{corollary}{Corollary}
\newtheorem{lemma}{Lemma}
\newtheorem{assumption}{Assumption}
\theoremstyle{remark}
\newtheorem{example}{Example}
\theoremstyle{remark}
\newtheorem{remark}{Remark}
\DeclareMathOperator*{\argmax}{argmax}
\DeclareMathOperator*{\argmin}{argmin}
\DeclareMathOperator*{\diag}{diag}
\newtheoremstyle{named}{}{}{\itshape}{}{\bfseries}{.}{.5em}{\thmnote{#3}}
\theoremstyle{named}
\title{Dynamically Aggregating Diverse Information\footnote{We are grateful to Cuimin Ba, Yash Deshpande, Mira Frick, Drew Fudenberg, Boyan Jovanovic, Erik Madsen, George Mailath, Konrad Mierendorff, Lan Min, Jacopo Perego, Peter Norman S\o rensen, Jakub Steiner, Philipp Strack and Can Urgun for helpful comments and suggestions. Xiaosheng Mu acknowledges the hospitality of Columbia University and the Cowles Foundation at Yale University, which hosted him during parts of this research. Annie Liang thanks National Science Foundation Grant SES-1851629 for financial support.} \\}
\author{Annie Liang\thanks{Northwestern University} \quad \quad Xiaosheng Mu\thanks{Princeton University} \quad \quad Vasilis Syrgkanis\thanks{Microsoft Research}}
\begin{document}

\pagenumbering{gobble}

\maketitle
\begin{abstract}
An agent has access to multiple information sources, each of which provides information about a different attribute of an unknown state. Information is acquired continuously\textemdash where the agent chooses both which sources to sample from, and also how to allocate attention across them\textemdash until an endogenously chosen time, at which point a decision is taken. We provide an exact characterization of the optimal information acquisition strategy under weak conditions on the agent's prior belief about the different attributes. We then apply this characterization to derive new results regarding: (1) endogenous information acquisition for binary choice, (2) strategic information provision by biased news sources, and (3) the dynamic consequences of attention manipulation. 
\end{abstract}
\clearpage



\pagebreak

\pagenumbering{arabic}

\section{Introduction}

We study dynamic acquisition of information when a decision-maker has access to multiple sources of information, and limited resources with which to acquire that information. Our decision-maker seeks to learn a Gaussian state, and each information source is modeled as a diffusion process whose drift is an unknown ``attribute" that contributes linearly to the state. Attributes are potentially correlated. This structure captures information acquisition in many economic settings, including for example:
\begin{itemize}
 \item A governor wants to learn the number of cases of a disease outbreak, and can acquire information about the incidence rate of the disease in different cities.
    \item An investor wants to assess the value of an asset portfolio, and acquires information about the value of each asset included in the portfolio.
    \item An analyst wants to forecast a macroeconomic variable such as GDP growth, and needs to aggregate recent economic activities across different industries and locations.
\end{itemize}

\noindent  At every instant of continuous time, the decision-maker allocates a fixed budget of attention/resources across the information sources, where this allocation determines the precision of information extracted from each source. For example, the governor may have a limited number of tests to allocate across testing centers each day, where more tests lead to a more precise estimate of the incidence rate. The decision-maker acquires information until an endogenously chosen stopping time, at which point he makes a decision whose payoff depends on the unknown state.

Our contribution is to demonstrate that the optimal dynamic acquisition strategy can  be explicitly characterized under weak conditions on the prior belief, and to explain what those conditions are. Under the optimal strategy, the decision-maker initially exclusively acquires information from the single most informative source, where ``more informative" is evaluated with respect to his prior belief over the unknown attribute values. At fixed times, the decision-maker begins learning from additional sources, and divides attention over these new sources as well as the ones he was learning from previously. Eventually, the decision-maker acquires information from all sources using a final and constant mixture. Notably, the optimal information acquisition strategy is not only history-independent but also robust across all decision problems. This implies, for example, that one does not need to solve for the optimal stopping time and information acquisition strategy jointly in this problem---optimal information acquisition is independent of when the decision-maker stops. We make use of this implication in Section \ref{sec:binary choice} to derive new results about the optimal stopping behavior in a binary choice problem.

To gain some intuitions for the optimal information acquisition strategy, it is useful to compare our problem with a simpler one, in which the decision-maker acquires information for a decision at a fixed end date. Since the payoff-relevant state and all sources of information are Gaussian in our setting, the Blackwell-optimal solution would then be to acquire information in any way that minimizes posterior variance of the state at the known end date \citep{Blackwell, HansenTorgersen}. We show that under certain conditions on the prior belief, it is possible to ``string together" these solutions across different end dates using a single history-independent dynamic strategy, which thus minimizes posterior variance at every moment of time. Generalizing a result of \citet{Greenshtein}, we show that this strategy---which we call the ``uniformly optimal" strategy---is best for every decision problem and every distribution over stopping times, including those that are endogenously chosen. 

When a uniformly optimal strategy does not exist, the variance-minimizing strategies for some end dates are in conflict with one another. In these environments the decision-maker must trade off across possible end dates, where the optimal way of doing this in general depends on the stopping time distribution and details of the payoff function. Thus, the existence of a uniformly optimal strategy is key to guaranteeing the properties of history independence and robustness across decision problems that we have outlined for our solution. 

The question of whether a uniformly optimal strategy exists turns out to relate to a classic problem in consumer theory regarding the normality of demand---i.e., whether a consumer's demand for various goods is weakly increasing in income. In our setting, the decision-maker's ``utility function" is the negative of the posterior variance function, and his ``income" is the budget constraint on attention. When a uniformly optimal strategy exists, this means that the decision-maker's demand for information from each source is weakly increasing in his overall attention budget. One of our sufficient conditions for existence of a uniformly optimal strategy---``perpetual complementarity" of the different sources---directly connects to a known sufficient condition in the literature for normality of demand. We additionally utilize the specific structure of our problem to provide two new sufficient conditions, and show that all of these conditions can be simply stated in terms of the decision-maker's prior belief. See Section \ref{sec:proof sketch} for further details.

Beyond the specific statements of the results, a main contribution of this paper is demonstrating that in the present framework (i) the study of endogenous information acquisition is quite tractable, permitting explicit and complete characterizations; and (ii) there is enough richness in the setting to accommodate various economically interesting questions (e.g., about the role of primitives such as correlation across attributes). This makes the characterizations useful for deriving new substantive results in settings motivated by particular economic questions. We illustrate this with three applications, where we use our main results to generalize (Application 1) and complement (Application 3) existing results in the literature, as well as to solve for the equilibrium in a new game between competing news sources (Application 2). In all three applications, we discover new economic insights.

\paragraph{Application 1: Binary Choice.} A large literature in economics and neuroscience (originating with \citet{RatcliffMcKoon}) considers a consumer's decision process for choosing between two goods with unknown payoffs. Recently,  \citet{FudenbergStrackStrzalecki} proposed a model in which a decision-maker endogenously allocates attention across learning about two normally distributed, but i.i.d., payoffs. This model is nested in our framework. We use our main result to generalize \citet{FudenbergStrackStrzalecki}'s Proposition 3 and Theorem 5 beyond i.i.d. payoffs to settings with (1) correlation in the payoffs and (2) asymmetry in the consumer's initial knowledge about the two payoffs. These generalizations bring important realism to the setting, since correlation and asymmetry are common features of choice environments. We characterize the optimal attention allocation given an arbitrary normal prior about the payoffs, and show that \citet{FudenbergStrackStrzalecki}'s main economic insight regarding the relationship between choice speed and accuracy holds in this general setting.

\paragraph{Application 2: Biased News Sources.} In our next application, we consider a stylized game between a liberal and a conservative news source that report on a common unknown (e.g., the fiscal cost of a policy proposal), where their reporting is biased in opposite directions. The sources choose the size of their bias, as well as the informativeness of their reporting, and compete over readers' attention. Using our characterization of information acquisition, we are able to derive the complete time path of readers' attention allocations given the sources' choices of precision and bias levels, which allows us to characterize equilibrium news provision in this model. One particular insight that emerges from this analysis is that incentives for bias not only lead to greater polarization in equilibrium, but also lead to lower-quality news provision (i.e., larger noise choices). This analysis contributes to a literature about how competition across news sources affects the quality of news \citep{GentzkowShapiro,GalpertiTrevino,ChenSuen,PeregoYuksel}, where our work is distinguished in considering the role of the time path of information demand.

\paragraph{Application 3: Attention Manipulation.} Finally, we use our framework to study the dynamic consequences of a one-time attention manipulation.
Recently, \citet{GossnerSteinerStewart} studied this question in a model where a decision-maker chooses between goods with independent payoffs. Under some assumptions, they show that a one-time manipulation of attention towards a given good leads to persistently higher cumulative attention devoted to that good, and persistently lower cumulative attention to every other good. We derive a complementary result in our setting, focusing on how correlation across the unknown attributes affects the consequences of attention manipulation. We show that with two sources, \citet{GossnerSteinerStewart}'s insights hold under flexible patterns of correlation. On the other hand, with more than two sources, the nature of correlation matters. We identify a property of the prior belief under which all sources provide substitutable information, and show that under this property (but not in general), attention manipulation leads to persistently higher attention for that source and lower attention for others. 

\subsection{Related Literature} \label{sec:related lit}

Our work builds on a large literature regarding dynamic acquisition of information. One part of this literature considers choice between unconstrained information structures at entropic (or more generally, ``posterior-separable") costs, see e.g.\ \citet{Yang}, \citet{SteinerStewartMatejka}, \citet{HebertWoodford}, \citet{MorrisStrack}, and \citet{Zhong}.\footnote{It is interesting that \citet{SteinerStewartMatejka} also showed how the solution to their dynamic problem reduces to a series of static optimizations, similar to our multi-stage characterization. However, their argument is based on the additive property of entropy and differs from ours.} Under this modeling approach, the cost to acquiring information depends on the decision-maker's current belief, and is often interpreted as a mental processing cost \citep{RISurvey}. In contrast, a second set of papers---to which our paper belongs---models agents as dynamically allocating a fixed budget of resources across a prescribed (and finite) set of experiments, see e.g.\ \citet{CheMierendorff}, \citet{Mayskaya}, \citet{FudenbergStrackStrzalecki}, \citet{GossnerSteinerStewart}, and \citet{AzevedoEtAl}. These papers, and ours, assume that the cost of information is independent of what the decision-maker currently knows. We view such information costs as a better match for applications in which the cost to information acquisition is physical, e.g., a limit on the number of available COVID tests that can be administered in a given day.

Relative to this latter strand of literature, a distinguishing feature of our work is the presence of flexible correlation. Dynamic learning about correlated unknowns is generally intractable, so there has been relatively little work done in this area. An exception is a model introduced by \citet{Callander}, where the available signals are the realizations of a single Brownian motion path at different points, and the agent (or a sequence of agents) chooses myopically. This informational setting has since been extended by \citet{GarfagniniStrulovici}, which considers the optimal experimentation strategy for a forward-looking agent with acquisition costs, and \citet{Bardhi}, which introduces a potential conflict between an agent acquiring the information and a principal making the decision. These models differ from ours in that agents can perfectly observe any of an infinite number of attributes, and the correlation structure across the attributes is derived from a primitive notion of similarity or distance. We show that in a different model with  flexible correlation across a finite number of sources (and with noisy observations), it is sometimes possible to exactly characterize the optimal forward-looking solution.\footnote{Also related are  \citet{KlabjanOlszewskiWolinsky} and \citet{Sanjurjo}, which study learning about multiple attributes. Besides having noisy Gaussian signals, the main distinction of our informational setting is again that we allow for correlation across attributes and focus on what this correlation implies for the optimal learning strategy.}

Our work additionally connects to a large literature on sequential sampling in statistics and operations research. Since the information acquisition decisions in our model are not directly linked to flow payoffs, our model does not fall under the classic multi-armed bandit framework \citep{Gittins1979,Bergemann2008}. This feature also distinguishes our results relative to a classic literature on ``learning by experimentation" \citep{EasleyKiefer,Aghion,KellerRadyCripps}. The ``best-arm identification" problem \citep{Bubeck2009, Russo} is more closely related to us, as it considers a decision-maker who samples for a number of periods before selecting an arm and receiving its payoff. Indeed, the special case of two arms with jointly normal payoffs is nested in our framework under the case of two attributes and equal payoff weights. Our Theorem \ref{thm:K=2} thus builds on a prior result of \citet{FrazierEtAl2008}, which showed that myopic information acquisition---or the ``knowledge gradient" policy in the language of that literature---is optimal when the two arms have independent normal payoffs. Our result generalizes \citet{FrazierEtAl2008}'s result by allowing for correlated payoffs and a broader class of decision problems.

The best-arm identification problem between three or more arms falls outside of our framework, since payoffs in that problem depend on a multi-dimensional unknown.\footnote{With two arms, the difference in their payoffs is a sufficient statistic for choosing which arm is better. Such reduction to a one-dimensional unknown is not available in the case of many arms.} From a number of papers including \cite{ChickFrazier} and \cite{KeVillas-Boas}, it is well-understood that characterizing the optimal strategy in those problems is quite challenging (although \citet{FrazierEtAl2008} and \cite{FrazierEtAl2009} showed that the knowledge gradient policy performs well asymptotically). Our setting also involves multi-dimensional uncertainty, but we assume that the unknowns are linearly aggregated into a one-dimensional payoff-relevant variable. We show that under this restriction, exact characterization of the optimal strategy is feasible, and in fact it is the knowledge gradient policy (suitably defined in continuous time). We also discover new properties of the knowledge gradient policy in our continuous-time setting: In each of a finite number of stages, the policy attends to a fixed set of attributes with a constant ratio of attention, until this set expands and the next stage commences.



A key technical tool behind our characterization builds on a literature about the comparison of normal experiments. Following the classic work of \citet{Blackwell}, \citet{HansenTorgersen} showed that in a static decision problem, different normal signals about a one-dimensional payoff-relevant state can be Blackwell-ranked based on how much they reduce the variance of the state.  \cite{Greenshtein} subsequently derived comparisons between deterministic \emph{sequences} of conditionally independent normal signals about an unknown state. His Theorem 3.1 implies that one sequence Blackwell-dominates another if and only if it leads to lower posterior variances about the state at every time. Our Lemma \ref{lemma:Greenshtein} shows that \citet{Greenshtein}'s characterization is valid in a more general setting, in which time is continuous, and the sequence of signals can be chosen in a history-dependent manner (i.e., the first signal's realization can determine which signal is chosen next). 
\section{Model} \label{sec:model}

An agent has access to $K \geq 2$ \emph{information sources}, each of which is a diffusion process that provides information about an unknown \emph{attribute} $\theta_i \in \mathbb{R}$. The random vector $(\theta_1, \dots, \theta_K)$ is jointly normal with a known prior mean vector $\mu$ and prior covariance matrix $\Sigma$. We assume $\Sigma$ has full rank, so the attributes are linearly independent. 

As we describe in more detail below, the agent's decision depends on a \emph{payoff-relevant state} $\omega \in \mathbb{R}$. We assume the state is a linear combination of the attributes:
\begin{assumption} \label{assmp:linear}
$\omega = \alpha_1 \theta_1 + \dots + \alpha_K \theta_K$ for known weights $\alpha_1, \dots, \alpha_K \in \mathbb{R}$. 
\end{assumption}
\noindent It is equivalent (up to a constant) to assume that the vector $(\omega, \theta_1, \dots, \theta_K)$ is jointly normal, and that there is no residual uncertainty about $\omega$ conditional on the attribute values.\footnote{If $\omega, \theta_1, \dots, \theta_K$ are jointly normal, then $\omega$ can be rewritten as a linear combination of the $\theta_i$ plus a residual term that is independent of each $\theta_i$. The assumption of no residual uncertainty means that the residual term is a constant, returning Assumption \ref{assmp:linear} up to an additive constant (which can be normalized to zero in our problem).} Because any attribute value can be replaced with its negative, assuming $\alpha_i \geq 0$ is without loss. For ease of exposition, we will further assume that each weight $\alpha_i$ is strictly positive. 

Time is continuous, and the agent has a unit budget of attention to allocate at every instant of time. Formally, at each $t \in [0, \infty)$, the agent chooses an attention vector $\beta_1(t), \dots, \beta_K(t)$ subject to the constraints $\beta_i(t) \geq 0$ (attentions are positive) and $\sum_{i} \beta_i(t) \leq 1$ (allocations respect the budget constraint).

Attention choices influence the diffusion processes $X_1, \dots, X_K$ observed by the agent in the following way:
\begin{equation}\label{eq:diffusion processes}
dX^t_i = \beta_i(t) \cdot \theta_i \cdot dt + \sqrt{\beta_i(t)} \cdot d \mathcal{B}^t_i.
\end{equation}
Above, each $\mathcal{B}_i$ is an independent Brownian motion, and the term $\sqrt{\beta_i(t)}$ is a standard normalizing factor to ensure constant informativeness per unit of attention devoted to each source. Thus, devoting $T$ units of time to observation of source $i$ is equivalent to observation of the normal signal $\theta_i + \mathcal{N}(0,1/T)$, or $T$ independent observations of the standard normal signal $\theta_i + \mathcal{N}(0,1)$. This formulation treats ``attention" and ``time" in the same way, in the sense that devoting $1/2$ attention to source $i$ for a unit of time provides the same amount of information about $\theta_i$ as devoting full attention to source $i$ for a $1/2$ unit of time. We also note that since all sources are assumed to be equally informative about their corresponding attributes, it is \emph{with loss} to further normalize the payoff weights $\alpha_i$ to be equal to one another.\footnote{In fact, our subsequent results indicate that the case of equal weights is special. For example, with two sources, the conclusions of Theorem \ref{thm:K=2} always hold when $\alpha_1 = \alpha_2$, but do not hold in general.}



\begin{remark} \label{remark:discreteTime} 
As these comments suggest, there is a natural discrete-time analogue to our continuous-time model: At each period $t\in \mathbb{Z}_+$, the agent has a unit budget of precision to allocate across $K$ normal signals. Choice of attention vector $(\pi_1(t), \dots, \pi_K(t))$ results in one observation of the normal signal $\theta_i + \mathcal{N}\left(0, 1/{\pi_i(t)}\right)$ for each source $i=1, \dots, K$. All of our main results have an immediate corollary in that model. See Section \ref{sec:discussion} for further discussion.
\end{remark}

Let $(\Omega, \mathbb{P}, \{\mathcal{F}_t\}_{t\in \mathbb{R}_+})$ describe the relevant probability space, where the information $\mathcal{F}_t$ that the agent observes up to time $t$ is the collection of paths $\left\{X_i^{\leq t}\right\}_{i =1}^{K}$ (with $X_i^{\leq t}$ representing the sample path of $X_i$ from time $0$ to time $t$). An \emph{information acquisition strategy} $S$ is a map from $\left\{X_i^{\leq t}\right\}_{i, t}$ into $\Delta(\{1,\dots, K\})$, representing how the agent divides attention at each instant as a function of the observed diffusion processes.\footnote{We assume that given the agent's attention strategy, the stochastic differential equations in (\ref{eq:diffusion processes}) have a solution. This is true for example if each $\beta_i(t)$ is a deterministic function of $t$ (as in the optimal strategy that we describe in Theorems \ref{thm:K=2} and \ref{thm:general}), or if $\sqrt{\beta_i(t)}$ satisfies standard Lipschitz conditions (see Section 6.1 of \citet{YongZhou}).} In addition to allocating his attention, the agent chooses how long to acquire information for; that is, at each instant he determines (based on the history of observations) whether to continue acquiring information, or to stop and take an action. Formally, the agent chooses a \emph{stopping time} $\tau$, which is a map from $\Omega$ into $[0, +\infty]$ satisfying the measurability requirement $\{\tau \leq t\} \in \mathcal{F}_t$ for all $t$.  


At the endogenously chosen end time $\tau$, the agent will choose from a set of actions $A$ and receive the payoff $u(\tau, a,\omega$), where $u$ is a known payoff function that depends on the stopping time $\tau$, the action taken $a$ and the payoff-relevant state $\omega$. This formulation allows for additively separable waiting costs, $u(\tau, a, \omega) = u_1(a, \omega) - c(\tau)$, as well as geometric discounting, $u(\tau, a, \omega) = \delta^\tau \cdot u_2(a, \omega)$. The agent's posterior belief about $\omega$ at time $\tau$ determines the action that maximizes his expected flow payoff $\mathbb{E}[u(\tau, a,\omega)]$. We will only impose the following weak assumption on the payoff function: 
\begin{assumption} \label{assmp:waiting is costly}
Given any (normal) belief about $\omega$, $\max_{a} \mathbb{E}[u(\tau, a,\omega)]$ is decreasing in $\tau$. 
\end{assumption}
\noindent That is, we assume that holding fixed the agent's belief at the time of decision, an earlier decision is better. In the case of $u(\tau, a, \omega) = u_1(a, \omega) - c(\tau)$, this assumption requires the waiting cost $c(\tau)$ to be non-decreasing in $\tau$; in the case of $u(\tau, a, \omega) = \delta^\tau \cdot u_2(a, \omega)$, the assumption is that the optimal flow payoff $\max_{a} \mathbb{E}[u_2(a,\omega)]$ is non-negative (which is satisfied, for example, if there is a default action that always yields zero payoff).

To summarize, the agent chooses his information acquisition strategy and stopping time $(S,\tau)$ to maximize
$
 \mathbb{E}\left[ \max_a~ \mathbb{E}[u(\tau, a,\omega) \vert \mathcal{F}_\tau]\right].
$
In this paper we primarily focus on characterizing the optimal information acquisition strategy $S$. In general the strategies $S$ and $\tau$ should be determined jointly, but our results will show that in many cases the optimal $S$ can be characterized independently from the choice of $\tau$.

\section{Main Results}

In Section \ref{sec:K=2} we consider the case of two attributes, where we are able to derive a slightly stronger result. In Section \ref{sec:general K} we characterize the optimal attention allocation strategy for any finite number of attributes. All proofs appear in the appendix, and we provide an extended explanation of these results in Section \ref{sec:proof sketch}.

\subsection{Two Attributes} \label{sec:K=2}
Suppose there are two attributes $\theta_1$ and $\theta_2$, and the payoff-relevant state is $\omega = \alpha_1 \theta_1 + \alpha_2 \theta_2$, with each $\alpha_i > 0$. The agent's prior over the unknown attributes is 
$(\theta_1,\theta_2) \sim \mathcal{N}(\mu, \Sigma)$. Then the prior covariance between each attribute $i$ and the payoff-relevant state $\omega$ is
$cov_i:=\mbox{Cov}(\omega, \theta_i) = \alpha_i \Sigma_{ii} + \alpha_j \Sigma_{ji}$, 
and we assume that these covariances satisfy the following relationship:
\begin{assumption} \label{assmp:K=2}
$cov_1 + cov_2 = \alpha_1 (\Sigma_{11} + \Sigma_{12}) + \alpha_2 (\Sigma_{21} + \Sigma_{22}) \geq 0$.
\end{assumption}

\noindent Since both variances $\Sigma_{11},\Sigma_{22}$ are positive, this property holds if the covariance $\Sigma_{12}$ is not too negative relative to the size of either variance. If the weights on the two attributes are equal (i.e., $\alpha_1=\alpha_2$), this property holds for all prior beliefs over the attributes (since $2 \cdot \vert \Sigma_{12} \vert \leq 2\cdot \sqrt{\Sigma_{11} \cdot \Sigma_{22}} \leq \Sigma_{11} + \Sigma_{22}$). If the attributes are positively correlated ($\Sigma_{12}=\Sigma_{21} \geq 0$), then this property holds for all payoff weights $\alpha_1$ and $\alpha_2$. 


\begin{theorem}\label{thm:K=2}
Suppose $K = 2$ and Assumption \ref{assmp:K=2} is satisfied. W.l.o.g.\ let $cov_i \geq cov_j$.  Define
\[t_i^*:= \frac{cov_i-cov_j}{\alpha_j \det(\Sigma)}.\]
Then an optimal information acquisition strategy is history-independent and hence can be expressed as a deterministic path of attention allocations $(\beta_1(t),\beta_2(t))_{t\geq 0}$. This path consists of two stages: 

\begin{itemize}[noitemsep]
\item \textbf{Stage 1:} At all times $t < t_i^*$, the agent optimally allocates all attention to attribute $i$ (that is, $\beta_i(t)=1$ and $\beta_j(t)=0$). 
\item \textbf{Stage 2:} At all times $t \geq t_i^*$, the agent optimally allocates attention in the constant proportion
$(\beta_1(t),\beta_2(t))= \left(\frac{\alpha_1}{\alpha_1 + \alpha_2}, \frac{\alpha_2}{\alpha_1 + \alpha_2}\right)$.
\end{itemize}
\end{theorem}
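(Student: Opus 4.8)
The plan is to reduce the dynamic problem to a family of static variance-minimization problems and then invoke Lemma \ref{lemma:Greenshtein}. First I would note that, because $(\omega,\theta_1,\theta_2)$ is jointly normal and the diffusions in (\ref{eq:diffusion processes}) are conditionally independent given the attributes, the time-$t$ posterior covariance of $\theta=(\theta_1,\theta_2)$ depends only on the cumulative attentions $n_i(t)=\int_0^t\beta_i(s)\d s$ via $\Sigma_t=(\Sigma^{-1}+D_t)^{-1}$ with $D_t=\diag(n_1(t),n_2(t))$, so the posterior variance of the state is $V(n_1,n_2)=\alpha^\top(\Sigma^{-1}+D)^{-1}\alpha$. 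Since increasing any $n_i$ only lowers $V$, the full budget $n_1+n_2=t$ is used, and by Lemma \ref{lemma:Greenshtein} together with Assumption \ref{assmp:waiting is costly} it suffices to exhibit one history-independent path whose cumulative allocation, at every budget $t$, minimizes $V$ over $\{n_1+n_2=t,\ n_i\ge 0\}$; such a ``uniformly optimal'' path is then automatically optimal for every decision problem and every stopping time.

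Next I would establish the two facts that drive everything. The map $X\mapsto\alpha^\top X^{-1}\alpha$ is convex on positive-definite matrices (it equals $\sup_y\,2\alpha^\top y-y^\top X y$) and $X=\Sigma^{-1}+D$ is affine in $(n_1,n_2)$, so $V$ is convex and first-order conditions are sufficient. Differentiating gives $\partial V/\partial n_i=-g_i^2$, where $g:=\Sigma_t\alpha$ has entries $g_i=\mathrm{Cov}(\theta_i,\omega\mid\mathcal F_t)$, the posterior counterpart of $cov_i$ and the myopic (knowledge-gradient) value of sampling source $i$. Hence the static optimum places all attention on the source with the larger $g_i^2$ at a corner, and equalizes $g_1^2=g_2^2$ in the interior, so the entire analysis reduces to tracking $g_1,g_2$ along candidate paths.

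For Stage 1 I would solve the pure-$i$ dynamics. Writing $E_{ii}$ for the matrix with a single $1$ in the $(i,i)$ entry, the choice $\dot D_t=E_{ii}$ makes the posterior obey the rank-one ODE $\dot\Sigma_t=-\Sigma_t E_{ii}\Sigma_t$, which integrates in closed form and yields $g_i(t)=cov_i/(1+\Sigma_{ii}t)$ and $g_j(t)=(cov_j+\alpha_j\det(\Sigma)\,t)/(1+\Sigma_{ii}t)$. The quantities $g_i-g_j$ and $g_i+g_j$ then have numerators $cov_i-cov_j-\alpha_j\det(\Sigma)\,t$ and $cov_i+cov_j+\alpha_j\det(\Sigma)\,t$; the former vanishes exactly at $t_i^*$, while the latter is nonnegative precisely by Assumption \ref{assmp:K=2}. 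I would conclude $g_i(t)\ge|g_j(t)|$ on $[0,t_i^*]$ with equality at $t_i^*$, so the corner is the static optimum for every budget $t\le t_i^*$ and the interior condition first binds at $t_i^*$. This is exactly where Assumption \ref{assmp:K=2} does its work: it guarantees that source $i$ never loses its marginal advantage during Stage 1.

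For Stage 2 I would verify that the constant split $\beta\propto\alpha$ preserves the interior condition. The law of motion $\dot g=-\Sigma_t\dot D_t g$ specializes, when $\dot D_t=\frac{1}{\alpha_1+\alpha_2}\diag(\alpha_1,\alpha_2)$ and $g_1=g_2=:g^*$, to $\dot D_t g=\frac{g^*}{\alpha_1+\alpha_2}\alpha$, whence $\dot g=-\frac{g^*}{\alpha_1+\alpha_2}\Sigma_t\alpha=-\frac{g^*}{\alpha_1+\alpha_2}g$: both components decay at a common proportional rate, so the set $\{g_1=g_2\}$ is invariant and, having entered it at $t_i^*$, the path keeps $g_1=g_2$ for all $t\ge t_i^*$. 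Therefore its cumulative allocation is, at every $t$, the global (by convexity) minimizer of $V$ at budget $t$; the path is admissible since each $n_i$ is nondecreasing, and it is history-independent, so Lemma \ref{lemma:Greenshtein} delivers optimality. I expect the main obstacle to be this Stage 2 invariance together with its dovetailing with Stage 1 at $t_i^*$: showing that the pointwise optima vary monotonically with the budget---equivalently, that $\beta\propto\alpha$ exactly preserves $g_1=g_2$---is what makes the family of static solutions realizable by a single admissible dynamic strategy, and is the crux of the proof.
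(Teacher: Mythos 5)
Your proposal is correct and follows essentially the same route as the paper's proof: reduce to uniform optimality via Lemma \ref{lemma:Greenshtein}, compute the marginal values $\partial_i V = -\gamma_i^2$ with $\gamma = (\Sigma^{-1}+\diag(q))^{-1}\alpha$, use Assumption \ref{assmp:K=2} to show source $i$ retains the (weakly) larger $|\gamma_i|$ throughout Stage 1, and verify that the $\alpha$-proportional mixture preserves the first-order condition in Stage 2. The only differences are in execution---you check the corner condition only along the pure-$i$ path and invoke convexity of $V$ for sufficiency, and you establish Stage 2 via the invariance of $\{g_1=g_2\}$ under the ODE $\dot g = -\frac{g^*}{\alpha_1+\alpha_2}g$ rather than solving the interior first-order condition in closed form for $n(t)$ as the paper does---but these are equivalent and equally rigorous.
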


There are two stages of information acquisition. In the first stage, which ends at some time $t^*_i$, the agent allocates all of his attention to the attribute $i$ with higher prior covariance with the payoff-relevant state. After time $t^*_i$, he divides his attention across the attributes in a constant ratio, where the long-run instantaneous attention allocation is proportional to the weights $\alpha$. Note that depending on the agent's stopping rule, Stage 2 of information acquisition may never be reached along some histories of the diffusion processes. But as long as the agent continues acquiring information, his attention allocations are as given above. In Online Appendix \ref{appx:uniqueness}, we show that under mild technical assumptions, the optimal attention strategy is in fact unique up to the stopping time $\tau$ (after which attention allocations obviously do not matter).

The characterization reveals that the optimal information acquisition strategy is completely determined from the prior covariance matrix $\Sigma$ and the payoff weight vector $\alpha$ (the prior mean vector $\mu$ does not play a role). In particular, the strategy does not depend on details of the agent's payoff function $u(\tau, a, \omega)$, including his time preferences. When the prior belief satisfies Assumption \ref{assmp:K=2}, the optimal information acquisition strategy is thus constant across different objectives and also across different stopping rules. Relatedly, as we demonstrate in Section \ref{sec:binary choice}, we can solve for the optimal stopping rule in this setting as if information acquisition were \emph{exogenously} given by Theorem \ref{thm:K=2}. In Online Appendix \ref{appx:counterexample}, we provide an example to illustrate that these properties can fail when Assumption \ref{assmp:K=2} is violated. Online Appendix \ref{appx:K=2 necessity} further shows that for the case of two attributes, Assumption \ref{assmp:K=2} is not only sufficient but also necessary for our characterization to hold independently of the agent's payoff criterion.

Below we illustrate this optimal strategy using a few simple examples. 

\begin{example}[Independent Attributes] \label{ex:independent} Suppose $(\theta_1,\theta_2) \sim \mathcal{N}(\mu, \left(\begin{smallmatrix} 6 & 0 \\ 0 & 1 \end{smallmatrix} \right))$ and the payoff-relevant state is $\omega = \theta_1 + \theta_2$. Then, applying Theorem \ref{thm:K=2}, the agent initially puts all attention towards learning $\theta_1$. At time $t_1^*= \frac{5}{6}$, his posterior covariance matrix is the identity matrix. After this time he optimally splits attention equally between the two attributes, which are now symmetrically distributed.
\end{example}

\smallskip

\begin{example}[Correlated Attributes]\label{ex:equalWeights} Suppose 
$(\theta_1,\theta_2) \sim \mathcal{N}(\mu, \left(\begin{smallmatrix} 6 & 2 \\ 2 & 1\end{smallmatrix}\right))$, and the payoff-relevant state is still $\omega =\theta_1 + \theta_2$. Applying Theorem \ref{thm:K=2}, the agent initially puts all attention towards learning $\theta_1$. At time $t_1^*=\frac{5}{2}$, his posterior covariance matrix becomes $ \left(\begin{smallmatrix} 3/8 & 1/8 \\ 1/8 & 3/8 \end{smallmatrix} \right)$. Compared to the previous example, it takes longer for the agent's uncertainty about the two attributes to equalize, since information about $\theta_1$ also reduces the agent's uncertainty about $\theta_2$. After $t_1^*=\frac{5}{2}$ he optimally splits attention equally between the two attributes.
\end{example}

\smallskip

\begin{example}[Unequal Payoff Weights]\label{ex:unequalWeights} Consider the prior belief given in the previous example, but suppose now that the payoff-relevant state is $\omega =\theta_1 + 2\theta_2$. As before, the agent initially puts all attention towards learning $\theta_1$. Stage 1 ends at time $t_1^*=\frac{3}{2}$, when the posterior covariance matrix is  $\left(\begin{smallmatrix} 3/5 & 1/5 \\ 1/5 & 2/5 \end{smallmatrix} \right)$. Note that because of the asymmetry in the payoff weights, the agent's posterior uncertainty about the two attributes is not the same at this switch point. As we will discuss in Section \ref{sec:proof sketch}, however, the \emph{marginal values} of learning about the two attributes are equal to one another at time $t_1^*$. After this time, the agent optimally acquires information in the mixture $(1/3, 2/3)$.
\end{example}

\smallskip

\subsection{$K$ Attributes} \label{sec:general K}
We now consider the case of multiple attributes. We provide three different sufficient conditions under which the optimal information acquisition strategy can be exactly characterized. 

\begin{assumption}[Perpetual Substitutes] \label{assmp:Substitutes} $\Sigma^{-1}$ has non-positive off-diagonal entries. 
\end{assumption}

\begin{assumption}[Perpetual Complements] \label{assmp:Complements} $\Sigma$ has non-positive off-diagonal entries and $\Sigma \cdot \alpha$ has non-negative coordinates.
\end{assumption}

\begin{assumption}[Diagonal Dominance]\label{assmp:diagonal dominance}
$\Sigma^{-1}$ is diagonally-dominant. That is, 
$
[\Sigma^{-1}]_{ii} \geq \sum_{j \neq i} \vert [\Sigma^{-1}]_{ij} \vert$ for all $ 1 \leq i \leq K.
$
\end{assumption}

\begin{figure}[h]
    \centering
    \includegraphics[scale=0.4]{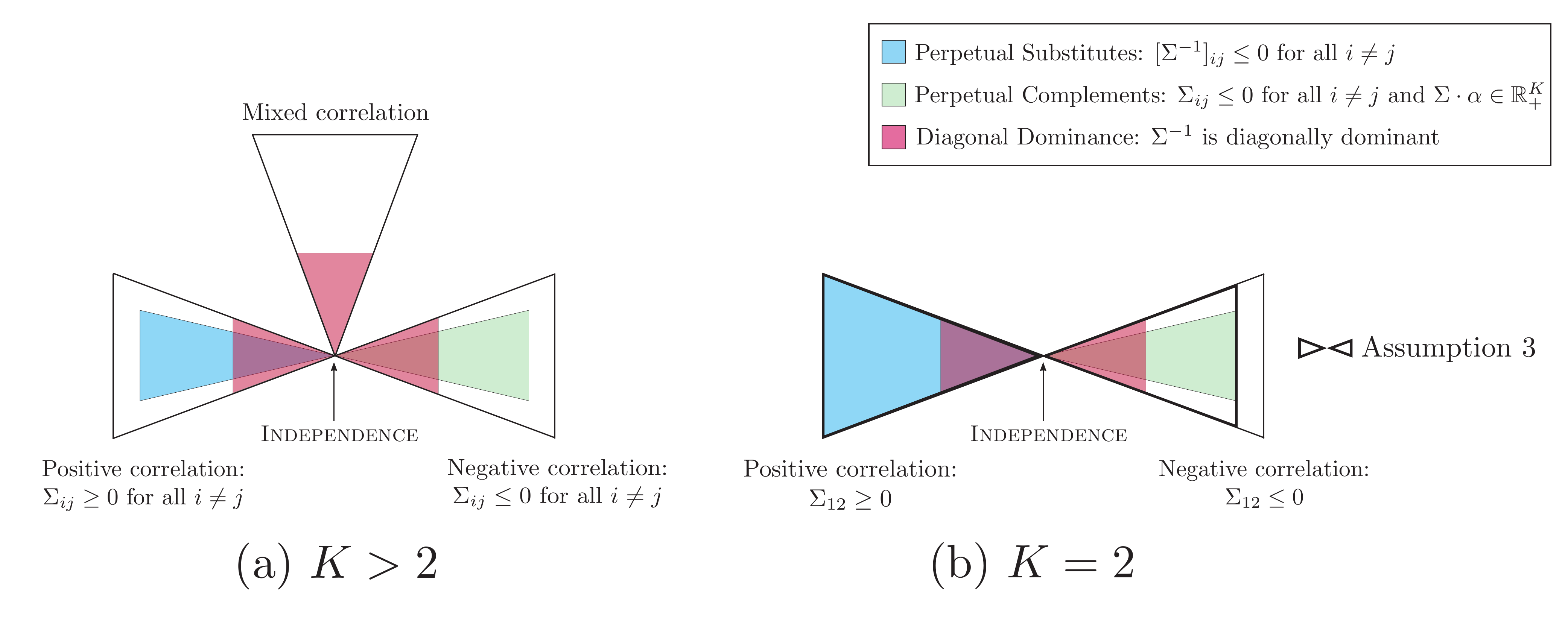}
    \caption{Relationship between Assumptions \ref{assmp:K=2} - \ref{assmp:diagonal dominance}.}
    \label{fig:Venn}
\end{figure}

These conditions, their relationship to one another, and their relationship to our previous Assumption \ref{assmp:K=2} for the case of $K=2$, are depicted in Figure \ref{fig:Venn}. Assumption \ref{assmp:Substitutes} generalizes a previous sufficient condition $\Sigma_{12} \geq 0$ to more than two attributes. It requires that the \emph{partial correlation} between any two attributes---controlling for all other attributes---is positive.\footnote{The partial correlation between attributes $\theta_i$ and $\theta_j$ under covariance matrix $\Sigma$ is equal to $-[\Sigma^{-1}]_{ij}/\sqrt{[\Sigma^{-1}]_{ii} [\Sigma^{-1}]_{jj}}$.}$^,$\footnote{Positive-definite matrices with non-positive off-diagonal entries are known as $M$-matrices \citep{Plemmons1977}. It is well-known that the inverse of such matrices has positive entries everywhere. Thus Assumption \ref{assmp:Substitutes} implies positive correlation $\Sigma_{ij} \geq 0$, but is strictly stronger when $K > 2$. We mention that assuming $\Sigma_{ij} \geq 0$ for all $i \neq j$ does not guarantee Theorem \ref{thm:general} to hold. A counterexample with three sources is presented in Online Appendix \ref{appx:K=3 counterexample}.} Proposition \ref{prop:substitutes} in Online Appendix \ref{appx:substitutes/complements} shows that Assumption \ref{assmp:Substitutes} is an if and only if condition for information from any pair of sources to be ``perpetually substitutable." By this we mean that the value of acquiring information from one source is decreasing in the amount of information from the other source, and that this property holds not only at time $t=0$ but also along any path of information acquisitions.

Assumption \ref{assmp:Complements} imposes that all attributes are negatively correlated with one another, but that each attribute is initially positively correlated with the payoff-relevant state. Under this assumption, prior covariances are ``mildly" negative. Proposition \ref{prop:complements} in Online Appendix \ref{appx:substitutes/complements} shows that Assumption \ref{assmp:Complements} is an if and only if condition for information from any pair of sources to be perpetually complementary, in the sense that the value of acquiring information from one source is increasing in the amount of information from the other source, again along the entire path of information acquisitions.

The last Assumption \ref{assmp:diagonal dominance} requires the inverse of the prior covariance matrix to be diagonally-dominant. Roughly speaking, this assumption allows for some pairs of attributes to be complements and other pairs to be substitutes, but puts an upper bound on the magnitude of any complementarity or substitution effects. When $K = 2$, Assumption \ref{assmp:diagonal dominance} reduces to the simple condition $\vert \Sigma_{12} \vert \leq \min \{\Sigma_{11}, \Sigma_{22}\}$, which is sufficient for our previous Assumption \ref{assmp:K=2}. For general $K$, Assumption \ref{assmp:diagonal dominance} is implied by a similar condition $\vert \Sigma_{ij} \vert \leq \frac{1}{2K-3} \Sigma_{ii}$ for all $i \neq j$ (see Appendix \ref{appx:2K-3 implies diagonal dominance}) We explain the role of these assumptions in Section \ref{sec:proof sketch}.

\begin{theorem}\label{thm:general}
Suppose any of Assumption \ref{assmp:Substitutes}, \ref{assmp:Complements}, or \ref{assmp:diagonal dominance} is satisfied.\footnote{We point out that while Assumption \ref{assmp:Substitutes}, \ref{assmp:Complements}, and \ref{assmp:diagonal dominance} are each sufficient for the theorem to hold, they are not in general necessary (unlike Assumption \ref{assmp:K=2} in the $K=2$ case).} Then, there exist times 
\[0 = t_0 < t_1 < \cdots < t_m = +\infty\]
and nested sets 
\[\emptyset \subsetneq B_1  \subsetneq \cdots \subsetneq B_m = \{1, \dots, K\},\]
such that an optimal information acquisition strategy is described by a deterministic path of attention allocations $(\beta(t))_{t \geq 0}$. This attention path consists of $m \leq K$ stages: For each $1 \leq k \leq m$, $\beta(t)$ is constant at all times $t \in [t_{k-1}, t_k)$ and supported on the sources in $B_k$. In particular, the optimal attention allocation at any time $t \geq t_{m-1}$ is proportional to $\alpha$. 
\end{theorem}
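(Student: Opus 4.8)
The plan is to reduce the dynamic problem to a family of static variance-minimization problems indexed by the time horizon, and then to show that these problems admit a common, monotone solution path. By Lemma~\ref{lemma:Greenshtein}, a deterministic attention path that minimizes the posterior variance of $\omega$ at \emph{every} horizon Blackwell-dominates every other (possibly history-dependent) strategy; together with Assumption~\ref{assmp:waiting is costly}, which makes an earlier decision preferable at a fixed belief, such a ``uniformly optimal'' path is then optimal for every payoff function $u$ and every stopping rule $\tau$. So it suffices to exhibit one deterministic path realizing the variance frontier. Writing $n_i = \int_0^T \beta_i(t)\,\mathrm{d}t$ for the cumulative attention to source $i$, the independence of the Brownian motions and the normalization in \eqref{eq:diffusion processes} imply that the time-$T$ posterior precision matrix is $\Sigma^{-1} + \diag(n)$, so the posterior variance of $\omega=\alpha^\top\theta$ is $v(n) = \alpha^\top(\Sigma^{-1}+\diag(n))^{-1}\alpha$. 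The target is therefore the demand path $n^*(T) = \argmin_{\,n\ge 0,\ \mathbf{1}^\top n = T} v(n)$.

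First I would record the convex structure. Differentiating gives $\partial v/\partial n_i = -g_i^2$ with $g := (\Sigma^{-1}+\diag(n))^{-1}\alpha$ the vector of posterior covariances $\mathrm{Cov}(\theta_i,\omega)$ (so that $g=\Sigma\alpha$ at the prior), and Hessian $2\,\diag(g)\,P\,\diag(g)$ where $P=(\Sigma^{-1}+\diag(n))^{-1}\succ 0$; hence $v$ is convex and the minimizer $n^*(T)$ is unique. The Karush--Kuhn--Tucker conditions state that at the optimum the marginal values $g_i^2$ are equalized across all sources receiving positive attention and weakly larger than those of idle sources---this is the ``equal marginal value'' property illustrated in Example~\ref{ex:unequalWeights}. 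This identifies the optimal path with a continuous-time knowledge-gradient/water-filling rule on the marginal values $g_i^2$.

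The crux---and the step I expect to be the main obstacle---is to show that the demand path $T\mapsto n^*(T)$ is coordinate-wise non-decreasing, i.e.\ that demand for information from each source is normal in the attention budget $T$. This is exactly the property that lets us string the horizon-specific optima into a single feasible path. I would establish it by comparative statics of the KKT system, exploiting the matrix-sign structure supplied by each assumption. Under Assumption~\ref{assmp:Substitutes}, $\Sigma^{-1}$ has non-positive off-diagonals, so $\Sigma^{-1}+\diag(n)$ remains a positive-definite $M$-matrix along the whole path; its inverse $P$ is entrywise non-negative and $g=P\alpha\ge 0$, which pins down the signs of the cross-effects $\partial g_i/\partial n_j = -g_j P_{ij}$ and yields normality. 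Under Assumption~\ref{assmp:Complements}, $\Sigma$ itself is an $M$-matrix (so $\Sigma^{-1}\ge 0$) while $\Sigma\alpha\ge 0$ keeps $g\ge0$; this is the case connecting to the classical sufficient condition for normal demand under complementary goods. Under Assumption~\ref{assmp:diagonal dominance}, diagonal dominance of $\Sigma^{-1}$ is inherited by $\Sigma^{-1}+\diag(n)$ for all $n\ge0$, which bounds the magnitude of the cross-effects and again forces the slopes of $n^*(\cdot)$ to be non-negative. Verifying that in each case no active source ever exits the support---so the argmin never moves backward in any coordinate---is the delicate part.

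Given monotonicity, the remaining conclusions follow structurally. Define $\beta(t):=\tfrac{\mathrm{d}}{\mathrm{d}t}n^*(t)$; monotonicity gives $\beta(t)\ge 0$ and the constraint $\mathbf{1}^\top n^*(T)=T$ gives $\sum_i\beta_i(t)=1$, so $\beta$ is a feasible attention path, and by construction it realizes the variance frontier at every horizon. Because the support $B(T)=\supp n^*(T)$ is non-decreasing and $K$ is finite, the supports form a finite nested chain $\emptyset\subsetneq B_1\subsetneq\cdots\subsetneq B_m=\{1,\dots,K\}$ with $m\le K$. On each maximal interval where the support equals $B_k$, the KKT conditions reduce to the \emph{linear} system $(\Sigma^{-1}+\diag(n))g=\alpha$ with $g_i$ equal to a common value for $i\in B_k$ and $\sum_{i\in B_k}n_i=T$; solving it shows $n^*(T)$ is affine in $T$, so $\beta(t)$ is constant and supported on $B_k$ throughout stage $k$. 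Finally, as $T\to\infty$ the added precision dominates, $P\approx\diag(1/n_i)$ and $g_i\approx\alpha_i/n_i$, so equalizing $g_i^2$ forces $n_i\propto\alpha_i$; hence on the terminal stage $\beta(t)\propto\alpha$, completing the characterization.
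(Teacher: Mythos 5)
Your architecture matches the paper's: reduce to uniform optimality via Lemma \ref{lemma:Greenshtein}, show the $t$-optimal demand path $n^*(T)$ is coordinate-wise monotone, and read off the piecewise-affine stage structure from the KKT system. But the crux you correctly identify---that no coordinate of the argmin ever moves backward---is exactly the part you do not prove, and it is where all three assumptions actually do their work. Two concrete gaps. First, the first-order condition equalizes $g_i^2$, i.e.\ $\vert g_i\vert$, \emph{not} $g_i$, across active sources. Your reduction to the linear system ``$(\Sigma^{-1}+\diag(n))g=\alpha$ with $g_i$ equal to a common value on $B_k$'' silently assumes that the coordinates of $g$ attaining the maximal absolute value are all positive and hence equal. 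That is precisely the paper's Lemma \ref{lemma:gamma positive}, and it can fail without the assumptions: in the paper's $K=3$ counterexample the marginal-value-maximizing set has $\gamma=(-1,1,1)'$, the affine continuation has a negative slope, and the demand for source $1$ decreases. So the claim that the sign facts you list ``yield normality'' is carrying the entire theorem without an argument.

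Second, even granting equal positive $g_i$ on the active set, you must show that the affine direction solving the stage-$k$ system is non-negative---equivalently, that the projected weight vector $\tilde{\alpha}=(\Sigma_{TL})^{-1}(\Sigma_{TL},\Sigma_{TR})\alpha$ governing the stage-$k$ mixture has positive coordinates. The paper establishes this separately under each assumption (row sums of $(\Sigma_{TL})^{-1}$ via Carlson--Markham Schur-complement arguments for diagonal dominance; $M$-matrix and inverse-$M$-matrix facts for the other two). Note also that for perpetual substitutes the standard consumer-theory route is unavailable: Chipman/Quah give normality for concave \emph{supermodular} utility, whereas substitutes make $-V$ submodular, so the paper's result in that case is ``perhaps surprising'' and requires the explicit construction rather than a general comparative-statics theorem. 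Your sketch assembles the right ingredients ($g\geq 0$, the entrywise sign of $P$, inheritance of each condition along the path) but never combines them into a proof that the support only expands and the stage weights are non-negative; as written, the proposal restates the theorem's hardest claim rather than establishing it.
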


The full path of attention allocations $(\beta_1(t),\dots,\beta_K(t))$ (including the times $t_k$, the nested sets $B_k$, and the constant attention allocation at each stage $k$) can again be determined directly from the primitives $\Sigma$ and $\alpha$. In Appendix \ref{appx:algorithm}, we provide an algorithm for computing this path. Theorem \ref{thm:general} thus tells us that the agent can reduce the dynamic information acquisition problem to a sequence of $m \leq K$ static problems, each of which involves finding the optimal constant division of attention for a fixed period of time (from $t_{k-1}$ to $t_k$). Moreover, as in the $K=2$ case, the optimal information acquisition strategy does not depend on the agent's payoff function, and is history-independent. 

We note that each of Assumption \ref{assmp:Substitutes}, \ref{assmp:Complements}, or \ref{assmp:diagonal dominance} is ``absorbing" in the following sense: If the prior covariance matrix satisfies any of these conditions, then so does any posterior covariance matrix. Propositions \ref{prop:substitutes} and \ref{prop:complements} respectively show that this is true for Assumptions \ref{assmp:Substitutes} and \ref{assmp:Complements}, whereas diagonal dominance is absorbing because any information acquisition strategy only increases the diagonal entries of the precision matrix $\Sigma^{-1}$.\footnote{By directly computing $cov_1$ and $cov_2$ at posterior beliefs, it can be shown that our previous Assumption \ref{assmp:K=2} for the case of two attributes is also absorbing.} The absorbing property implies that our characterization not only applies to the prior belief, but also to \emph{any} posterior belief even if the history involves sub-optimal attention allocations. This feature enables us to study the effect of attention manipulation, an application that we pursue in Section \ref{sec:manipulation}.


Finally, we mention that starting from any prior belief (including those that fail all three of the sufficient conditions we have provided), so long as the agent does not stop learning about any attribute, his posterior belief must eventually satisfy Assumption \ref{assmp:diagonal dominance}.\footnote{This is because any attention devoted to attribute $i$ simply increases the $i$-th diagonal entry of the precision matrix $\Sigma^{-1}$. With sufficient attention devoted to each attribute, the posterior precision matrix must be diagonally-dominant.} Theorem \ref{thm:general} applies at these posterior beliefs, implying in particular that the agent's optimal attention allocation is eventually constant and proportional to the weight vector $\alpha$.



\section{Explanation of Results}\label{sec:proof sketch}

\subsection{Fixed Stopping Time $t$} \label{sec:prelim}

Consider the simpler problem in which the agent makes a decision at an exogenously fixed and known time $t$. 
Because normal signals can be completely Blackwell-ordered based on their precisions \citep{HansenTorgersen}, different mixtures over the sources can be compared based on how much they reduce the variance of the payoff-relevant state. 

Formally, the agent's past attention allocations integrate to a \emph{cumulative attention vector}
$
q(t)=(q_1(t) \dots, q_K(t))' \in \mathbb{R}_+^K
$
at time $t$, describing how much attention has been paid to each attribute thus far. The agent's posterior variance of $\omega$ is
\begin{equation}\label{eq:V(q)}
V(q) = \alpha'(\Sigma^{-1} + \diag(q(t)))^{-1}\alpha
\end{equation}
where $\Sigma$ is the prior covariance matrix over the attribute values, and $\diag(q(t))$ is the diagonal matrix with entries $q_1(t), \dots, q_K(t)$. This posterior variance depends solely on the payoff weights $\alpha$, the prior covariance matrix $\Sigma$, and the cumulative acquisitions $q(t)$. It does not depend on the realizations of the diffusion processes or the order of information acquisitions. So the problem of optimizing for a fixed end date $t$ reduces to a static problem of optimally allocating $t$ units of attention. 


 
Define the \emph{$t$-optimal} attention vector
\[
n(t) = \argmin_{q_1, \dots, q_K \geq 0, ~ \sum_i q_i = t} V(q_1, \dots, q_K).
\]
to be the allocation of $t$ units of attention that minimizes posterior variance of $\omega$, which is unique by Lemma \ref{lemma:t-opt unique} in the appendix. Every information acquisition strategy that cumulates to the attention vector $n(t)$ at time $t$ is optimal for any decision problem at that time.

\subsection{Uniformly Optimal Strategy: Definition} 


In general, the family of solutions $\{n(t)\}_{t \geq 0}$ corresponding to optimal allocation of $t$ units of attention does not determine the solution to the dynamic problem. To see this, suppose $n(1)=(0,1)$, so that the optimal way to allocate attention for a decision at time $t=1$ is to allocate it all to the second attribute, but $n(2)=(2,0)$, so that the optimal way to allocate attention for a decision at time $t=2$ is to allocate it all to the first attribute. Clearly these $t$-optimal attention vectors are incompatible under a single dynamic strategy: Optimal attention allocation for a decision at time $t=1$ precludes achieving the optimal attention allocation for time $t=2$. So when the agent faces the possibility of stopping at either time $t = 1$ or $t = 2$ (depending on the realizations of the diffusion processes), he must trade off between achieving more precise information about the payoff-relevant state at either time. The optimal resolution of this trade-off depends on the specific decision problem.

If, however, it were possible to continuously string together the $t$-optimal attention vectors $n(t)$ along the path of one information acquisition strategy, then such intertemporal trade-offs would not arise, and we might further conjecture such a strategy to be optimal for all stopping problems. This turns out to be true.

\begin{definition}
Say that an information acquisition strategy $S$ is \emph{uniformly optimal} if it is deterministic (i.e., history-independent) and its induced cumulative attention vector at each time $t$ is the $t$-optimal vector $n(t)$.
\end{definition}

A uniformly optimal strategy, by definition, minimizes posterior variance at every instant $t$. A result of \citet{Greenshtein} implies that such a strategy is best among all history-independent information acquisition strategies. In Lemma \ref{lemma:Greenshtein}, we extend \citet{Greenshtein}'s result to show that a uniformly optimal strategy is best among \emph{all} strategies, including those that condition future attention allocations on past signal realizations. In brief, we first observe that compared to any alternative information acquisition strategy, the uniformly optimal strategy achieves the same precision of beliefs about the state at earlier times. We then use this observation to show that any decision rule (i.e., any stopping time and final action) achievable under the alternative strategy can be replicated under the uniformly optimal strategy in a way that makes the agent stop earlier, but maintains his belief at the time of stopping. This ``replicating" decision rule, together with the uniformly optimal attention strategy, thus yields a higher expected payoff. We note that this proof relies crucially on the normal environment, which allows us to capture the agent's uncertainty through the single statistic of posterior variance. 

Thus,  whenever a uniformly optimal strategy exists, it must be the optimal strategy in our problem.\footnote{Our argument based on Blackwell comparisons gets to the optimal policy (i.e., attention allocation) without going through the HJB equation and value function, which may be difficult to solve for explicitly.} It remains to show that under the assumptions we provided previously, a uniformly optimal strategy does exist, and has the structure described in Theorems \ref{thm:K=2} and \ref{thm:general}.



\subsection{Uniformly Optimal Strategy: Existence}

\begin{lemma}[Monotonicity] \label{lemma:monotonicity}
A uniformly optimal strategy exists if and only if the $t$-optimal attention vector $n(t)$ weakly increases (in each coordinate) over time.
\end{lemma}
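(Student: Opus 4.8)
The plan is to reduce both directions of the equivalence to the elementary relationship between the cumulative attention vector $q(t)=\int_0^t \beta(s)\, \d s$ of a deterministic strategy and its instantaneous allocation $\beta$. Two properties of $n(t)$ are used throughout, both immediate from the definition of the $t$-optimal vector: first, $n_i(t)\geq 0$ for each $i$ and $\sum_i n_i(t)=t$ (the budget constraint is imposed with equality), so in particular $n(0)=0$; second, $n(t)$ is well-defined and unique by Lemma \ref{lemma:t-opt unique}.

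For the ``only if'' direction I would argue directly. Suppose a uniformly optimal strategy exists. By definition it is deterministic, so its cumulative attention vector is $q(t)=\int_0^t \beta(s)\, \d s = n(t)$; since each $\beta_i(s)\geq 0$, the integral $q_i(t)=n_i(t)$ is non-decreasing in $t$. This is precisely the asserted coordinatewise monotonicity.

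The ``if'' direction is where the content lies: assuming $n(t)$ is weakly increasing in each coordinate, I would construct a uniformly optimal strategy by differentiating $n$. The one regularity concern—and the step I expect to be the main obstacle—is that a merely monotone function can carry a singular part (as in a Cantor function), in which case it would fail to be recovered by integrating its derivative. The key observation that dissolves this obstacle is that the budget identity upgrades monotonicity all the way to Lipschitz continuity. Indeed, for $t<t'$ the identity $\sum_j\bigl(n_j(t')-n_j(t)\bigr)=t'-t$ has only non-negative summands (by the assumed monotonicity), so each individual difference obeys $0\leq n_i(t')-n_i(t)\leq t'-t$; hence every coordinate $n_i$ is $1$-Lipschitz and therefore absolutely continuous.

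Given absolute continuity, I would set $\beta_i(t):=n_i'(t)$ at the almost-every points where the derivative exists, extending $\beta$ by $0$ on the remaining null set; this $\beta$ is measurable and depends only on $t$, hence is a deterministic strategy. Monotonicity gives $\beta_i(t)\geq 0$ a.e., while differentiating $\sum_i n_i(t)=t$ yields $\sum_i \beta_i(t)=1$ a.e., so $\beta$ is feasible. Finally, absolute continuity gives $\int_0^t \beta(s)\, \d s = n(t)-n(0)=n(t)$, so the induced cumulative attention vector equals $n(t)$ at every time, and by definition the strategy is uniformly optimal. Apart from the passage from monotonicity to the Lipschitz/absolute-continuity property—which the budget constraint settles cleanly—every step is routine, so I would keep the write-up short and emphasize that single regularity point.
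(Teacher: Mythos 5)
Your proof is correct and follows essentially the same route as the paper, which does not give a standalone proof of this lemma but simply observes (in Section \ref{sec:proof sketch}) that when $n(t)$ is monotone the uniformly optimal strategy is obtained by setting $\beta(t)$ equal to the time-derivative of $n(t)$, the converse direction being immediate from $\beta_i(s)\geq 0$. Your one genuine addition---using the budget identity $\sum_j\bigl(n_j(t')-n_j(t)\bigr)=t'-t$ to upgrade coordinatewise monotonicity to $1$-Lipschitz continuity, hence absolute continuity---cleanly disposes of the regularity issue that the paper never needs to confront because in Theorems \ref{thm:K=2} and \ref{thm:general} the vector $n(t)$ is shown to be piecewise linear.
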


In words, a uniformly optimal strategy exists if and only if for every $t' > t$, the optimal allocation of $t'$ units of attention devotes a (weakly) higher amount of attention to \emph{each} source compared to the optimal allocation of $t$ units. Thus, monotonicity of $n(t)$ and existence of a uniformly optimal strategy are equivalent.

Whether $n(t)$ is monotone turns out to be closely related to a classic problem in consumer theory: Suppose a consumer has a utility function $U(q_1, \dots, q_K)$ over consumption of $q_k$ units of each of $K$ goods, and let $D(\mathbf{p}, w)$ denote his Marshallian demand subject to the budget constraint $\mathbf{p} \cdot \mathbf{q} \leq w$. Then, the consumer's demand is \emph{normal} if each coordinate of $D(\mathbf{p},w)$ increases with income $w$. In our setting, we can set $U = -V(q_1, \dots, q_K)$ to be the negative of the posterior variance, so that minimizing $V$ is the same as maximizing $U$. Our $t$-optimal attention vector $n(t)$ is then precisely the Marshallian demand, when prices are identically equal to $1$ and income is equal to $t$. Thus, normality of the consumer's demand under utility function $U=-V$ is equivalent to existence of a uniformly optimal strategy.

A necessary and sufficient condition on $U$ for normality of demand is given in \citet{AlarieBronsard} and \citet{BilanciniBoncinelli}. When there are just two goods, demand is normal if and only if
$\partial_{ij} U \cdot \partial_{j} U \geq \partial_{i} U \cdot \partial_{jj} U$ for $i, j \in \{1, 2\}$. In our setting, this condition can be simplified to our Assumption \ref{assmp:K=2}, which as discussed is necessary and sufficient for a uniformly optimal strategy to exist when $K = 2$ (although our proof of Theorem \ref{thm:K=2} is more direct, as it characterizes $n(t)$ explicitly).

With many goods, this necessary and sufficient condition on $U$ is  too complex to reduce to a statement on the primitives in our problem ($\Sigma$ and $\alpha$). However, a useful sufficient condition for normality of demand has been given by \citet{Chipman} and more recently generalized by \citet{Quah}: If $U$ is concave and super-modular (also called ``ALEP-complementary" to distinguish from Marshallian and Hicksian complementarity), then the consumer's demand is normal. In our environment, $U = -V$ is concave (see Lemma \ref{lemma:V partials} in the appendix and the subsequent corollary). Super-modularity of $U$ requires the cross-partials of $V$ to be negative at every posterior belief, namely that the different sources are \emph{complements}. Our Proposition \ref{prop:complements} in Online Appendix \ref{appx:substitutes/complements} shows that it is in fact possible to reduce supermodularity of $-V$ to a condition on the prior belief only. This is our first sufficient condition in Assumption \ref{assmp:Complements}, ``perpetual complementarity."

Generalizing beyond this condition, we use the special form of $V$ to develop alternative sufficient conditions. Perhaps surprisingly, we show that if the sources are substitutes at every posterior belief (i.e., $-V$ is sub-modular), then $n(t)$ is also monotone. This ``perpetual substitution" property, too, can be stated as a simple condition on the prior covariance matrix (our Assumption \ref{assmp:Substitutes}). Additionally, if correlation is not too strong---as implied by Assumption \ref{assmp:diagonal dominance}, which bounds the size of the covariances between the attributes relative to the size of their variances---then again we obtain monotonicity. In our proof of Theorem \ref{thm:general}, we show that these three different economic conditions are each sufficient to imply the following technical property: At every moment of time, those attributes that covary most strongly with the payoff-relevant state $\omega$ all have \emph{positive} covariance with $\omega$ (Lemma \ref{lemma:gamma positive}). As demonstrated in Lemma \ref{lemma:final stage}, this key technical property delivers the monotonicity of $n(t)$. Further exploration of our sufficient conditions, and whether they imply results for consumer demand theory beyond our specific setting, is an interesting topic for subsequent work.  

\subsection{Uniformly Optimal Strategy: Structure}

When $n(t)$ is indeed monotone in $t$, the uniformly optimal strategy that achieves these vectors is simply the one that sets each attention allocation $\beta(t)$ to be the time-derivative of $n(t)$. Under this strategy, the agent divides attention at every moment across those attributes that maximize the \emph{instantaneous} marginal reduction of posterior variance $V$. While the set of attributes is pinned down by first-order conditions, the precise mixture over those attributes is pinned down by second-order conditions, which ensure that this set of attributes continue to have equal and marginal values at future instants. Specifically, for each set of attributes, there is a unique linear combination that corresponds to the ``learnable component" of $\omega$ given those attributes (formally, it is a projection of $\omega$). It turns out that at each stage, it is optimal to acquire information in a mixture proportional to the weights of this linear combination, thus producing an unbiased signal about the learnable component of $\omega$. In the final stage, when the agent pays attention to every attribute---so that the learnable component of $\omega$ is $\omega$ itself---the optimal mixture is proportional to the payoff weights $\alpha$. 

As beliefs about a set of attributes become more precise, their shared marginal value decreases continuously relative to the marginal value of learning about other attributes. Eventually the marginal value of learning about some other attribute ``catches up" and joins the set of maximizers. This yields the nested-set property in Theorems \ref{thm:K=2} and \ref{thm:general}.

\section{Applications}

The characterizations in Theorems \ref{thm:K=2} and \ref{thm:general} suggest that the study of dynamic information acquisition in our setting is quite tractable. We now apply this characterization to derive new results in a diverse set of applications.

In Section \ref{sec:binary choice}, we consider optimal attention allocation for choice between two goods and generalize recent results from \citet{FudenbergStrackStrzalecki}. In Section \ref{sec:media} we develop a game between biased news sources, and characterize the degree of polarization and the quality of information in equilibrium. In Section \ref{sec:manipulation}, we consider the dynamic implications of a one-time attention manipulation, complementing a recent exercise in \cite{GossnerSteinerStewart}. The applications in Sections \ref{sec:binary choice} and \ref{sec:manipulation} show that we can use our main results to tractably introduce correlation in settings that have been previously studied under strong assumptions of independence. Our applications in Sections \ref{sec:binary choice} and \ref{sec:media} show how our main results can be used as an intermediate step to derive results about other economic behaviors.  

\subsection{Application 1: Binary Choice} \label{sec:binary choice}

Building on a large literature regarding ``binary choice" problems, \citet{FudenbergStrackStrzalecki} (henceforth FSS) recently proposed the \emph{uncertain drift-diffusion} model: 
An agent has a choice between two goods with unknown payoffs $v_1$ and $v_2$, and can learn about those payoffs before making a choice. The two payoffs are jointly normal and i.i.d.:
\begin{equation} \label{exp:prior}
(v_1,v_2)' \sim \mathcal{N}\left(\mu, \left(\begin{array}{cc}
\sigma^2 & 0 \\
0 & \sigma^2 \end{array}\right)\right)
\end{equation}
The agent observes two Brownian processes whose drifts are the unknown payoffs. He then chooses a stopping time $\tau$ to maximize the objective
\[
\mathbb{E}[\mathbb{E}[\max\{v_1, v_2\} \mid \mathcal{F}_{\tau} ] - c\tau ],
\]
where  $c\tau$ is a linear waiting cost. 


FSS's main economic insight is that earlier decision times are associated with more accurate decisions. Formally, let $p(t)$ be the probability of choosing the higher-value good conditional on stopping at time $t$. FSS's Proposition 3 shows that $p(t)$ is monotonically (weakly) decreasing over time. This comparative static is not obvious because two forces push in opposite directions: On the one hand, the agent has more information at later times, suggesting that later decisions may be more accurate. On the other hand, because the stopping time is endogenously chosen, the agent is more likely to stop earlier when the decision is easy (i.e., when one good's value is much higher than the other). FSS's result implies that this second force dominates.

FSS show moreover that this result is robust to endogenous attention allocation under a budget constraint. Specifically, suppose that at each moment of time, the agent has one unit of attention to allocate across learning either $v_1$ or $v_2$. Then, FSS's Theorem 5 shows that the agent optimally divides attention equally between learning about the two payoffs at every moment of time, similar to the exogenous process specified in their main model.

FSS's model of endogenous attention and binary choice is nested within our framework. To map this setting back into our main model, define $\theta_1 = v_1$ and $\theta_2 = -v_2$. Then, since the payoff difference $v_1 - v_2$ is a sufficient statistic for the agent's decision, this problem corresponds to payoff-relevant state $\omega = v_1 - v_2 = \theta_1 + \theta_2$ in our framework. 

We now show that we can use our results to go beyond the case of independent and identically distributed payoffs (as imposed in the prior in (\ref{exp:prior})). Different from FSS, suppose that the agent's prior over $(\theta_1, \theta_2)$ is normal with an arbitrary covariance matrix
$\Sigma := \left(\begin{array}{cc}
\Sigma_{11} & \Sigma_{12} \\
\Sigma_{21} & \Sigma_{22} \end{array}\right)$. Since the payoff weights are $\alpha_1 = \alpha_2 = 1$, our Theorem \ref{thm:K=2} applies and characterizes the agent's optimal attention allocations over time given any $\Sigma$. The following corollary is an immediate generalization of FSS's Theorem 5:

\begin{corollary}\label{corr:binary choice attention}
Suppose $\Sigma_{11} \geq \Sigma_{22}$. The agent's optimal information acquisition strategy $(\beta_1(t),\beta_2(t))_{t \geq 0}$ in this binary choice problem consists of two stages: 
\begin{itemize}[noitemsep]
\item \textbf{Stage 1:} At all times 
\[
t< t_1^* = \frac{\Sigma_{11} - \Sigma_{22}}{\det(\Sigma)},
\]
the agent optimally allocates all attention to $\theta_1$.
\item \textbf{Stage 2:} At times $t \geq t_1^*$, the agent optimally allocates equal attention to $\theta_1$ and $\theta_2$.
\end{itemize}
\end{corollary}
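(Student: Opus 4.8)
The plan is to obtain the corollary as a direct specialization of Theorem \ref{thm:K=2} under the change of variables already set up in the text, namely $\theta_1 = v_1$, $\theta_2 = -v_2$, payoff weights $\alpha_1 = \alpha_2 = 1$, and payoff-relevant state $\omega = v_1 - v_2 = \theta_1 + \theta_2$. First I would check that the hypothesis of Theorem \ref{thm:K=2}, namely Assumption \ref{assmp:K=2}, is satisfied here. With equal weights, Assumption \ref{assmp:K=2} reads $\Sigma_{11} + 2\Sigma_{12} + \Sigma_{22} \geq 0$, and this holds for every valid (positive semidefinite) covariance matrix because $2|\Sigma_{12}| \leq 2\sqrt{\Sigma_{11}\Sigma_{22}} \leq \Sigma_{11} + \Sigma_{22}$, as noted following Assumption \ref{assmp:K=2}. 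Hence Theorem \ref{thm:K=2} applies unconditionally in this binary-choice setting, and in particular the optimal strategy is history-independent and takes the two-stage form described there.

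Second, I would compute the prior covariances $cov_i = \mbox{Cov}(\omega, \theta_i)$ of each attribute with the state. With $\omega = \theta_1 + \theta_2$, these are $cov_1 = \Sigma_{11} + \Sigma_{12}$ and $cov_2 = \Sigma_{12} + \Sigma_{22}$, so that $cov_1 - cov_2 = \Sigma_{11} - \Sigma_{22}$. Under the maintained hypothesis $\Sigma_{11} \geq \Sigma_{22}$, this gives $cov_1 \geq cov_2$, which identifies $i = 1$ and $j = 2$ in the ``w.l.o.g.'' labeling of Theorem \ref{thm:K=2}.

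Third, I would substitute these quantities into the threshold formula $t_i^* = (cov_i - cov_j)/(\alpha_j \det(\Sigma))$. Since $\alpha_j = \alpha_2 = 1$, this yields $t_1^* = (\Sigma_{11} - \Sigma_{22})/\det(\Sigma)$, matching the stated formula. The Stage 1 prescription of Theorem \ref{thm:K=2} then says the agent devotes all attention to attribute $i = 1$ (i.e., $\theta_1$) before $t_1^*$, and the Stage 2 prescription gives the constant proportion $\left(\frac{\alpha_1}{\alpha_1+\alpha_2}, \frac{\alpha_2}{\alpha_1+\alpha_2}\right) = \left(\tfrac{1}{2}, \tfrac{1}{2}\right)$, i.e., equal attention to $\theta_1$ and $\theta_2$ after $t_1^*$. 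These are exactly the two stages in the corollary.

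Since every step is a direct substitution, I do not anticipate any genuine obstacle. The only items requiring care are the sign bookkeeping from the change of variables $\theta_2 = -v_2$ (which converts the binary-choice payoff difference $v_1 - v_2$ into the additive state $\theta_1 + \theta_2$ so that $\alpha_1 = \alpha_2 = 1$), and the observation that Assumption \ref{assmp:K=2} is automatic under equal weights. Both are already established in the model section, so the corollary follows immediately from Theorem \ref{thm:K=2}.
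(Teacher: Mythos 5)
Your proposal is correct and matches the paper's own route: the paper presents this corollary as an immediate specialization of Theorem \ref{thm:K=2}, noting exactly as you do that with $\alpha_1=\alpha_2=1$ Assumption \ref{assmp:K=2} holds automatically, that $cov_1-cov_2=\Sigma_{11}-\Sigma_{22}$, and that substitution gives $t_1^*$ and the $(1/2,1/2)$ long-run mixture. The only additional step the paper records (in the online appendix) is the verification via optional sampling that $v_1-v_2$ is indeed a sufficient statistic for the stopping problem, which you correctly take as already established in the text.
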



Thus, when the agent is initially more uncertain about one of the two payoffs, he spends a period of time exclusively learning about that payoff. Starting at time $t_1^*$, the agent divides attention equally across learning about the two goods, as in FSS's i.i.d.\ case. From the closed-form expression for $t_1^*$, it is straightforward to show that the length of the first stage is increasing in the asymmetry of initial uncertainty, and also increasing in the degree of correlation between the two payoffs.

We now use this characterization of optimal attention allocation to further generalize FSS's main economic insight regarding the relationship between choice speed and accuracy. 

\begin{proposition}\label{prop:binary choice FSS}
For any $\Sigma$, $p(t)$ is weakly decreasing in $t$. Thus, choice accuracy is weakly higher at earlier stopping times.
\end{proposition}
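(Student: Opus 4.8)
The plan is to exploit the history-independence established in Corollary \ref{corr:binary choice attention} to collapse the dynamic problem to the optimal stopping of a single scalar diffusion, and then to identify $p(t)$ with the agent's confidence at the stopping boundary. First I would record the consequences of Corollary \ref{corr:binary choice attention}: since the optimal attention path $(\beta_1(t),\beta_2(t))$ is deterministic (in both stages the agent devotes full, informative attention to the sources), the posterior variance $V(t)$ of $\omega = v_1 - v_2$ is a deterministic, continuous, strictly decreasing function of $t$, with $V(0) = \Sigma_\omega := \mathrm{Var}(\omega)$ and $V(t) \to 0$. Writing $Z_t := \mathbb{E}[\omega \mid \mathcal{F}_t]$ for the posterior mean, the law of total variance (with $V(t)$ deterministic) gives $\mathrm{Var}(Z_t) = \Sigma_\omega - V(t) =: s(t)$, a deterministic increasing function. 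In the Gaussian filtering model $Z_t$ is a Gaussian martingale with independent increments, hence a time-changed Brownian motion $Z_t = \mu_\omega + B_{s(t)}$. The key structural fact is that the posterior law of $\omega$ given $\mathcal{F}_t$ is $\mathcal{N}(Z_t, V(t))$: the scalar $Z_t$ is a sufficient statistic, and both the stopping payoff and the waiting cost depend on the history only through $(t, Z_t)$.

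Next I would reduce the stopping problem. The agent chooses good $1$ iff $Z_\tau > 0$, and the stopping payoff $\mathbb{E}[\max\{v_1,v_2\}\mid\mathcal{F}_t]$ equals a martingale term plus $\tfrac12\,\mathbb{E}[|\omega|\mid\mathcal{F}_t]$, which is an increasing function of $|Z_t|$ (the folded-normal mean at variance $V(t)$). Hence this is an optimal stopping problem for the scalar process $Z_t$. I would argue that the continuation region is an interval in $Z_t$-space: the benefit of continued learning is largest when the decision is close ($|Z_t|$ small) and vanishes once the agent is confident ($|Z_t|$ large), while the flow waiting cost is nonnegative (Assumption \ref{assmp:waiting is costly}); so the agent stops once $|Z_t|$ exceeds a boundary. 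When the prior is symmetric ($\mu_\omega = 0$) the boundary is $\pm b(t)$ by the reflection symmetry $(Z,\omega)\mapsto(-Z,-\omega)$.

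The payoff of the reduction is a clean formula for accuracy. By sufficiency, the probability of a correct choice given $\mathcal{F}_t$ is $\Phi\!\left(|Z_t|/\sqrt{V(t)}\right)$. On the event $\{\tau = t\}$ the continuity of $Z$ forces $|Z_t| = b(t)$, so that $p(t) = \Phi\!\left(b(t)/\sqrt{V(t)}\right)$. The proposition therefore reduces to the monotonicity claim that the confidence at the boundary, $b(t)/\sqrt{V(t)}$, is weakly decreasing in $t$.

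This last step is where I expect the real work to be. Intuitively, reparametrizing by the information clock $s$ turns the learning dynamics into a canonical one --- a standard Brownian motion with posterior variance $\Sigma_\omega - s$ --- so that the only imprint of the general covariance $\Sigma$ is through a convex, increasing cost-of-information $C(s) = c\,t(s)$, where $t(\cdot)$ is the inverse of $s(\cdot)$ (convex because $V$ is convex, hence $s(t)$ concave and $t(s)$ convex). Since the instantaneous rate of variance reduction $-\dot V(t)$ declines over time, the option value of further learning falls while cost keeps accruing, which should push the stopping confidence monotonically downward. Making this rigorous requires analysis of the free boundary (e.g.\ via smooth pasting) or an adaptation of FSS's comparison argument for their Proposition 3; the essential content is exactly the monotonicity of $b(t)/\sqrt{V(t)}$, and this is the main obstacle. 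A secondary obstacle is the asymmetric case $\mu_\omega \neq 0$: there are then two boundaries $\underline{b}(t) < 0 < \overline{b}(t)$, and $p(t)$ becomes a mixture of $\Phi\!\left(\overline{b}(t)/\sqrt{V(t)}\right)$ and $\Phi\!\left(|\underline{b}(t)|/\sqrt{V(t)}\right)$ weighted by the conditional probabilities of hitting each boundary, so one must control both the boundary confidences and the mixing weights.
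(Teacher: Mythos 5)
Your setup is correct and matches the paper's: the reduction via Corollary \ref{corr:binary choice attention} to a deterministic variance path, the representation of the posterior mean as a time-changed Brownian motion, the symmetric boundary $k^*$, and the identity $p(t)=\Phi\bigl(k^*(c,\Sigma_t)/\sigma_t\bigr)$ are exactly the paper's Lemmas \ref{lemma:binary choice variances}--\ref{lemma:binary choice accuracy boundary}. But you then explicitly stop at the statement ``the essential content is exactly the monotonicity of $b(t)/\sqrt{V(t)}$, and this is the main obstacle,'' and the argument you offer for it --- that reparametrizing by the information clock leaves only a convex cost $C(s)=c\,t(s)$, so the option value falls while cost accrues --- is an intuition, not a proof. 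It is not established that convexity of $t(s)$ alone implies the normalized boundary is decreasing, and you do not verify the claimed convexity/concavity across the Stage 1/Stage 2 junction. This is precisely where the paper does its real work: it proves a scaling identity $k^*(c,\lambda^2\Sigma)=\lambda\,k^*(c\lambda^{-3},\Sigma)$ (Lemma \ref{lemma:binary choice uncertainty change}), and a comparison lemma (Lemma \ref{lemma:binary choice general change}) showing that if one prior leads to uniformly lower posterior variance at all future times then it has a weakly higher boundary --- proved by explicitly computing the hitting-time derivative $T'(v)$ in both stages and checking two conditions on the asymmetry $\Sigma_{11}-\Sigma_{22}$ and on the variance at the stage-switch point. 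The main proof then splits into three cases according to whether $t$ and $t'$ fall before or after $t_1^*$ and chains $\lambda k^*(c,\Sigma_t)\geq k^*(c,\lambda^2\Sigma_t)\geq k^*(c,\Sigma_{t'})$. None of this is present in your proposal, so the central claim remains unproved.

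A secondary remark: your worry about the case $\mu_\omega\neq 0$ producing two asymmetric boundaries and a mixture formula for $p(t)$ is unfounded. Since $\max\{Y_\tau,0\}=\tfrac12 Y_\tau+\tfrac12\vert Y_\tau\vert$ and $Y$ is a uniformly integrable martingale, the stopping problem is equivalent to maximizing $\mathbb{E}[\tfrac12\vert Y_\tau\vert-c\tau]$, whose payoff is symmetric in $Y$; the continuation region is therefore the symmetric interval $\{\vert y\vert<k^*(c,\Sigma_t)\}$ regardless of the prior mean, which only shifts the starting point of the process, not the boundary. So the single-boundary formula for $p(t)$ is valid in general, and the only genuine gap is the monotonicity of $k^*(c,\Sigma_t)/\sigma_t$.
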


The logic of this result is as follows (see Online Appendix \ref{appx:binary choice} for detailed analysis). First, Corollary \ref{corr:binary choice attention} implies that we can separate the problem of optimal stopping from the problem of optimal information acquisition. That is, we can take information as exogenously given by the process described in Corollary \ref{corr:binary choice attention}, and characterize properties of optimal stopping within this problem. In particular, Corollary \ref{corr:binary choice attention} pins down the evolution of the agent's posterior covariance matrix $\Sigma_t$, which will be important for the subsequent arguments.

While $\Sigma_t$ is a deterministic process, the agent's posterior expectation for the payoff difference $\theta_1+\theta_2$ evolves according to a random process $Y_t$. As in FSS, the symmetric stopping boundary at time $t$ is given by a function $
k^*(\Sigma_t)$ of the agent's posterior covariance matrix $\Sigma_t$; that is, the agent optimally stops at time $t$ if and only if $\vert Y_t \vert \geq k^*(\Sigma_t)$. Given these stopping boundaries, the choice accuracy $p(t)$ conditional on stopping at time $t$ has the following simple form: 
\begin{equation} \label{eq:pt}
p(t) = \Phi\left(\frac{k^*(\Sigma_t)}{\sigma_t}\right),
\end{equation}
where $\sigma_t^2$ is the agent's posterior variance of $\theta_1 +\theta_2$ at time $t$, and $\Phi$ is the normal c.d.f.\ function. 

So it remains to understand how (\ref{eq:pt}) evolves. There are two forces, which turn out to go in the same direction. First, uncertainty about the payoff difference $\theta_1 + \theta_2$ decreases over time. As FSS already showed in the i.i.d.\ case, this effect weakly decreases the ratio $k^*(\Sigma_t)/\sigma_t$. Roughly speaking, stopping at an earlier time when there is more residual uncertainty requires the agent to have received disproportionately stronger signals to forgo the option value. In our Lemma \ref{lemma:binary choice uncertainty change}, we generalize this insight to arbitrary prior beliefs.


Second, our characterization in Corollary \ref{corr:binary choice attention} reveals that the optimal attention strategy continuously reduces the asymmetry in uncertainty between the two attributes. We show in Lemma \ref{lemma:binary choice asymmetry change} that holding fixed uncertainty about the sum $\theta_1 + \theta_2$, asymmetry in uncertainty about the two attributes $\theta_1$ and $\theta_2$ allows the agent to learn faster; that is, the agent has lower posterior variance of $\theta_1 + \theta_2$ at every future time compared to a symmetric prior.\footnote{A simple informal argument is as follows. Given any prior, we can upper-bound the posterior variance under the optimal strategy by the posterior variance under a strategy that devotes equal attention to $\theta_1$ and $\theta_2$ at all times, which is equivalent to receiving the signals $\theta_1 + \mathcal{N}(0,2)$ and $\theta_2 + \mathcal{N}(0,2)$ over every unit of time. The information that those signals provide about the sum $\theta_1 + \theta_2$ is \emph{at least} as informative as the single signal $\theta_1 + \theta_2 + \mathcal{N}(0,4)$, which is the sum of the previous two signals. Thus the agent learns about $\theta_1 + \theta_2$ at least as fast as if he received the signal $\theta_1 + \theta_2 + \mathcal{N}(0,4)$ over every unit of time. When the prior is symmetric, then the equal-attention strategy considered above is optimal, and this lower bound on the speed of learning is tight. But when the prior is asymmetric, then the agent can improve upon this bound. This suggests that the agent can learn more quickly under an asymmetric prior than a symmetric one, holding fixed his prior uncertainty about $\theta_1 + \theta_2$.} So asymmetric uncertainty increases the option value to waiting, and thus also the stopping boundary relative to the symmetric baseline. This effect, too, causes the ratio $k^*(\Sigma_t)/\sigma_t$ to decrease over time. Combining both effects yields the result that $p(t)$ decreases over time.

In fact, using similar arguments, we can further generalize Proposition \ref{prop:binary choice FSS} to asymmetric learning speeds about the two unknown payoffs. See Online Appendix \ref{appx:binary choice informativeness}.  



\subsection{Application 2: Biased News Sources} \label{sec:media}

Next, we apply our results to study information provision in a setting with strategic information providers. Specifically, we are interested in how incentives for news bias interact with incentives for high-quality news provision. 

To study this, we consider a stylized game in which  a representative news reader seeks to learn a payoff-relevant unknown $\omega \sim \mathcal{N}(\mu_\omega, \sigma_\omega^2)$, for example the expected fiscal cost of a certain policy proposal. Two sources $i\in \{1,2\}$ (a left-leaning and right-leaning news source) each report on $\omega$, but bias their reporting in the direction helpful to their political party. We consider issues where the partisan implications are not precisely known by the general public (although they are understood by the sources)---for example, new limits on short selling in financial markets or trade deals with countries in Southeast Asia. We define $b$ to be the \emph{benefit} to source 1's party when the reader believes that $\omega$ is large (i.e., the cost of the policy is high). This benefit $b$ is a random variable from the perspective of the reader, and we assume it has distribution $b \sim \mathcal{N}(\mu_b, \sigma_b^2)$. The larger $\sigma_b$ is, the less well understood the partisan implications of the issue are.

Source 1 biases its reporting of $\omega$ in the direction of $b$, and source 2 biases away from it, but these sources can choose the intensity of their biases $\phi_i > 0$. The sources also choose the quality of their reporting, parametrized by $\zeta_i > 0$, where a larger $\zeta_i$ represents greater noise in the reporting. Specifically, we assume that a unit of time spent on source $1$ is informationally equivalent to a realization of 
\[Z_1 \sim \mathcal{N}(\omega + \phi_1 b, \zeta_1^2),\]
while a unit of time spent on source 2 is informationally equivalent to a realization of 
\[Z_2 \sim \mathcal{N}(\omega - \phi_2 b, \zeta_2^2).\]
Both choices $\phi_i$ and $\zeta_i$ are fixed across time. For example, a source may develop a reputation for providing very biased information but having high-quality reporting.

We suppose that the reporters and editors at these news sources are subject to soft pressure to bias their reporting, and model the strength of those pressures by a payoff of
$-\lambda (\phi_i - \kappa)^2$
for choice of bias intensity $\phi_i$. The parameters $\lambda \in \mathbb{R}_+$ and $\kappa \in \mathbb{R}_+$ respectively determine the strength of incentives for bias, and the bliss point for bias intensity. We consider an interior bliss point $\kappa$ to be realistic, since very biased sources lose credibility.

Sources are not explicitly incentivized to provide informative news, but will do so in equilibrium to attract attention from the news reader whose attention allocations we now describe. For any given choices of precisions and bias intensities, the reader's optimal attention allocations $(\beta_1(t), \beta_2(t))_{t \geq 0}$ can be derived from the characterization in Theorem \ref{thm:K=2}. To apply this theorem, we can transform the current setting to our main model: Define $\theta_1  = \frac{1}{\zeta_1}(\omega + \phi_1 b)$ and $\theta_2 = \frac{1}{\zeta_2}(\omega - \phi_2 b)$, so that a unit of time spent on each source $i$ produces an equally informative (standard normal) signal about $\theta_i$. The payoff-relevant state can be rewritten as $\omega = \alpha_1 \theta_1 + \alpha_2 \theta_2$ with payoff weights $\alpha_1  = \zeta_1 \cdot \frac{\phi_2}{\phi_1 + \phi_2}$ and $\alpha_2 = \zeta_2 \cdot \frac{\phi_1}{\phi_1 + \phi_2}$. It can be checked that $\mbox{Cov}(\omega, \theta_i) = \sigma_{\omega}^2/\zeta_i > 0$, so our Assumption \ref{assmp:K=2} is satisfied and Theorem \ref{thm:K=2} holds. 

The optimal path of attention allocations determines the discounted average attention paid to source $i$, namely $\int_{0}^{\infty} r e^{-rt} \beta_i(t) ~dt$, where $r$ is a common discount rate for the sources.\footnote{Note that in this formulation, we implicitly assume that the reader never stops information acquisition, which simplifies our subsequent analysis. However, $\int_{0}^{\infty} r e^{-rt} \beta_i(t) ~dt$ can be interpreted as the limiting discounted average attention that source $i$ receives when the reader chooses an endogenous stopping time under vanishingly small information acquisition costs. Never stopping can also be justified in an extension of our main model where the agent faces multiple decisions across time (see Section \ref{sec:discussion}).} We use the discounted average attention as a reduced form for advertising revenue, where each news source receives a profit proportional to its viewership. Each source $i$'s total payoff is then the sum of discounted attention and the reward for bias intensity
\[
U_i = \int_{0}^{\infty} r e^{-rt} \beta_i(t) ~dt - \lambda (\phi_i - \kappa)^2.
\]
The following proposition derives the equilibrium in this game between the two sources. For technical reasons, we require an assumption that the incentives for bias are not too weak.

\begin{proposition}\label{prop:eqm} 
Suppose $\lambda \kappa^2 \geq 1.6$.\footnote{This assumption is imposed to guarantee the existence of a pure strategy equilibrium. Our analysis shows that $\lambda \kappa^2 \geq 1.6$ is sufficient for existence, whereas a weaker condition $\lambda \kappa^2 \geq \frac{9}{16}$ is necessary.} The unique pure strategy equilibrium is $(\phi_1^*,\zeta_1^*; \phi_2^*,\zeta_2^*)$ where 
\[
\phi_1^*=\phi_2^*= \frac12 \left(\kappa + \sqrt{\kappa^2 - \frac{1}{2\lambda}}\right)
\]
and 
\[
\zeta_1^*=\zeta_2^* = \frac{\sigma_b}{2\sqrt{r}} \cdot \left(\kappa + \sqrt{\kappa^2 - \frac{1}{2\lambda}}\right).
\]
Given these equilibrium choices, the reader optimally devotes equal attention to the two sources at every moment.
\end{proposition}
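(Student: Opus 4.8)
The plan is to use Theorem \ref{thm:K=2} to write each source's discounted attention as a closed-form function of the four choice variables, and then solve the induced two-player game. Under the stated transformation $\theta_1 = \frac1{\zeta_1}(\omega+\phi_1 b)$, $\theta_2 = \frac1{\zeta_2}(\omega-\phi_2 b)$ we have $cov_i = \sigma_\omega^2/\zeta_i > 0$, so Assumption \ref{assmp:K=2} holds and Theorem \ref{thm:K=2} applies. First I would compute the reader's primitives: a short calculation gives $\det(\Sigma) = \sigma_\omega^2\sigma_b^2(\phi_1+\phi_2)^2/(\zeta_1\zeta_2)^2$, the stage-2 share of source $1$ equals $s_1 = \zeta_1\phi_2/(\zeta_1\phi_2+\zeta_2\phi_1)$, and (when $\zeta_1\le\zeta_2$, so source $1$ has higher covariance) the stage-1 length is $t^* = (\zeta_2-\zeta_1)\zeta_1/[\phi_1(\phi_1+\phi_2)\sigma_b^2]$, with the roles swapped otherwise. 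Integrating $\int_0^\infty r e^{-rt}\beta_i(t)\,dt$ across the two stages then yields, for the source that attracts early attention, discounted attention $D_i = 1 - e^{-rt^*}(1-s_i)$, and for the other source $e^{-rt^*}(1-s_i)$; in particular $D_1+D_2=1$, so the sources compete over a fixed attention pie.

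Next I would locate the symmetric candidate through first-order conditions on $U_i = D_i - \lambda(\phi_i-\kappa)^2$. Evaluating $\nabla D_1$ at a symmetric profile $\phi_1=\phi_2=\phi$, $\zeta_1=\zeta_2=\zeta$ (where $t^*=0$), a key observation is that the one-sided $\zeta_1$-derivatives from the two regimes $\zeta_1\le\zeta_2$ and $\zeta_1>\zeta_2$ coincide, so $D_1$ is differentiable there and the first-order conditions are genuinely two-sided. The computation gives $\partial_{\zeta_1}D_1 = \tfrac1{4\zeta} - \tfrac{r\zeta}{4\phi^2\sigma_b^2}$ and $\partial_{\phi_1}D_1 = -\tfrac1{4\phi}$. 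Setting $\partial_{\zeta_1}U_1=0$ produces $\zeta = \phi\sigma_b/\sqrt r$, while $\partial_{\phi_1}U_1=0$ produces the quadratic $\phi^2 - \kappa\phi + \tfrac1{8\lambda}=0$, whose larger root is the stated $\phi^*$; the discriminant condition $\kappa^2-\tfrac1{2\lambda}\ge0$ (i.e.\ $\lambda\kappa^2\ge\tfrac12$) is what makes this real. Substituting back reproduces $\zeta^*$.

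The crux is verifying that this critical point is a genuine \emph{global} best response rather than merely a stationary point, and this is where I expect the main difficulty to lie. Fixing source $2$ at the candidate, I would analyze $U_1(\phi_1,\zeta_1)$ separately on each smooth region $\zeta_1\le\zeta_2$ and $\zeta_1>\zeta_2$, show via the second-order conditions that the interior critical point is the unique maximizer in each region, and then control the boundaries. The telling deviation is $\phi_1\to 0$: source $1$ becomes unbiased, captures essentially all attention ($D_1\to1$), and pays the maximal bias penalty, for payoff $1-\lambda\kappa^2$. Comparing this against the equilibrium payoff $\tfrac12-\lambda(\kappa-\phi^*)^2$ and simplifying with $\phi^{*2}=\kappa\phi^*-\tfrac1{8\lambda}$ shows this deviation is unprofitable precisely when $\lambda\kappa^2\ge\tfrac9{16}$, matching the stated necessary condition. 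Ruling out \emph{every} remaining interior and boundary deviation is the substantive work, and the stronger hypothesis $\lambda\kappa^2\ge1.6$ supplies the slack that makes the global second-order bookkeeping go through cleanly.

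Finally I would establish uniqueness and the attention conclusion. The smaller root of the quadratic fails the second-order condition and so is not an equilibrium; ruling out asymmetric equilibria amounts to showing the coupled best-response system has only the symmetric solution, which I would argue from the same concavity structure used in Step~3. This leaves the symmetric profile as the unique pure-strategy equilibrium. At that profile $cov_1=cov_2$ forces $t^*=0$ and $\alpha_1=\alpha_2$, so Stage~2 of Theorem \ref{thm:K=2} governs from time $0$ with equal weights, proving that the reader devotes attention $\tfrac12$ to each source at every instant.
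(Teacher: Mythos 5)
Your setup and candidate derivation are correct and track the paper's own route: the formulas for $\det(\Sigma)$, $t^*$, the long-run share, the fact that $D_1+D_2=1$, the two-sidedness of the $\zeta_1$-derivative at a symmetric profile, and the first-order conditions yielding $\zeta=\phi\sigma_b/\sqrt{r}$ and $\phi^2-\kappa\phi+\frac{1}{8\lambda}=0$ all check out. Your identification of the $\phi_1\to 0$ deviation as the source of the necessary condition $\lambda\kappa^2\geq\frac{9}{16}$ also matches the paper.

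The genuine gap is the global verification step, which you correctly flag as ``the crux'' but then dispose of with the assertion that $\lambda\kappa^2\geq 1.6$ ``supplies the slack that makes the global second-order bookkeeping go through cleanly.'' This is not a proof, and the proposed method would not work as described: showing that the critical point is the unique local maximizer on each smooth region, plus checking the $\phi_1\to 0$ boundary, does not rule out profitable \emph{joint} deviations in $(\phi_1,\zeta_1)$ away from the critical point. The paper exhibits exactly such a deviation: at $\kappa=1$, $\lambda=\frac{9}{16}$, $r=1$, no single-variable deviation is profitable, yet $(\phi_1,\zeta_1)=(\frac16,\frac13)$ strictly improves on the candidate payoff. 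The paper's actual argument is not second-order at all --- it bounds $e^{-rt_1^*}$ by $1-rt_1^*$ (and $e^{rt_1^*}$ by $1+rt_1^*$ for source 2), reduces $U_1$ to an explicit rational upper bound, optimizes that bound sequentially in $\zeta_1$ and then $\phi_1$, and closes the case analysis by proving concavity of a derivative function $g'$ on $[0,\phi^*]$ together with the endpoint comparison $g(0)\leq g(\phi^*)$; the threshold $1.6$ enters precisely there, via $\phi^*>\frac34\kappa$. None of this machinery appears in your proposal. Relatedly, your uniqueness argument for ruling out asymmetric equilibria (``the same concavity structure'') is also unsubstantiated; the paper instead derives a contradiction directly from the two first-order conditions at any profile with $\zeta_1<\zeta_2$. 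As written, the proposal establishes the candidate but not that it is an equilibrium, nor that it is the unique one.
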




The subsequent corollary regarding the informativeness of news in equilibrium follows immediately.\footnote{It can be computed that when the sources choose $\zeta_1 = \zeta_2 = \zeta^*$ and $\phi_1 = \phi_2$, the news reader's posterior variance of the payoff-relevant state $\omega$ at time $t$ is $\left(\frac{1}{\sigma_{\omega}^2} + \frac{t}{(\zeta^*)^2}\right)^{-1}$. This confirms why $\zeta^*$ is a sufficient statistic for equilibrium informativeness.} 

\begin{corollary}[Informativeness of News] \label{corr:eqm noise}
The equilibrium noise level, $\zeta^* = \frac{\sigma_b}{2\sqrt{r}} \left(\kappa + \sqrt{\kappa^2 - \frac{1}{2\lambda}}\right)$, 
\begin{itemize}[noitemsep]
    \item [(a)] is increasing in the incentive for bias, $\lambda$, and the bias intensity bliss point, $\kappa$;
    \item [(b)] is increasing in the prior uncertainty about partisan implications, $\sigma_b$;
    \item [(c)] is decreasing in the discount rate, $r$.
\end{itemize}
\end{corollary}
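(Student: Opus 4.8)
The corollary is a direct comparative-statics exercise on the closed-form expression for $\zeta^*$ supplied by Proposition \ref{prop:eqm}, so the plan is simply to differentiate with respect to each parameter and check signs. I would write $\zeta^* = \frac{\sigma_b}{2\sqrt{r}}\,f(\kappa,\lambda)$, where $f(\kappa,\lambda) = \kappa + \sqrt{\kappa^2 - \frac{1}{2\lambda}}$. The first thing to record is that under the maintained assumption $\lambda\kappa^2 \geq 1.6$ the discriminant satisfies $\kappa^2 - \frac{1}{2\lambda} \geq \frac{8}{5\lambda} - \frac{1}{2\lambda} = \frac{11}{10\lambda} > 0$, so the square root is real and strictly positive and hence $f > 0$. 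This well-definedness is the only place where the parameter restriction is invoked.

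Parts (b) and (c) then follow immediately from the factored form. Since $f(\kappa,\lambda) > 0$ and $r > 0$, the coefficient $\frac{f(\kappa,\lambda)}{2\sqrt{r}}$ is strictly positive, so $\zeta^*$ is linear and strictly increasing in $\sigma_b$, giving (b). Likewise $\zeta^*$ depends on $r$ only through the factor $r^{-1/2}$, which is strictly decreasing in $r$, while the remaining factor $\frac{\sigma_b}{2}f(\kappa,\lambda)$ is positive; hence $\zeta^*$ is strictly decreasing in $r$, giving (c).

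For part (a) it remains to show that $f$ is increasing in each of $\kappa$ and $\lambda$. For the dependence on $\lambda$, the map $\lambda \mapsto -\frac{1}{2\lambda}$ has derivative $\frac{1}{2\lambda^2} > 0$, so the discriminant is increasing in $\lambda$, and therefore so are $\sqrt{\kappa^2 - \frac{1}{2\lambda}}$ and $f$. For the dependence on $\kappa$, I would compute $\partial_\kappa f = 1 + \frac{\kappa}{\sqrt{\kappa^2 - \frac{1}{2\lambda}}} \geq 1 > 0$, using $\kappa \geq 0$ together with the positivity of the discriminant established above. This delivers (a) and completes the argument.

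I do not expect any genuine obstacle here: every assertion reduces to a sign check on an elementary derivative. The only point requiring care—making sure the square root is real so that the formula for $\zeta^*$ is meaningful and all factors carry definite signs—is dispatched once and for all by the restriction $\lambda\kappa^2 \geq 1.6$ inherited from Proposition \ref{prop:eqm}, which is why the corollary follows immediately from that result.
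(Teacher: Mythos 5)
Your proposal is correct and matches the paper's treatment: the paper states that the corollary "follows immediately" from the closed-form expression for $\zeta^*$ in Proposition \ref{prop:eqm}, and your elementary sign checks on the derivatives (together with the observation that $\lambda\kappa^2 \geq 1.6$ keeps the discriminant positive) are exactly the omitted routine verification.
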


Part (a) says that incentives for greater bias not only increase polarization, which is expected, but also lead to a reduction in the quality of news. To understand this result, consider the incentives for source $i$'s choice of precision. Applying our characterization in Theorem \ref{thm:K=2}, there are up to two stages of information acquisition: In Stage 1, if there is a strictly more informative source, then that source receives all viewership; in Stage 2, both sources receive a constant proportion of viewership. We show that, for any equal bias intensity choices $\phi_1 = \phi_2$, source $i$'s long-run share is $\frac{\zeta_i}{\zeta_1 + \zeta_2}$, while source $i$ is chosen in Stage 1 if and only if its noise term is smaller ($\zeta_i<\zeta_{3-i}$). Thus, sources face a trade-off between optimizing for greater long-run viewership\textemdash where a larger noise choice $\zeta_i$ increases the long-run share\textemdash versus competing to be chosen in the short-run\textemdash which encourages smaller $\zeta_i$. Intuitively, more precise information improves the competitive value of a source at the beginning of time, but reduces the value of continual engagement with that source. In equilibrium, sources choose the same $\zeta_i$, thus washing out the first stage of information acquisition. 

The size of this common noise level, however, depends on the incentives for bias. When sources provide biased news, the reader must attend to both sources to learn the truth. Polarized news sources thus live in symbiosis, where the extremity of bias on one side increases the value of information from the other. In the language of our paper, two sufficiently polarized news sources on opposite sides provide complementary information (while in contrast, two unbiased sources about $\omega$ provide fully substitutable information). The strength of complementarity increases monotonically with the degree of polarization.            

Since the reader has stronger preferences for mixing over the two sources when they are complements, this means that the length of Stage 1 (when it exists) is decreasing in the degree of polarization. Thus, the more polarized the news sources are, the more emphasis these sources place on the long run, which in turn leads to lower quality news provision as we have discussed. This gives the conclusion in Part (a) of Corollary \ref{corr:eqm noise}. In addition, larger prior uncertainty about $b$ implies higher value of de-biasing and thus also a shorter Stage 1, leading to Part (b).

Part (c) of Corollary \ref{corr:eqm noise} holds by very similar reasoning. Less patient news sources compete over short-run profits (i.e., being chosen in Stage 1), and thus prefer precise signals, while patient sources compete for long-run profits (i.e., long-run proportion), and thus prefer imprecise signals. So the less patient the sources are (larger $r$), the more precise their signals will be in equilibrium (smaller $\zeta^*)$.

\subsection{Application 3: Attention Manipulation}\label{sec:manipulation}

Our analyses have so far assumed that the agent has complete control over how to allocate his attention. In practice, businesses expend substantial effort to divert attention towards their products. Such ``attention grabbing" often takes the form of a one-time intervention (e.g., an ad) rather than a continual shift in exposure, so the value of the attention diversion depends on how it shapes subsequent allocation of attention.  Two questions thus naturally arise: 1) Does a one-time manipulation of attention towards a given source lead to a persistently higher amount of attention devoted to that source, or will the decision-maker quickly ``compensate" for the manipulation? 2) What are the externalities on other sources---in particular, is it the case that manipulating attention towards one source decreases the amount of attention devoted to others?

\citet{GossnerSteinerStewart} (henceforth GSS) recently studied this question in a model in which an agent sequentially learns about the quality of a number of goods by allocating attention to one good each period. One of their main results (Theorem 1) resolves the two questions posed above in the following way: 
\begin{itemize}[noitemsep]
\item[(1)] the cumulative amount of attention paid to that good remains persistently higher following the attention manipulation, and

\item[(2)] the cumulative amount of attention paid to any other good remains persistently lower following the attention manipulation.
\end{itemize}
A key assumption in GSS is that the attention strategy used by the agent satisfies a version of Independence of Irrelevant Alternatives (IIA): Conditional on \emph{not} focusing on the good to which attention is diverted, the agent's belief about that good does not affect the relative probabilities of focusing on the remaining goods. Proposition 5 in \citet{GossnerSteinerStewart} shows that when the agent adopts a class of ``satisficing'' stopping rules, the optimal attention strategy satisfies IIA for independent goods.\footnote{This additional assumption on the stopping rule is not required in our setting, since we focus on the uniformly optimal attention strategy which is independent of stopping behavior.}

We can use our framework and main characterization to study a related but different problem, where the agent learns about multiple attributes of an unknown (one-dimensional) payoff-relevant state, and---importantly---those attribute values can be correlated. Additionally, we differ from GSS by focusing on the optimal attention allocation strategy and how it is affected by attention manipulation. Outside of the special case of independent attributes, the optimal strategy in our setting fails IIA when there are more than two attributes. Nevertheless, we show GSS's finding in (1) holds for flexible patterns of correlation, and the finding in (2) holds under an additional condition, which we make precise.

Formally, suppose attention is manipulated such that the agent only attends to source $1$ from time zero to time $T$, where $T > 0$ is fixed in this section. After time $T$, the agent adopts the optimal attention strategy given his posterior belief at $T$. The dynamic effect of the one-time attention manipulation is then understood by comparing the cumulative attention vectors under the optimal strategy and under the manipulated strategy. We assume throughout that our previous conditions on the prior covariance matrix apply (i.e., Assumption \ref{assmp:K=2} if $K=2$ and Assumption \ref{assmp:Substitutes}, \ref{assmp:Complements} or \ref{assmp:diagonal dominance} if $K>2$).
\begin{proposition} \label{prop:manipulation increases attention}
Let $T^* \geq T$ be the earliest time at which cumulative attention towards source 1 exceeds $T$ under the baseline (unmanipulated optimal) strategy. Then, cumulative attention towards source 1 is strictly larger under the manipulated attention strategy at every moment of time $t \in (T, T^*)$, and equal to the baseline at all later times $t\geq T^*$. 
\end{proposition}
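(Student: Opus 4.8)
The plan is to reduce the comparison to a statement about cumulative attention to source $1$ along two deterministic paths: the baseline optimal path $n(t)$ (which exists and is monotone by Lemma \ref{lemma:monotonicity}, since our standing assumptions guarantee a uniformly optimal strategy), and the manipulated path, which forces $q_1 = t$, $q_j = 0$ for $j \neq 1$ on $[0,T]$ and then follows the optimal continuation from the posterior belief at time $T$. Write $n_1(t)$ for the baseline cumulative attention to source $1$ and $\tilde n_1(t)$ for the manipulated one. By definition of $T^*$, we have $n_1(T^*) = T$, and since $\tilde n_1(T) = T > n_1(T)$ (the baseline cannot have spent all $T$ units on source $1$ before $T^*$), the claim is that $\tilde n_1(t) > n_1(t)$ strictly on $(T, T^*)$ and $\tilde n_1(t) = n_1(t)$ for $t \geq T^*$.

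**Key steps.**

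First I would invoke the \emph{absorbing} property of the sufficient conditions (stated just after Theorem \ref{thm:general}): the posterior covariance matrix $\Sigma_T$ reached after the manipulation still satisfies the relevant assumption, so the continuation problem from $T$ is again uniformly optimal and its $t$-optimal vectors are monotone. This lets me describe the manipulated continuation as the uniformly optimal path for the shifted primitive $(\Sigma_T, \alpha)$. Second, and this is the crux, I would show that the manipulated path and the baseline path \emph{merge} exactly at $T^*$ and coincide thereafter. The natural tool is that the uniformly optimal cumulative vector $n(t)$ is characterized as the variance-minimizing allocation of $t$ total units, i.e.\ $n(t) = \argmin_{\sum q_i = t} V(q)$ with $V(q) = \alpha'(\Sigma^{-1} + \diag(q))^{-1}\alpha$. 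Because $V$ depends on the cumulative vector alone and not on the order of acquisition, the optimal cumulative allocation of a \emph{total} budget $s \geq T^*$ is the same object whether or not the first $T$ units were constrained to source $1$: once the baseline itself has placed $\geq T$ units on source $1$ (which happens at $T^*$), the manipulation constraint $q_1 \geq T$ is no longer binding, so the unconstrained minimizer $n(s)$ automatically satisfies it and hence solves the constrained problem too. This gives $\tilde n(t) = n(t)$ for all $t \geq T^*$.

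**The strict inequality on $(T,T^*)$.**

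For the open interval I would argue that on $(T, T^*)$ the constraint $q_1 \geq T$ is strictly binding for the manipulated problem, forcing $\tilde n_1(t) \geq T > n_1(t)$, where the last strict inequality holds because $t < T^*$ means the baseline has not yet accumulated $T$ units on source $1$ (using monotonicity and continuity of $n_1$ together with the definition of $T^*$ as the \emph{earliest} such time). The main obstacle I anticipate is rigorously handling the transition and ruling out that the manipulated path "undershoots" source $1$ below $T$ at some interior point before re-merging: one must confirm that the manipulated cumulative attention to source $1$ stays at least $T$ throughout $(T,T^*)$ rather than dipping (cumulative attention is nondecreasing, so $\tilde n_1$ cannot fall below its value $T$ at time $T$, which settles this) and that strictness is not lost at isolated points. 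I would close these gaps by appealing to the uniqueness of the $t$-optimal vector (Lemma \ref{lemma:t-opt unique}) and strict monotonicity/continuity of $n_1(\cdot)$ away from the merge point, so that $n_1(t) < T$ strictly for $t < T^*$.
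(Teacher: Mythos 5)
Your proposal is correct and follows essentially the same route as the paper's own proof: identify the manipulated cumulative vector with the constrained minimizer $\hat{n}(t)=\argmin\{V(q):\sum_i q_i=t,\ q_1\geq T\}$ via the absorbing property, observe that the constraint is slack once $n_1(t)\geq T$ (i.e.\ for $t\geq T^*$, by monotonicity of $n_1$) so the two paths coincide there, and note that for $t\in(T,T^*)$ one has $n_1(t)<T\leq\hat{n}_1(t)$. The only cosmetic difference is that the paper additionally remarks that $n_1(t)\to\infty$ to make explicit that the merge eventually occurs, which your argument leaves implicit.
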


Thus, the finding in (1) holds under arbitrary correlation (so long as our characterizations apply): Attention manipulation towards source 1 has a persistent positive effect on the total amount of attention source 1 receives up to every future time. On the other hand, we show that this increase in cumulative attention vanishes in the long run, with the cumulative attention paid to source 1 under the baseline strategy ``catching up" to the manipulated strategy by time $T^*$.

The proof of this proposition is simple given our previous analysis. The cumulative attention vector under the optimal strategy is the $t$-optimal vector defined as
\begin{equation}\label{eq:t-optimal}
n(t) = (n_1(t), \dots, n_K(t)) = \argmin_{q_1, \dots, q_K \geq 0: ~\sum_{i} q_i = t} V(q),
\end{equation}
where $V$ is the posterior variance function given by \eqref{eq:V(q)}. On the other hand, the manipulated strategy induces the following \emph{constrained} $t$-optimal vector $\hat{n}(t)$ at any time $t > T$:\footnote{As discussed, the ``absorbing property" of our sufficient conditions implies that our characterizations apply to the posterior belief at time $T$ after the agent has paid $T$ units of attention to source $1$. Thus, the cumulative attention vector at time $t \geq T$ must minimize the posterior variance $V$ among feasible attention vectors following the manipulation. This leads to the constrained $t$-optimal vector $\hat{n}(t)$.}
\begin{equation}\label{eq:constrained t-optimal}
\hat{n}(t) = (\hat{n}_1(t), \dots, \hat{n}_K(t)) = \argmin_{q_1, \dots, q_K \geq 0: ~\sum_{i} q_i = t \text{ and } q_1 \geq T} V(q).
\end{equation}

If $n_1(t) \geq T$, then the unconstrained $t$-optimal vector $n(t)$ satisfies the constraint in \eqref{eq:constrained t-optimal}, so it coincides with the constrained $t$-optimal vector $\hat{n}(t)$. Moreover, $n_1(t) \to \infty$ as $t \to \infty$ because our characterization says that source $1$ receives positive and constant attention at every instant in the final stage. Thus while initially source 1 must receive higher cumulative attention under the manipulated strategy,  eventually the cumulative attention devoted to source 1 must be the same under the baseline and manipulated strategies. To show that $T^*$ is this switch point, note that $t \geq T^*$ implies $n_1(t) \geq T$ (by definition of $T^*$ and monotonicity of $n_1(t)$), in which case $n_1(t) = \hat{n}_1(t)$. And if $T<t<T^*$, we have  $n_1(t) < T \leq \hat{n}_1(t)$, so the manipulated amount of attention devoted to source 1 strictly exceeds that of the baseline strategy. This yields the result.

\bigskip

When there are only two attributes, Proposition \ref{prop:manipulation increases attention} also delivers GSS's second finding, namely that diversion of attention towards learning about one attribute weakly reduces cumulative attention towards learning about the remaining attribute at every moment of time. With more than two attributes, however, correlation between the attributes can overturn this result. 

\begin{example}
Suppose there are 3 attributes, the payoff-relevant state is $\omega = \theta_1 + \theta_2 + \theta_3$, and the prior covariance matrix is 
\[
\Sigma = \left(\begin{array}{ccc} 3 & -2 & 0 \\ -2 & 3 & 0 \\ 0 & 0 & 2 \end{array} \right) 
\]
Since $\Sigma^{-1}$ is diagonally-dominant, Theorem \ref{thm:general} applies.

Without attention manipulation, the optimal strategy devotes the first $0.5$ units of attention towards  $\theta_3$. At $t = 0.5$, the three sources have exactly equal marginal values (and equal payoff weights), so equal attention is optimal afterwards. Thus $n(t) = (0, 0, t)$ for $ t < 0.5$ and $n(t) = (\frac{t-0.5}{3}, \frac{t-0.5}{3}, \frac{t+1}{3})$ for $t \geq 0.5$.

Now suppose instead that the agent is forced to attend to source 1 for $0.1$ unit of time. Then after the first $0.1$ units of attention devoted towards $\theta_1$, the agent optimally still begins by learning about $\theta_3$. This lasts until $t^* = \frac{7}{15} < 0.5$, at which time source $2$ has the same marginal value as source $3$. The second stage then involves learning about $\theta_2$ and $\theta_3$ using the constant attention ratio $3:7$. The third stage begins at $t^{**} = 0.8$, after which equal attention across the three sources is optimal.
It can be checked that the manipulation of attention towards source 1 weakly \emph{increases} the cumulative attention towards source 2 at all times, and strictly so during the period $t \in (\frac{7}{15}, 0.8)$. 
\end{example}

Intuitively, in this example sources 1 and 2 provide complementary information (since $\Sigma_{12} < 0$). Manipulating attention towards source 1 thus increases the marginal value of source 2, and the agent begins observing source 2 earlier than he would have otherwise.  In contrast, we might expect that when all sources are substitutes with one another, attention manipulation to source 1 must decrease the amount of attention devoted to every other source. The challenge is understanding what the appropriate notion of ``substitutes" is. This turns out to be the property given earlier in Assumption \ref{assmp:Substitutes}, which guarantees that each pair of attributes has a positive partial correlation coefficient. 

\begin{proposition}\label{prop:manipulation decreases attention} 
Suppose all pairs of sources are substitutes (i.e., Assumption \ref{assmp:Substitutes} is satisfied). Then cumulative attention towards every source $i > 1$ is weakly smaller under the manipulated strategy than under the baseline strategy, at every moment of time. 
\end{proposition}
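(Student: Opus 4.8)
The plan is to reduce the claim to a one-parameter comparative-statics statement and then sign it using the substitutes structure. Throughout, write $M(q) = \Sigma^{-1} + \diag(q)$, $P(q) = M(q)^{-1}$, and $\gamma(q) = P(q)\alpha$; recall that $\partial_{q_i} V = -\gamma_i^2$, so $\gamma_i^2$ is exactly the marginal value of source $i$, and that $\gamma_i = \mathrm{Cov}(\theta_i,\omega)$ at the posterior. Under Assumption \ref{assmp:Substitutes}, $\Sigma^{-1}$ is a symmetric $M$-matrix, and since adding the non-negative diagonal $\diag(q)$ preserves this, every posterior precision $M(q)$ is an $M$-matrix; hence $P(q)\ge 0$ entrywise and $\gamma(q) = P(q)\alpha > 0$ (using $\alpha>0$ together with the absorbing property). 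Differentiating $M\gamma = \alpha$ gives $\partial_{q_j}\gamma = -\gamma_j P_{\cdot j}$, so $\partial_{q_j}(\gamma_i^2) = -2\gamma_i\gamma_j P_{ij}\le 0$ for all $i,j$ --- i.e., the marginal value of each source is nonincreasing in the cumulative attention paid to every source. These are the only facts about substitutes I would use.

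First dispose of the easy regimes. For $t\le T$ the manipulated strategy has $\hat n_i(t)=0\le n_i(t)$ for $i\ge 2$; and by Proposition \ref{prop:manipulation increases attention}, for $t\ge T^*$ the constraint is slack so $\hat n(t)=n(t)$. It remains to treat $T<t<T^*$, where $q_1\ge T$ binds. Since $V$ is convex and the unconstrained optimum has $n_1(t)<T$, the constrained optimum sits at $\hat n_1(t)=T$, and $\hat n_{\ge 2}(t)$ is the best allocation of the residual budget $t-T$ among sources $2,\dots,K$ with source $1$ fixed at $T$; the baseline $n_{\ge 2}(t)$ is likewise the best allocation of $t-n_1(t)$ with source $1$ fixed at $n_1(t)$. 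Both are therefore instances of the single parametric problem
\[
q^c = \Big(c,\ \argmin_{q_{\ge 2}\ge 0,\ \sum_{i\ge 2} q_i = t - c} V(c, q_{\ge 2})\Big),
\]
with $c=n_1(t)$ giving $n(t)$ and $c=T$ giving $\hat n(t)$. The goal becomes: $q^c_i$ is weakly decreasing in $c$ on $[n_1(t),T]$ for every $i\ge 2$, after which continuity and integration give $\hat n_i(t)\le n_i(t)$.

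I would prove this monotonicity by differentiating the optimality conditions between breakpoints. On the active set $B\subseteq\{2,\dots,K\}$ they read $\gamma_i^2(q^c)=g^2$ (a common value) for $i\in B$, with $\sum_{i\in B}q^c_i=t-c$. Writing $x_i=dq^c_i/dc$ and using $\partial_{q_j}\gamma_i=-\gamma_j P_{ij}$, this yields the linear system $P_B\,x = -\tfrac{\gamma_1}{g}\,p_1 - \tfrac{\delta}{g}\,\mathbf 1$ subject to $\mathbf 1'x=-1$, where $P_B$ is the active-set posterior covariance submatrix, $p_1=(P_{i1})_{i\in B}\ge 0$ is the coupling to source $1$, and $\delta$ is the common movement of the active $\gamma_i$. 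Both forces point the same way: the term in $\mathbf 1$ reflects the shrinking residual budget (handled by the already-established normality of demand for substitutes), while $p_1\ge0$ and $\gamma_1\ge0$ encode that raising source $1$ lowers every other source's marginal value. The substantive content is to show that the solution satisfies $x\le 0$ coordinatewise and that $B$ only contracts as $c$ increases (which I would verify from $\partial_{q_j}(\gamma_i^2)\le 0$, ruling out any inactive source entering $B$, so the piecewise-differentiable pieces patch together by continuity).

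I expect the coordinatewise signing $x\le 0$ to be the main obstacle. The difficulty is that the budget constraint couples sources $2,\dots,K$, so this is \emph{not} a direct Topkis/lattice consequence of submodularity: at fixed residual budget, strong substitution between source $1$ and some source could free budget that flows perversely to another source, and $P_B^{-1}$ --- an inverse of a Schur complement of an $M$-matrix --- need not be entrywise sign-definite, so the naive ``both terms nonnegative'' decomposition fails. The plan is therefore to exploit the quantitative $M$-matrix structure of $P_B$, together with the fact that the left endpoint $c=n_1(t)$ is the \emph{global} optimum (which pins down the sign of $\delta$ and the ordering of marginal values at the start of the path), to show that the budget-contraction effect dominates any reallocation across substitutes and delivers $x\le 0$. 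Integrating from $c=n_1(t)$ up to $c=T$ then yields the proposition.
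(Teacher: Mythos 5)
Your reduction to the one-parameter family $q^c$ on $[n_1(t),T]$, with $c=n_1(t)$ giving $n(t)$ and $c=T$ giving $\hat n(t)$, is exactly the paper's strategy (there the parameter is called $x$), and you dispose of the easy regimes $t\le T$ and $t\ge T^*$ the same way. But the proof stops precisely where the proposition's content begins: you set up the linearized first-order system for $x=dq^c/dc$, correctly identify that signing $x\le 0$ coordinatewise is ``the main obstacle,'' and then offer only a plan (``exploit the quantitative $M$-matrix structure \dots\ to show that the budget-contraction effect dominates''). That signing step \emph{is} the theorem, and it is not carried out; as written, this is a correct setup plus an unproven claim.

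The way the paper closes this gap is to invert the Hessian in closed form rather than wrestle with $P_B$ or its inverse: by Lemma \ref{lemma:V partials}, $\mathrm{Hess}_V = 2\diag(\gamma)\,(\Sigma^{-1}+Q)^{-1}\diag(\gamma)$, so $\mathrm{Hess}_V^{-1} = \tfrac12\diag(1/\gamma)\,(\Sigma^{-1}+Q)\,\diag(1/\gamma)$ and the derivative of $q^c$ is expressed directly through the \emph{precision} matrix $\Sigma^{-1}+Q$, whose off-diagonal entries are non-positive under Assumption \ref{assmp:Substitutes}. This yields the closed form $u_i \propto \lambda\bigl(\alpha_i - b[\Sigma^{-1}+Q]_{1i}\bigr)/\gamma_i$; the budget identity $\sum_i u_i=0$ forces $b>0$, the $M$-matrix sign $[\Sigma^{-1}+Q]_{1i}\le 0$ then makes the numerator positive for every $i>1$, and normalizing $u_1=1$ pins $\lambda<0$, giving $u_i<0$ for all $i>1$. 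Your worry that the relevant inverse ``need not be entrywise sign-definite'' dissolves once you work on the precision side; and when some sources are inactive, the restricted precision matrix is a Schur complement of an $M$-matrix, hence again an $M$-matrix, so the same computation applies to the reduced problem (this also substitutes for your unneeded claim that the active set only contracts in $c$). Finally, your one-line appeal to ``continuity and integration'' glosses over establishing (left-)differentiability of $c\mapsto q^c$, which the paper handles with an explicit ODE/Peano argument; this is secondary to the main gap but not free.
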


This result complements GSS by demonstrating a class of correlated attributes for which manipulation of attention towards one reduces cumulative attention towards all others. Together with our previous Proposition \ref{prop:manipulation increases attention}, it shows that GSS's Attention Theorem extends beyond their IIA assumption. Since our environment and GSS's are non-nested, these results collectively point to the possibility of a more general set of sufficient conditions, which we leave to future work.

\section{Discussion} \label{sec:discussion}

Information acquisition is a classic problem within economics, but there are relatively few dynamic models that are simultaneously rich and tractable. In this paper we present a class of dynamic information acquisition problems whose solution can be explicitly characterized in closed-form. We show that a complete analysis is feasible if we assume: (1) Gaussian uncertainty, (2) a one-dimensional payoff-relevant state, and (3) correlation across the unknowns that satisfies certain assumptions (for example if correlation is not too strong). Given these restrictions, a great deal of generality can be accommodated in other aspects of the problem, such as the payoff function. The tractability of the solution and the flexibility of the environment open the door to interesting applications, a number of which we have illustrated here. 

We conclude by briefly mentioning a few other potential extensions and variations.

\smallskip

\textbf{Discrete Time.} Although our main model is in continuous time, our results have direct analogues in a related discrete-time model. Specifically, for the model previously described in Remark \ref{remark:discreteTime}, we have the following result: \emph{Suppose any of Assumption \ref{assmp:Substitutes}, \ref{assmp:Complements}, or \ref{assmp:diagonal dominance} holds. Then at each period $t \in \mathbb{Z}_{+}$, the optimal allocation of precision is $(\pi_1(t), \dots, \pi_K(t))$ where $\pi_i(t) = \int_{t}^{t+1} \beta_i(s) ~ds$ for each $i$, with $\beta_i(s)$ being the optimal attention allocation for the continuous-time model as described in Theorem \ref{thm:general}.}\footnote{In a companion piece, \citet{LiangMuSyrgkanis}, we discretize not only time but also information acquisitions: At each period $t$, the agent has to choose one of $K$ standard normal signals, without the ability to allocate fractional precisions. The necessity of integer approximation complicates the characterization of the full sequence of signal choices. In that paper we instead provide conditions under which myopic acquisition is (eventually) optimal.}

\smallskip

\textbf{Exogenous Stopping.} Although we have assumed that the agent endogenously chooses when to stop acquiring information, our results hold without modification if instead the end date arrives according to an arbitrary exogenous distribution. Additionally, in that alternative model, the optimal path of attention allocations is uniquely characterized by our results so long as the exogenous distribution of end date has full support. 

\smallskip

\textbf{Intertemporal Decision Problems.} Our main model assumes that the agent takes only one action, which simplifies the exposition. But since our analysis based on the notion of uniform optimality is independent of details of the payoff function, it can be easily generalized to a setting where the agent takes $N$ actions $a_1, \dots, a_N$ at times $\tau_1 \leq \dots \leq \tau_N$. Our characterization of the optimal attention strategy extends for any (intertemporal) payoff function $u(\tau_1, \dots, \tau_N, a_1, \dots, a_N, \omega)$ that is decreasing in the decision times $\tau_1, \dots, \tau_N$.



\bigskip

\appendix


\section{Preliminaries}

\subsection{Posterior Variance Function} \label{appx:V}
Given $q_i$ units of attention devoted to learning about each attribute $i$, the posterior variance of $\omega$ can be written in two ways:
\begin{lemma}\label{lemma:V} 
It holds that
\[
V(q_1, \dots, q_K) = \alpha' \left[(\Sigma^{-1} + \diag(q))^{-1}\right] \alpha = \alpha' \left[\Sigma - \Sigma (\Sigma + \diag(1/q))^{-1} \Sigma\right] \alpha 
\]
where $\diag(1/q)$ is the diagonal matrix with entries $1/q_1, \dots, 1/q_K$. 

This function $V$ extends to a rational function (quotient of polynomials) over all of $\mathbb{R}^K$, i.e., even if some $q_i$ are negative.
\end{lemma}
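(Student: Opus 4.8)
The first equality is just the definition of $V$ given in \eqref{eq:V(q)}, so the real content is the second equality together with the rational-extension claim. For the second equality, the plan is to establish the purely algebraic matrix identity
\[
(\Sigma^{-1} + \diag(q))^{-1} = \Sigma - \Sigma\,(\Sigma + \diag(1/q))^{-1}\,\Sigma,
\]
valid whenever every $q_i \neq 0$ (so that $\diag(1/q)$ makes sense) and both inverses exist, and then to conjugate by $\alpha'$ on the left and $\alpha$ on the right. Rather than invoke the Woodbury / matrix-inversion lemma as a black box, I would verify the identity directly: writing $Q = \diag(q)$, multiply the claimed right-hand side on the left by $(\Sigma^{-1} + Q)$ and use the factorization $I + Q\Sigma = Q(\Sigma + Q^{-1})$ to collapse $(I + Q\Sigma)(\Sigma + Q^{-1})^{-1}$ into $Q$. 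Concretely, the product expands to $(I + Q\Sigma) - (I + Q\Sigma)(\Sigma + Q^{-1})^{-1}\Sigma = (I + Q\Sigma) - Q\Sigma = I$, so the two $Q\Sigma$ terms cancel and the product is the identity. This takes only a few lines and uses nothing beyond invertibility.

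For the rational extension I would work with the first expression and set $M(q) := \Sigma^{-1} + \diag(q)$, each entry of which is an affine function of $q = (q_1,\dots,q_K)$. By Cramer's rule, $M(q)^{-1} = \det(M(q))^{-1}\,\mathrm{adj}(M(q))$, and every entry of $\mathrm{adj}(M(q))$, as well as $\det(M(q))$, is a signed minor of $M(q)$ and hence a polynomial in $q$. Therefore
\[
V(q) = \alpha' M(q)^{-1} \alpha = \frac{\alpha'\,\mathrm{adj}(M(q))\,\alpha}{\det(M(q))}
\]
is a quotient of two polynomials in $q$, i.e.\ a rational function on $\mathbb{R}^K$. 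For $q \in \mathbb{R}_+^K$, the matrix $\Sigma^{-1}$ is positive definite and $\diag(q)$ positive semidefinite, so $M(q)$ is positive definite and $\det M(q) > 0$; there the rational function agrees with the originally defined $V$. This exhibits $V$ as the restriction to the positive orthant of a globally defined rational function, which is the claim.

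I do not anticipate a genuine obstacle; the only points needing care are bookkeeping ones. First, the second form involving $\diag(1/q)$ is meaningful only away from the coordinate hyperplanes $\{q_i = 0\}$, so the phrase ``even if some $q_i$ are negative'' should be read as a statement about the first (adjugate-over-determinant) form, which remains a bona fide rational function across those hyperplanes. Second, ``over all of $\mathbb{R}^K$'' must be understood in the sense of a rational function (an element of the field of fractions of the polynomial ring in $q_1,\dots,q_K$), which may still have poles along the zero set of $\det M(q)$; as an actual function it is defined off that set, and in particular everywhere on $\mathbb{R}_+^K$.
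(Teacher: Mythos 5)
Your proposal is correct and follows essentially the same route as the paper: the second expression is obtained from the standard matrix identity (which the paper simply cites as well-known, whereas you verify it by direct multiplication using $I + Q\Sigma = Q(\Sigma + Q^{-1})$, a correct and self-contained check), and the rational-extension claim is established exactly as in the paper via the adjugate-over-determinant representation of $(\Sigma^{-1}+\diag(q))^{-1}$. Your added remarks about where $\diag(1/q)$ is meaningful and about poles along $\{\det(\Sigma^{-1}+\diag(q))=0\}$ are accurate clarifications of what the lemma's extension statement means.
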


\begin{proof}
The equality $(\Sigma^{-1} + \diag(q))^{-1} = \Sigma - \Sigma (\Sigma + \diag(1/q))^{-1} \Sigma$ is well-known. To see that $V$ is a rational function, simply note that $(\Sigma^{-1} + \diag(q))^{-1}$ can be written as the \emph{adjugate matrix} of $\Sigma^{-1} + \diag(q)$ divided by its determinant. Thus each entry of the posterior covariance matrix is a rational function in $q$. 
\end{proof}

Below we calculate the first and second derivatives of the posterior variance function $V$:
\begin{lemma}\label{lemma:V partials}
Given a cumulative attention vector $q \geq 0$, define 
\[
\gamma:=\gamma(q)= (\Sigma^{-1}+\diag(q))^{-1} \alpha
\]
which is a vector in $\mathbb{R}^{K
}$. Then the first and second derivatives of $V$ are given by
\[
\partial_i V = -\gamma_i^2, \quad \quad \quad \quad  \partial_{ij} V = 2\gamma_i \gamma_j \cdot \left[(\Sigma^{-1}+\diag(q))^{-1}\right]_{ij}. 
\]
\end{lemma}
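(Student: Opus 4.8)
The plan is to treat $V$ as a quadratic form in the inverse of the matrix $M = M(q) := \Sigma^{-1} + \diag(q)$, so that $V(q) = \alpha' M^{-1} \alpha$ and $\gamma = M^{-1}\alpha$. Since $\Sigma^{-1}$ and $\diag(q)$ are both symmetric, $M$ is symmetric and hence so is $M^{-1}$; this symmetry is what will let me write $\alpha' M^{-1} = \gamma'$ as well as $M^{-1}\alpha = \gamma$. By Lemma \ref{lemma:V} the entries of $M^{-1}$ are rational functions of $q$, so $V$ is differentiable (to all orders) wherever $M$ is invertible, and the formal manipulations below are legitimate on that domain. The single computational tool I would invoke is the standard identity for the derivative of a matrix inverse, namely $\partial_i (M^{-1}) = -M^{-1}(\partial_i M)\,M^{-1}$, which follows by differentiating $M M^{-1} = I$. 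The only input specific to our $M$ is that $\partial_i M = e_i e_i'$, where $e_i$ is the $i$-th standard basis vector, since $q_i$ enters $M$ only through the $(i,i)$ diagonal entry.

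First I would compute the gradient. Applying the two facts above,
\[
\partial_i V = \alpha' \bigl(\partial_i M^{-1}\bigr)\alpha = -\alpha' M^{-1} e_i e_i' M^{-1}\alpha = -(\gamma' e_i)(e_i'\gamma) = -\gamma_i^2,
\]
where the last equalities use $M^{-1}\alpha = \gamma$, $\alpha' M^{-1} = \gamma'$, and $e_i'\gamma = \gamma_i$. This gives the claimed first-order formula $\partial_i V = -\gamma_i^2$.

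For the Hessian I would differentiate $\partial_i V = -\gamma_i^2$ in $q_j$, which reduces everything to understanding how $\gamma$ itself varies. Differentiating $\gamma = M^{-1}\alpha$ and using the same identity,
\[
\partial_j \gamma = \bigl(\partial_j M^{-1}\bigr)\alpha = -M^{-1} e_j e_j' M^{-1}\alpha = -\gamma_j\, M^{-1} e_j,
\]
so that the $i$-th coordinate is $(\partial_j\gamma)_i = -\gamma_j\,[M^{-1}]_{ij}$. The chain rule then yields
\[
\partial_{ij} V = \partial_j\bigl(-\gamma_i^2\bigr) = -2\gamma_i\,(\partial_j\gamma)_i = 2\gamma_i\gamma_j\,[M^{-1}]_{ij},
\]
and substituting $M^{-1} = (\Sigma^{-1}+\diag(q))^{-1}$ gives exactly the stated second-order formula.

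I do not expect a genuine obstacle here, as the argument is pure matrix calculus; the points requiring care are bookkeeping rather than conceptual. The two places to be careful are (i) exploiting the symmetry of $M^{-1}$ consistently, so that both $\alpha'M^{-1}$ and $M^{-1}\alpha$ can be replaced by $\gamma$ (and so that $\partial_{ij}V$ comes out symmetric, as it must), and (ii) correctly identifying $\partial_i M = e_i e_i'$ as a rank-one matrix, which is what collapses the quadratic forms into the scalar products $\gamma_i,\gamma_j$ and the single entry $[M^{-1}]_{ij}$. The differentiability needed to justify these steps is already guaranteed by the rational-function claim in Lemma \ref{lemma:V}, so no additional regularity argument is required.
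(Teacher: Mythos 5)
Your proof is correct and follows essentially the same route as the paper's: both use the matrix-inverse derivative identity with the rank-one perturbation $\partial_i M = e_i e_i'$ (the paper's $\Delta_{ii}$) to get $\partial_i V = -\gamma_i^2$, and then differentiate $\gamma_i$ via the chain rule to obtain the Hessian entry $2\gamma_i\gamma_j[(\Sigma^{-1}+\diag(q))^{-1}]_{ij}$. No gaps.
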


\begin{proof}
From Lemma \ref{lemma:V} and the formula for matrix derivatives, we have
\[
\partial_i V = - \alpha' (\Sigma^{-1}+\diag(q))^{-1} \Delta_{ii} (\Sigma^{-1}+\diag(q))^{-1} \alpha = -\left[e_i'(\Sigma^{-1}+\diag(q))^{-1} \alpha\right]^2 = -\gamma_i^2
\]
where $e_i$ is the $i$-th coordinate vector in $\mathbb{R}^K$, and $\Delta_{ii} = e_i \cdot e_i'$ is the matrix with ``1" in the $(i,i)$-th entry and ``0" elsewhere. For the second derivative, we compute that
\[
\partial_{ij}V = -2\gamma_i \cdot \frac{\partial \gamma_i}{\partial q_j} = 2\gamma_i \cdot e_i' (\Sigma^{-1}+\diag(q))^{-1} \Delta_{jj} (\Sigma^{-1}+\diag(q))^{-1} \alpha = 2\gamma_i \cdot \left[(\Sigma^{-1}+\diag(q))^{-1}\right]_{ij} \cdot \gamma_j
\]
as we desire to show. The last equality follows by writing $\Delta_{jj} = e_j \cdot e_j'$, and using $e_i'(\Sigma^{-1}+\diag(q))^{-1}e_j = \left[(\Sigma^{-1}+\diag(q))^{-1}\right]_{ij}$ as well as $e_j' (\Sigma^{-1}+\diag(q))^{-1} \alpha = e_j' \gamma = \gamma_j$. 
\end{proof}

\begin{corollary}
$V$ is decreasing and convex in $q_1, \dots, q_K$ whenever $q_i \geq 0$. 
\end{corollary}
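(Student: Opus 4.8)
The plan is to read both conclusions directly off the derivative formulas in Lemma \ref{lemma:V partials}. Monotonicity is immediate: since $\partial_i V = -\gamma_i^2 \leq 0$ for every $i$ and every $q$, the function $V$ is weakly decreasing in each coordinate $q_i$ on the region $q \geq 0$ (and indeed wherever $\gamma$ is defined). So the only substantive claim is convexity, which I would establish by showing that the Hessian of $V$ is positive semidefinite.

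Write $M := (\Sigma^{-1} + \diag(q))^{-1}$, so that Lemma \ref{lemma:V partials} gives the Hessian entries $\partial_{ij} V = 2\gamma_i \gamma_j M_{ij}$. The key observation is that this Hessian factors as a congruence,
\[
H = 2\, \diag(\gamma)\, M\, \diag(\gamma),
\]
since the $(i,j)$ entry of $\diag(\gamma)\, M\, \diag(\gamma)$ is exactly $\gamma_i M_{ij} \gamma_j$. It then remains to argue that $M$ is positive semidefinite. When $q \geq 0$ the matrix $\diag(q)$ is positive semidefinite and $\Sigma^{-1}$ is positive definite (as $\Sigma$ has full rank), so $\Sigma^{-1} + \diag(q)$ is positive definite and hence so is its inverse $M$. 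Because congruence preserves positive semidefiniteness, for any $x \in \mathbb{R}^K$ we have
\[
x' H x = 2\, (\diag(\gamma)\, x)'\, M\, (\diag(\gamma)\, x) \geq 0,
\]
so $H \succeq 0$ throughout $q \geq 0$, which gives convexity.

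I do not anticipate a genuine obstacle, as both claims follow mechanically from Lemma \ref{lemma:V partials}; the only point requiring care is spotting the factorization $H = 2\,\diag(\gamma)\,M\,\diag(\gamma)$, which reduces positive semidefiniteness of the Hessian to that of $M$. As a sanity check — and as an alternative proof that sidesteps the derivative computation entirely — one can instead use the variational identity
\[
V(q) = \sup_{x \in \mathbb{R}^K} \Bigl[\, 2\alpha' x - x'\Sigma^{-1} x - \textstyle\sum_i q_i x_i^2 \,\Bigr],
\]
obtained by completing the square (the maximizer is $x = (\Sigma^{-1}+\diag(q))^{-1}\alpha = \gamma$). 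For each fixed $x$ the bracketed expression is affine in $q$ with nonpositive coefficients $-x_i^2$, so $V$ is a supremum of affine functions, each decreasing in every $q_i$; hence $V$ is both convex and decreasing, recovering the full statement simultaneously.
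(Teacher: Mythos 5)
Your main argument is exactly the paper's proof: monotonicity from $\partial_i V = -\gamma_i^2 \leq 0$, and convexity from the factorization of the Hessian as $2\,\diag(\gamma)\,(\Sigma^{-1}+\diag(q))^{-1}\,\diag(\gamma)$, a congruence of a positive definite matrix. Both steps are correct and this is precisely how the appendix argues. Your supplementary variational identity $V(q) = \sup_{x}\bigl[2\alpha' x - x'\Sigma^{-1}x - \sum_i q_i x_i^2\bigr]$ is also correct (the maximizer is $x=\gamma$ and the optimal value is $\alpha'\gamma = V(q)$), and it gives a genuinely different, derivative-free route: $V$ is a pointwise supremum of functions affine and decreasing in $q$, hence convex and decreasing. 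That alternative is arguably cleaner since it needs no Hessian computation at all, though the derivative formulas of Lemma \ref{lemma:V partials} are needed elsewhere in the paper anyway, so the paper's route comes essentially for free.
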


\begin{proof}
By Lemma \ref{lemma:V partials}, the partial derivatives of $V$ are non-positive, so $V$ is decreasing. Additionally, its Hessian matrix is
\[
2 \diag(\gamma) \cdot (\Sigma^{-1}+\diag(q))^{-1} \cdot \diag(\gamma),
\]
which is positive semi-definite whenever $q \geq 0$. So $V$ is convex.
\end{proof}

We use these properties to show that for each $t$, the $t$-optimal vector $n(t)$ is unique: 

\begin{lemma}\label{lemma:t-opt unique} 
For each $t \geq 0$, there is a unique $t$-optimal vector $n(t)$. 
\end{lemma}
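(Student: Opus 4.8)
The plan is to establish existence by compactness and uniqueness by playing the convexity of $V$ against its first-order conditions. For existence, note that the feasible set $\{q \geq 0 : \sum_i q_i = t\}$ is compact and $V$ is continuous on the nonnegative orthant—its denominator $\det(\Sigma^{-1}+\diag(q))$ is strictly positive there, since $\Sigma^{-1}+\diag(q)$ is positive definite—so a minimizer exists by Weierstrass. The case $t=0$ is trivial (the feasible set is the single point $0$), so I assume $t>0$ throughout.

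For uniqueness I would first record the first-order conditions. Since $\partial_i V = -\gamma_i^2$ by Lemma~\ref{lemma:V partials}, the marginal value of attention to source $i$ is $\gamma_i(q)^2$, and optimality of a divisible allocation forces a common value $c \geq 0$ with $\gamma_i(n)^2 = c$ for every $i$ in the support of a minimizer $n$ and $\gamma_i(n)^2 \leq c$ for the remaining coordinates. I claim $c>0$: otherwise $\gamma_i(n)=0$ for all $i$, i.e.\ $(\Sigma^{-1}+\diag(n))^{-1}\alpha = 0$, which forces $\alpha = 0$ because the matrix is invertible, contradicting $\alpha > 0$. Hence at every coordinate in the support of any minimizer we have $\gamma_i(n) \neq 0$.

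Now suppose $n \neq n'$ are two minimizers and set $d = n' - n \neq 0$. Because $V$ is convex (the corollary following Lemma~\ref{lemma:V partials}) and the feasible set is convex, the whole segment $q = n + sd$, $s \in [0,1]$, lies in the feasible set and consists of minimizers, so $g(s) := V(n+sd)$ is constant; being smooth (a rational function with non-vanishing denominator on the segment) and constant, $g'' \equiv 0$ on $[0,1]$. Using the Hessian formula of Lemma~\ref{lemma:V partials}, $g''(s) = d' \bigl(2\diag(\gamma)(\Sigma^{-1}+\diag(q))^{-1}\diag(\gamma)\bigr) d = 2\,(\diag(\gamma)d)'(\Sigma^{-1}+\diag(q))^{-1}(\diag(\gamma)d)$, which, by positive-definiteness of $(\Sigma^{-1}+\diag(q))^{-1}$, vanishes only if $\diag(\gamma(q))\,d = 0$, i.e.\ $\gamma_i(q)\,d_i = 0$ for every $i$ and every $q$ on the segment. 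Pick any coordinate $i$ with $d_i \neq 0$; then $\max(n_i, n_i') > 0$, so $i$ lies in the support of $n$ or of $n'$, and at that endpoint $\gamma_i \neq 0$ by the previous paragraph—contradicting $\gamma_i\,d_i = 0$. Therefore $d = 0$ and the minimizer is unique.

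The main obstacle is that $V$ is convex but \emph{not} strictly convex: by the Hessian formula its curvature degenerates exactly in the coordinate directions where $\gamma_i = 0$, so uniqueness cannot be read off from strict convexity alone. The crux is the first-order argument showing that at any minimizer the active coordinates ($q_i > 0$) all have $\gamma_i \neq 0$; this is precisely what lets the degeneracy directions and the directions in which two candidate minimizers could differ be set against each other to force $d = 0$.
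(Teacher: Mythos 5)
Your proof is correct, and the uniqueness argument takes a genuinely different route from the paper's. Both proofs start the same way: two distinct minimizers would, by convexity, make $V$ constant on the segment joining them, and both end by contradicting first-order optimality. But the key step---showing that $\gamma_i$ must vanish at every coordinate where the two minimizers differ---is reached differently. The paper exploits that $V$ extends to a rational function on all of $\mathbb{R}^K$: it continues the line through the two minimizers to $\lambda\to+\infty$, replaces $q^\lambda$ by $\vert q^\lambda\vert$, and uses monotonicity of $V$ to force $\partial_i V=0$ on the differing coordinates. You instead differentiate twice along the segment and use the explicit factorization $\mathrm{Hess}_V = 2\diag(\gamma)(\Sigma^{-1}+\diag(q))^{-1}\diag(\gamma)$ together with positive definiteness of the middle factor to conclude $\gamma_i(q)\,d_i=0$ directly; you then rule this out via the KKT multiplier $c=\gamma_i^2>0$ on the support (noting that $c=0$ would force $\alpha=0$). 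Your argument is more elementary and local---it never leaves the feasible orthant and avoids the limiting argument with $\vert q^\lambda\vert$---while the paper's rational-function trick is the more ``global'' algebraic route. One small point worth making explicit: $g''=0$ gives $\diag(\gamma(q))d=0$ only on the open segment, so you should invoke continuity of $\gamma$ to push the identity to the endpoints $n$ and $n'$ before applying the support argument there; this is immediate but should be said.
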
 

\begin{proof}
Suppose for contradiction that two vectors $(r_1, \dots, r_K)$ and $(s_1, \dots, s_K)$ both minimize the posterior variance at time $t$. Relabeling the sources if necessary, we can assume $r_i - s_i$ is positive for $1 \leq i \leq k$, negative for $k+1 \leq i \leq l$ and zero for $l+1 \leq i \leq K$. Since $\sum_{i}r_i = \sum_{i} s_i = t$, the cutoff indices $k, l$ satisfy $1 \leq k < l \leq K$. 

For $\lambda \in [0,1]$, consider the vector $q^{\lambda} = \lambda \cdot r + (1-\lambda) \cdot s$ which lies on the line segment between $r$ and $s$. Then by assumption we have $V(r) = V(s) \leq V(q^\lambda)$. Since $V$ is convex, equality must hold. This means $V(q^\lambda)$ is a constant for $\lambda \in [0,1]$. But $V(q^\lambda)$ is a rational function in $\lambda$, so its value remains the same constant even for $\lambda > 1$ or $\lambda < 0$. In particular, consider the limit as $\lambda \to +\infty$. Then the $i$-th coordinate of $q^{\lambda}$ approaches $+\infty$ for $1 \leq i \leq k$, approaches $-\infty$ for $k+1 \leq i \leq l$ and equals $r_i$ for $i > l$. 

For each $q^{\lambda}$, let us also consider the vector $\vert q^{\lambda} \vert$ which takes the absolute value of each coordinate in $q^{\lambda}$. Note that as $\lambda \to +\infty$, $\diag(1/\vert q^{\lambda} \vert)$ has the same limit as $\diag(1/q^{\lambda})$. Thus by the second expression for $V$ (see Lemma \ref{lemma:V}), $\lim_{\lambda \to \infty} V(\vert q^{\lambda} \vert) = \lim_{\lambda \to \infty} V(q^{\lambda}) = V(r)$. For large $\lambda$, the first $l$ coordinates of $\vert q^{\lambda} \vert$ are strictly larger than the corresponding coordinates of $r$, and the remaining coordinates coincide. So the fact that $V$ is decreasing and  $V(\vert q^{\lambda} \vert) = V(r)$ implies $\partial_i V(r) = 0$ for $1 \leq i \leq l$. 

Consider the vector $\gamma = (\Sigma^{-1}+\diag(r))^{-1} \alpha$. By Lemma \ref{lemma:V partials}, $\partial_i V(r) = -\gamma_i^2$ for $1 \leq i \leq K$. Thus $\gamma_1 = \dots = \gamma_{l} = 0$. Since $\alpha$ and thus $\gamma$ is not the zero vector, there exists $j > l$ s.t. $\gamma_j \neq 0$. It follows that $\partial_1 V(r) = 0 > \partial_j V(r)$. But then the posterior variance $V$ would be reduced if we slightly decreased the first coordinate of $r$ (which is strictly positive since $r_1 > s_1$) and increased the $j$-th coordinate by the same amount. This contradicts the assumption that $r$ is a $t$-optimal vector. Hence the lemma holds. 
\end{proof}

\subsection{Optimality and Uniform Optimality}\label{appx:uniform optimality}
The following result ensures that a strategy that minimizes the posterior variance uniformly at all times is an optimal strategy in any decision problem. 

\begin{lemma}\label{lemma:Greenshtein}
Suppose the payoff function $u(\tau, a, \omega)$ satisfies Assumption \ref{assmp:waiting is costly}, then a uniformly optimal attention strategy is dynamically optimal. 
\end{lemma}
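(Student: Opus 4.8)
The plan is to show that a uniformly optimal strategy $S^*$ weakly dominates an arbitrary alternative strategy $S$ (possibly history-dependent) for any payoff function satisfying Assumption~\ref{assmp:waiting is costly}. The key observation, as the excerpt foreshadows, is that the agent's entire posterior about $\omega$ at any time is captured by a Gaussian whose variance depends only on the cumulative attention vector (via equation~\eqref{eq:V(q)}), not on the realized signals or the order of acquisition. Under $S^*$, the posterior variance at time $t$ is exactly $V(n(t))$, which by definition is the minimal achievable variance from $t$ total units of attention; hence for \emph{any} strategy $S$ and any time $t$, the posterior variance under $S^*$ at $t$ is weakly smaller than the posterior variance under $S$ at $t$. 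I would state this as the first lemma-like step.

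**Replicating the alternative decision rule.** Next I would fix an optimal stopping time $\tau$ and action rule for the alternative strategy $S$, yielding some value $\mathbb{E}\left[\max_a \mathbb{E}[u(\tau,a,\omega)\mid\mathcal{F}_\tau]\right]$. The goal is to construct, under $S^*$, a (possibly randomized) stopping time $\tau^*$ and action rule that does at least as well. The idea is: whenever the agent running $S$ would stop at time $\tau$ with a posterior of variance $v = V(n_S(\tau))$, the agent running $S^*$ has \emph{already} reached that same variance $v$ at some earlier time $\tau^* \le \tau$, because $V(n(\cdot))$ is (weakly) decreasing and reaches any level $v$ no later than $S$ does. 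At that earlier time $\tau^*$ the $S^*$-agent can adopt a posterior belief that is distributionally identical to the stopped belief under $S$ (same variance, and by the Gaussian/martingale structure the posterior mean can be matched in distribution), and then take the same action. Since $\tau^* \le \tau$ and the belief at the time of action is the same, Assumption~\ref{assmp:waiting is costly} guarantees $\max_a \mathbb{E}[u(\tau^*,a,\omega)] \ge \max_a \mathbb{E}[u(\tau,a,\omega)]$ pointwise, so the replicating rule yields a weakly higher expected payoff.

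**The main obstacle.** The delicate step is making the ``replication'' construction rigorous as a coupling of stochastic processes rather than a statement about marginal variances. Specifically, I must exhibit a valid stopping time $\tau^*$ adapted to the $S^*$-filtration together with a measurable action rule, such that the \emph{joint} law of $(\tau^*, \text{belief at }\tau^*, \omega)$ dominates the corresponding law under $S$ in the required sense. Because both belief processes are Gaussian with deterministic (under $S^*$) or signal-measurable (under $S$) variances, the posterior mean is in each case a time-changed Brownian motion / Gaussian martingale, and matching a target posterior \emph{distribution} of $\omega$ reduces to matching a one-dimensional variance level and then transporting the mean. I would handle this by defining $\tau^*$ as the first time the $S^*$-variance hits the (random) level that $S$ achieves at its stopping time, and using the optional-sampling/Gaussian-conditioning structure to couple the means; the monotonicity $V(n(t)) \le V(n_S(t))$ established in the first step is exactly what guarantees $\tau^* \le \tau$ almost surely so that the hitting time is well-defined and earlier.

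**Conclusion.** Combining the three steps, for every alternative $(S,\tau)$ there is a replicating $(S^*,\tau^*)$ with weakly higher expected payoff; taking the supremum over alternatives shows $S^*$ is dynamically optimal. I would close by emphasizing, as the excerpt does, that the argument uses normality in an essential way: it is precisely because posterior uncertainty about $\omega$ is summarized by the single scalar $V$ that ``lower variance everywhere'' can be converted into ``weakly better in every decision problem,'' which is the generalization of \citet{Greenshtein}'s comparison to the continuous-time, history-dependent setting.
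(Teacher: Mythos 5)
Your overall strategy is the same as the paper's: establish that the uniformly optimal strategy $S^*$ achieves weakly lower posterior variance than any alternative $S$ at every time, then use the Gaussian structure (posterior uncertainty summarized by a single scalar) plus Assumption \ref{assmp:waiting is costly} to argue that any decision rule under $S$ can be matched under $S^*$ with an earlier effective decision time. You also correctly isolate the delicate step. However, the specific construction you propose for that step has a flaw: defining $\tau^*$ as the first time the $S^*$-variance hits ``the (random) level that $S$ achieves at its stopping time'' does not yield a stopping time adapted to the $S^*$-filtration, because that target level is measurable only with respect to $\mathcal{F}_\tau$ along the $S$-path; a pathwise coupling of the two belief processes is exactly what is hard to write down here. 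The paper sidesteps this entirely via the Dambis--Dubins--Schwartz theorem: each posterior-mean martingale is time-changed by its quadratic variation $v_0 - v_t$ into a Brownian motion indexed by cumulative precision $\nu$, so both problems become optimal stopping problems for a standard Brownian motion in the precision clock with effective decision time $T^*(\nu)$ (using $v_t \geq v^*_t$ and Assumption \ref{assmp:waiting is costly} to bound the $S$-payoff by one with the $S^*$ clock). One then compares the \emph{values} of two optimization problems that are identical in law, rather than coupling sample paths. If you replace your hitting-time construction with this change of variables, your argument becomes the paper's proof.
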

\begin{proof}

Without loss of generality we may assume the prior mean of $\omega$ is zero; otherwise shift $\omega$ by a constant and modify the utility function accordingly. Let $S^*$ be the uniformly optimal attention strategy, and $\{\mathcal{F}^*_{t}\}$ be the induced filtration. Given $S^*$, the optimal stopping rule $\tau$ is a solution to 
\[
\sup_{\tau}~ \mathbb{E}\left[\max_{a}~ \mathbb{E}[u(\tau, a,\omega) \mid \mathcal{F}^*_{\tau}]\right].
\]
Note that the stochastic process of posterior means $M^*_t = \mathbb{E}[\omega \mid \mathcal{F}^*_t]$ is a continuous martingale adapted to the filtration $\{\mathcal{F}^*_{t}\}$, with $M^*_0 = 0$. Moreover, since information is Gaussian, the quadratic variation $\langle M^* \rangle_t$ is simply $v_0 - v^*_t$, where $v^*_t$ is the posterior variance of $\omega$ at time $t$ under the strategy $S^*$, and $v_0$ is the prior variance. By definition of uniform optimality, for each $t$ the random variable $v^*_t$ is deterministic and moreover smallest among possible posterior variances at time $t$. 

Thus, by the Dambis--Dubins--Schwartz Theorem (see Theorem 1.7 in Chapter V of \citet{RevuzYor}), there exists a Brownian motion $(B^*_\nu)_{\nu \in [0, v_0)}$ such that 
\[
B^*_{v_0 - v^*_t} = \mathbb{E}[\omega \mid \mathcal{F}^*_t]. 
\]
This allows us to change variable from the time $t$ to the cumulative precision $v_0 - v^*_t$. 

To formulate the resulting optimization problem, for each $\nu \in [0, v_0)$ we denote by $T^*(\nu)$ the time $t$ such that $v^*_t = v_0 - \nu$; $T^*$ is a deterministic and increasing function of $\nu$. Then, under the attention strategy $S^*$, the agent's optimal payoff can be rewritten as
\begin{equation}\label{eq:payoff under uniform optimality} 
\sup_{\tau}~ \mathbb{E}\left[\max_{a}~ \mathbb{E}[u(\tau, a,\omega) \mid \mathcal{F}^*_{\tau}]\right] = \sup_{\nu}~ \mathbb{E}\left[\max_{a}~ \mathbb{E}[u(T^*(\nu), a,\omega) \mid B^*_{\nu}]\right].
\end{equation}
In other words, instead of optimizing over stopping times $\tau$ adapted to $\{\mathcal{F}^*_{t}\}$, we can think of the agent choosing an optimal $\nu = v_0 - v^*_t$ adapted to the Brownian motion $B^*$.

We will show this payoff is greater than the optimal payoff under any other attention strategy $S$. To do this, let $\{\mathcal{F}_{t}\}$ be the induced filtration under $S$. Similar to the above, we consider the stochastic process $M_t = \mathbb{E}[\omega \mid \mathcal{F}_{t}]$, adapted to $\{\mathcal{F}_{t}\}$. Applying the Dambis-Dubins-Schwartz Theorem again, there exists a Brownian motion $(B_\nu)_{\nu \in [0, v_0)}$ such that 
\[
B_{v_0 - v_t} = \mathbb{E}[\omega \mid \mathcal{F}_t].
\]
Here $v_t$ is the posterior variance under strategy $S$, which is in general random but always satisfies $v_t \geq v^*_t$. Note also that $B$ may not be the same process as $B^*$. 

Observe that for any $t \geq 0$ we have $t = T^*(v_0 - v^*_t) \geq T^*(v_0 - v_t)$. Thus the agent's payoff under strategy $S$ is bounded above by 
\[
\sup_{\tau}~ \mathbb{E}\left[\max_{a}~ \mathbb{E}[u(\tau, a,\omega) \mid \mathcal{F}_{\tau}]\right] \leq \sup_{\tau}~ \mathbb{E}\left[\max_{a}~ \mathbb{E}[u(T^*(v_0 - v_\tau), a,\omega) \mid \mathcal{F}_{\tau}]\right],
\]
where we used Assumption \ref{assmp:waiting is costly}. Now we can change variable again from $\tau$ to $\nu = v_0 - v_\tau$, and rewrite the payoff as 
\begin{equation}\label{eq:payoff under arbitrary strategy}
\sup_{\nu}~ \mathbb{E}\left[\max_{a}~ \mathbb{E}[u(T^*(\nu), a,\omega) \mid B_{\nu}]\right]
\end{equation}
This is the same as the RHS of (\ref{eq:payoff under uniform optimality}), since $B$ and $B^*$ are both Brownian motions. Hence the payoff under $S$ does not exceed the payoff under $S^*$, completing the proof. 
\end{proof}

We also have a simple converse result:
\begin{lemma}\label{lemma:uniform optimal converse}
Fixing $\Sigma$ and $\alpha$. Suppose an information acquisition strategy is optimal for all payoff functions $u(\tau, a, \omega)$ that satisfy Assumption \ref{assmp:waiting is costly}, then it is uniformly optimal. 
\end{lemma}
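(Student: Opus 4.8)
The plan is to reverse-engineer the family of payoff functions: for each target time $t_0 > 0$ I would build a decision problem whose only optimal behavior is to reach time $t_0$ with minimal posterior variance, so that optimality of the given strategy $S$ across all such problems pins down its cumulative attention vector at every time. Fix $t_0 > 0$ and consider the quadratic-prediction payoff $u(\tau,a,\omega) = -(a-\omega)^2 - c(\tau)$ with action set $A = \mathbb{R}$ and waiting cost $c(\tau) = 0$ for $\tau \le t_0$ and $c(\tau) = C$ for $\tau > t_0$, where $C$ is chosen larger than the prior variance $v_0 = V(0)$. Since $\max_a \mathbb{E}[-(a-\omega)^2 \mid \text{belief}] = -\Var(\omega \mid \text{belief})$ and $c$ is non-decreasing, Assumption \ref{assmp:waiting is costly} is satisfied. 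Because posterior variance depends only on the cumulative attention vector through $V$ (equation \eqref{eq:V(q)}), and $V$ is decreasing while any strategy obeys $\sum_i q_i(t) \le t$ (the corollary following Lemma \ref{lemma:V partials}), one has $\Var(\omega \mid \mathcal{F}_t) = V(q(t)) \ge V(n(t)) \ge V(n(t_0))$ pathwise for every $t \le t_0$, using that $V(n(\cdot))$ is non-increasing.

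Next I would compute that the overall optimal value of this problem equals $-V(n(t_0))$. It is an upper bound: on $\{\tau \le t_0\}$ the integrand is $-V(q(\tau)) \le -V(n(t_0))$ by the display above, while on $\{\tau > t_0\}$ it is at most $-C < -v_0 \le -V(n(t_0))$. It is attained by the feasible deterministic strategy $\beta(s) \equiv n(t_0)/t_0$ stopping at $t_0$ (feasible since $n(t_0) \ge 0$ and $\sum_i n_i(t_0) = t_0$, so the allocation sums to one). I would then argue that under \emph{any} strategy $S$ the constant rule $\tau \equiv t_0$ is optimal for this payoff. Any rule stopping after $t_0$ with positive probability is strictly improved by capping it at $t_0$, since $-\Var \ge -v_0 > -C \ge -\Var - C$; hence an optimal rule has $\tau \le t_0$. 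For such $\tau$ we have $\mathcal{F}_\tau \subseteq \mathcal{F}_{t_0}$, and the law of total variance (the conditional mean given the finer $\mathcal{F}_{t_0}$ is a weakly better $L^2$-predictor of $\omega$) gives $\mathbb{E}[\Var(\omega \mid \mathcal{F}_{t_0})] \le \mathbb{E}[\Var(\omega \mid \mathcal{F}_\tau)]$, so stopping at $t_0$ is weakly best. Therefore $\sup_\tau \mathrm{Payoff}(S,\tau) = -\mathbb{E}[V(q(t_0))]$.

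Combining these facts finishes the fixed-$t_0$ step. If $S$ is optimal for every admissible payoff, then in particular $-\mathbb{E}[V(q(t_0))] = -V(n(t_0))$, i.e.\ $\mathbb{E}[V(q(t_0)) - V(n(t_0))] = 0$. Since the integrand is pathwise nonnegative, $V(q(t_0)) = V(n(t_0))$ almost surely, and uniqueness of the $t_0$-optimal vector (Lemma \ref{lemma:t-opt unique}) forces $q(t_0) = n(t_0)$ almost surely. I would then run this for every $t_0$ in a countable dense subset of $(0,\infty)$; since $t \mapsto q(t) = \int_0^t \beta(s)\,ds$ and $t \mapsto n(t)$ are continuous, the equalities extend to all $t$ simultaneously, almost surely. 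Hence the cumulative attention process is deterministic and coincides with $n(\cdot)$, which is exactly the definition of a uniformly optimal strategy.

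The main obstacle is the verification that $\tau \equiv t_0$ is genuinely optimal under an arbitrary, possibly history-dependent, strategy $S$: this needs both the cost-domination argument ruling out late stopping and the law-of-total-variance argument ruling out early stopping, and it is what lets me collapse the choice of stopping time and isolate $\mathbb{E}[V(q(t_0))]$. The only remaining subtlety is measure-theoretic — upgrading the almost-sure equality $q(t_0) = n(t_0)$ from each fixed $t_0$ to all $t_0$ at once, which the path-continuity plus countable-dense-set argument handles.
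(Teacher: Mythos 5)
Your proposal is correct and is essentially the paper's own argument: the paper's proof of Lemma \ref{lemma:uniform optimal converse} uses exactly the payoff $u(\tau,a,\omega)=-(a-\omega)^2-c(\tau)$ with $c$ vanishing up to $t$ and very large afterwards, forcing stopping at $t$ and hence $t$-optimality of the induced attention vector, then varies $t$. You simply supply the details the paper leaves implicit (verifying Assumption \ref{assmp:waiting is costly}, the cost-domination and law-of-total-variance arguments pinning the stopping time at $t_0$, uniqueness via Lemma \ref{lemma:t-opt unique}, and the dense-set/continuity step), all of which check out.
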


\begin{proof}
Take an arbitrary time $t$ and consider the payoff function with $u(\tau, a, \omega) = - (a-\omega)^2 - c(\tau)$, where $c(\tau) = 0$ for $\tau \leq t$ and $c(\tau)$ very large for $\tau > t$. Then the agent's optimal stopping rule is to stop exactly at time $t$. Since his information acquisition strategy is optimal for this payoff function, the induced cumulative attention vector must achieve $t$-optimality. Varying $t$ yields the result. 
\end{proof}

\subsection{Sufficient Condition for Assumption \ref{assmp:diagonal dominance}} \label{appx:2K-3 implies diagonal dominance}

\begin{lemma} 
Suppose the prior covariance matrix $\Sigma$ satisfies $\Sigma_{ii} \geq (2K-3) \cdot \vert \Sigma_{ij} \vert$ for all $i \neq j$. Then its inverse matrix satisfies $[\Sigma^{-1}]_{ii} \geq (K-1) \cdot \vert [\Sigma^{-1}]_{ij} \vert$ for all $i \neq j$, and is thus diagonally-dominant. 
\end{lemma}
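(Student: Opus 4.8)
The plan is to first reduce the diagonal-dominance conclusion to the stated per-entry bound, and then to prove that bound by a column-wise extremal argument applied to the defining identity $\Sigma \cdot \Sigma^{-1} = I$. For the reduction: row $i$ of $\Sigma^{-1}$ has exactly $K-1$ off-diagonal entries, so once I know $[\Sigma^{-1}]_{ii} \geq (K-1)\,\lvert[\Sigma^{-1}]_{ij}\rvert$ for every $j \neq i$, summing over those $K-1$ indices gives $\sum_{j \neq i}\lvert[\Sigma^{-1}]_{ij}\rvert \leq (K-1)\cdot \tfrac{1}{K-1}[\Sigma^{-1}]_{ii} = [\Sigma^{-1}]_{ii}$, which is exactly diagonal dominance. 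Hence it suffices to establish the per-entry bound, and the whole argument concerns that.

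For the main step I would fix a row index $i$ and work with the $i$-th column $y := \Sigma^{-1}e_i$ of the inverse, i.e.\ the unique solution of $\Sigma y = e_i$. Since $\Sigma$ (and therefore $\Sigma^{-1}$) is symmetric and positive definite, $y$ has entries $y_i = [\Sigma^{-1}]_{ii} > 0$ and $y_l = [\Sigma^{-1}]_{il}$ for $l \neq i$, so the goal becomes $\max_{l \neq i}\lvert y_l\rvert \leq \tfrac{1}{K-1}\,y_i$. I would set $s := \max_{l\neq i}\lvert y_l\rvert$ and let $q \neq i$ be an index attaining it. The point is that the $q$-th scalar equation of $\Sigma y = e_i$ is \emph{homogeneous} (its right-hand side is $0$ because $q \neq i$); isolating the diagonal term gives $\Sigma_{qq}\,y_q = -\Sigma_{qi}\,y_i - \sum_{k\neq q,i}\Sigma_{qk}\,y_k$. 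Taking absolute values and feeding in the hypothesis $\lvert\Sigma_{qk}\rvert \leq \tfrac{1}{2K-3}\Sigma_{qq}$ (one term $\tfrac{1}{2K-3}\Sigma_{qq}\,y_i$ coming from $k=i$, and $K-2$ terms each bounded by $\tfrac{1}{2K-3}\Sigma_{qq}\,s$) yields $\Sigma_{qq}\,s \leq \tfrac{1}{2K-3}\Sigma_{qq}\,y_i + \tfrac{K-2}{2K-3}\Sigma_{qq}\,s$. Cancelling $\Sigma_{qq}>0$ and clearing denominators collapses the coefficient of $s$ to $2K-3-(K-2)=K-1$, so $(K-1)\,s \leq y_i$, which is the per-entry bound.

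The main obstacle I anticipate is getting the constant right, and the reason the hypothesis is stated with $2K-3$ becomes clear here: one must compare each off-diagonal entry to the diagonal entry \emph{of its own column}, not to the largest diagonal entry overall. A cruder estimate that bounds off-diagonal entries against $\max_k[\Sigma^{-1}]_{kk}$ loses a factor and fails for $K \geq 3$; it is precisely the homogeneous equation indexed by the extremal coordinate $q$ that lets the diagonal term $y_i$ appear on the right with a single small coefficient while the other $K-2$ off-diagonal coordinates are each dominated by $s$, forcing the arithmetic $2K-3-(K-2)=K-1$.

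Finally I would record the routine boundary checks: the argument uses $y_i = [\Sigma^{-1}]_{ii}>0$ (positivity of the diagonal of a positive-definite matrix) so that $\lvert y_i\rvert = y_i$; the case $s=0$ is trivial; and when $K=2$ the sum over $k \neq q,i$ is empty, reducing the estimate to $s \leq y_i$, consistent with $K-1=1$. Combining the per-entry bound with the first-paragraph summation then gives diagonal dominance of $\Sigma^{-1}$.
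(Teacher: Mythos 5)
Your proof is correct and follows essentially the same route as the paper's: both arguments apply the homogeneous equation $\sum_k \Sigma_{qk}[\Sigma^{-1}]_{ik}=0$ (an off-diagonal entry of $\Sigma\Sigma^{-1}=I$) at the extremal off-diagonal index $q$, bound the $K-2$ remaining off-diagonal terms by the maximum $s$ and the single term involving $[\Sigma^{-1}]_{ii}$ by the $\tfrac{1}{2K-3}$ hypothesis, and arrive at $(K-1)s\leq[\Sigma^{-1}]_{ii}$. The only difference is cosmetic (which term you isolate before applying the triangle inequality), and your explicit reduction from the per-entry bound to diagonal dominance matches what the paper leaves implicit.
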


\begin{proof}
By symmetry, we can focus on $i = 1$. Let $s_j = [\Sigma^{-1}]_{1j}$ for $1 \leq j \leq K$, and without loss assume $s_2$ has the greatest absolute value among $s_2, \dots, s_K$. It suffices to show 
\[
s_1 \geq (K-1) \vert s_2 \vert.
\]
From $\Sigma^{-1} \cdot \Sigma = I$ we have $\sum_{j = 1}^{K} [\Sigma^{-1}]_{1j} \cdot \Sigma_{j2} = 0$. Thus $\sum_{j = 1}^{K} s_j \cdot \Sigma_{2j} = 0$ because $\Sigma_{j2} = \Sigma_{2j}$. Rearranging yields
\[
\vert s_1 \cdot \Sigma_{21} \vert = \vert s_2 \cdot \Sigma_{22} + \sum_{j > 2} s_j \cdot \Sigma_{2j} \vert \geq \vert s_2 \cdot \Sigma_{22} \vert - \sum_{j > 2} \vert s_j \cdot \Sigma_{2j} \vert \geq \vert s_2 \cdot \Sigma_{22} \vert - \sum_{i > 2} \frac{\vert s_2 \cdot \Sigma_{22}\vert}{2K-3},
\]
where the last inequality uses $\vert s_j \vert \leq \vert s_2 \vert$ and $\vert \Sigma_{2j} \vert \leq \frac{1}{2K-3}\vert \Sigma_{22} \vert$ for $j > 2$. The above inequality simplifies to
\[
\vert s_1 \cdot \Sigma_{21} \vert \geq \frac{K-1}{2K-3} \cdot \vert s_2 \cdot \Sigma_{22}\vert.
\]
And since $\Sigma_{21} \leq \frac{1}{2K-3}\vert \Sigma_{22} \vert$, we conclude that $\vert s_1 \vert \geq (K-1) \vert s_2 \vert$ as desired. Note that $s_1 = [\Sigma^{-1}]_{11}$ is necessarily positive, thus $s_1 \geq (K-1) \vert s_2 \vert$. 
\end{proof}

\section{Proof of Theorem \ref{thm:K=2}}\label{appx:theorem K=2}

Define $cov_1, cov_2$ as in the statement of Theorem \ref{thm:K=2}, and define $x_i = \alpha_i \det(\Sigma)$ to ease notation.

Given a cumulative attention vector $q$, let $Q$ be a shorthand for the diagonal matrix $\diag(q)$. Then by direct computation, we have
\begin{align*}
\gamma ~:&= \left(\Sigma^{-1} + Q\right)^{-1} \cdot \alpha = \left(\Sigma^{-1} \cdot (I + \Sigma Q)\right)^{-1} \cdot \alpha \\
&= (I+\Sigma Q)^{-1} \cdot \Sigma \cdot \alpha = (I+\Sigma Q)^{-1} \cdot \left( \begin{array}{c} cov_1 \\ cov_2 \end{array} \right) \\
&= \frac{1}{\det(I + \Sigma Q)} \left( \begin{array}{cc} 1+q_2\Sigma_{22} & -q_2\Sigma_{12} \\ -q_1\Sigma_{21} & 1 + q_1\Sigma_{11} \end{array} \right) \cdot \left( \begin{array}{c} cov_1 \\ cov_2 \end{array} \right) = \frac{1}{\det(I + \Sigma Q)} \left( \begin{array}{c} x_1q_2 + cov_1 \\ x_2q_1 + cov_2 \end{array} \right).
\end{align*}
By Lemma \ref{lemma:V partials}, this implies the marginal values of the two sources are given by:
\begin{equation}\label{eq:K=2partials}
\partial_1 V(q_1, q_2) = \frac{-(x_1q_2 + cov_1)^2}{\det^2(I + \Sigma Q)}; \quad \quad \quad
\partial_2 V(q_1, q_2) = \frac{-(x_2q_1 + cov_2)^2}{\det^2(I + \Sigma Q)}.
\end{equation}

Note that Assumption \ref{assmp:K=2} translates into $cov_1 + cov_2 \geq 0$. Under this assumption, we will characterize the $t$-optimal vector $(n_1(t), n_2(t))$ and show it is increasing over time. Without loss assume $cov_1 \geq cov_2$, then $cov_1$ is non-negative. Let $t_1^* = \frac{cov_1-cov_2}{x_2}$. Then when $q_1 + q_2 \leq t_1^*$ we always have 
\[
x_1q_2 + cov_1 \geq cov_1 \geq x_2q_1 + cov_2,
\]
since $x_1q_2 \geq 0$ and $x_2q_1 \leq x_2(q_1+q_2) \leq x_2t_1^* = cov_1 - cov_2$. We also have
\[
x_1q_2 + cov_1 \geq -(x_2q_1 + cov_2),
\]
since $x_1q_2, x_2q_1 \geq 0$ and by assumption $cov_1 + cov_2 \geq 0$. Thus, (\ref{eq:K=2partials}) implies that $\partial_1 V(q_1, q_2) \leq \partial_2 V(q_1, q_2)$ at such attention vectors $q$. So for any budget of attention $t \leq t_1^*$, putting all attention to source $1$ minimizes the posterior variance $V$. That is, $n(t) = (t,0)$ for $t \leq t_1^*$. 

For $t > t_1^*$, observe that (\ref{eq:K=2partials}) implies $\partial_1V(0,t) < \partial_2V(0,t)$ as well as $\partial_1V(t,0) > \partial_2V(t,0)$. Thus the $t$-optimal vector $n(t)$ is interior (i.e., $n_1(t)$ and $n_2(t)$ are both strictly positive). The first-order condition $\partial_1V = \partial_2V$, together with (\ref{eq:K=2partials}) and the budget constraint $n_1(t) + n_2(t) = t$, yields the solution 
\[
n(t) = \left(\frac{x_1 t + cov_1-cov_2}{x_1+x_2}, \frac{x_2 t - cov_1 + cov_2}{x_1 + x_2}\right).
\]
Hence $n(t)$ is indeed increasing in $t$. The instantaneous attention allocations $\beta(t)$ are the time-derivatives of $n(t)$, and they are easily seen to be described by Theorem \ref{thm:K=2}. In particular, the long-run attention allocation to source $i$ is $\frac{x_i}{x_1 + x_2}$, which simplifies to $\frac{\alpha_i}{\alpha_1 + \alpha_2}$. This completes the proof.

\section{Proof of Theorem \ref{thm:general}}\label{appx:theorem general}

We will first prove the result under Assumption \ref{assmp:diagonal dominance}. The proof is similar under the alternative Assumption \ref{assmp:Substitutes} or \ref{assmp:Complements}, and is presented at the end. 

Given Lemma \ref{lemma:Greenshtein}, it is sufficient to show that the $t$-optimal vector $n(t)$ is weakly increasing in $t$, that its time-derivative is locally constant, and that the time-derivative has an expanding support set (as described in the theorem). The proof is divided into several sections below. 

\subsection{Technical Property of $\gamma$}
We will use the following lemma regarding the marginal values of different sources:
\begin{lemma}\label{lemma:gamma positive} 
Suppose $\Sigma^{-1}$ is diagonally-dominant. Given an arbitrary attention vector $q$, define $\gamma$ as in Lemma \ref{lemma:V partials} and denote by $B$ the set of indices $i$ such that $\vert \gamma_i \vert$ is maximized. Then $\gamma_i$ is the \emph{same positive number} for every $i \in B$. 
\end{lemma}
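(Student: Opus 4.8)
The plan is to read off the claim directly from the linear system that defines $\gamma$. Writing $M = \Sigma^{-1} + \diag(q)$, the definition $\gamma = (\Sigma^{-1}+\diag(q))^{-1}\alpha$ is equivalent to $M\gamma = \alpha$, i.e.\ for each coordinate $i$,
\[
M_{ii}\gamma_i + \sum_{j\neq i} M_{ij}\gamma_j = \alpha_i,
\]
where $M_{ii} = [\Sigma^{-1}]_{ii} + q_i$ and $M_{ij} = [\Sigma^{-1}]_{ij}$ for $j \neq i$. The two structural inputs I would rely on are: (i) each payoff weight $\alpha_i$ is strictly positive, as assumed in the model; and (ii) $M$ is diagonally dominant. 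Point (ii) follows because $\Sigma^{-1}$ is diagonally dominant by Assumption \ref{assmp:diagonal dominance}, and since every attention entry $q_i \geq 0$ the diagonal entries only increase, so $M_{ii} \geq [\Sigma^{-1}]_{ii} \geq \sum_{j\neq i}|[\Sigma^{-1}]_{ij}| = \sum_{j\neq i}|M_{ij}|$.

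Next I would set $\gamma^* = \max_i |\gamma_i|$. Since $M\gamma = \alpha \neq 0$ we have $\gamma \neq 0$, so $\gamma^* > 0$, and $B = \{i : |\gamma_i| = \gamma^*\}$ is nonempty. Every $i \in B$ satisfies $|\gamma_i| = \gamma^*$, hence $\gamma_i = \gamma^*$ or $\gamma_i = -\gamma^*$; the heart of the argument is to rule out the negative sign. Suppose toward a contradiction that $i \in B$ has $\gamma_i = -\gamma^*$. Then
\[
\alpha_i = -M_{ii}\gamma^* + \sum_{j\neq i}M_{ij}\gamma_j \leq -M_{ii}\gamma^* + \gamma^*\sum_{j\neq i}|M_{ij}| \leq -M_{ii}\gamma^* + \gamma^* M_{ii} = 0,
\]
where the first inequality uses $|\gamma_j| \leq \gamma^*$ for all $j$ and the second uses diagonal dominance of $M$. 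This contradicts $\alpha_i > 0$. Therefore every $i \in B$ has $\gamma_i = +\gamma^*$, a single positive number, which is exactly the claim.

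I do not anticipate a genuine obstacle here; the argument is short. The only points demanding care are the two that make the signs work: first, verifying that diagonal dominance passes from $\Sigma^{-1}$ to $M = \Sigma^{-1} + \diag(q)$, which is precisely where nonnegativity of the attention vector $q$ is used; and second, observing that it is the \emph{strict} positivity of $\alpha$ (not merely $\alpha \geq 0$) that forces the maximizing coordinates to take the positive sign rather than the negative one. I would also note in passing, for context, that by Lemma \ref{lemma:V partials} we have $\partial_i V = -\gamma_i^2$, so $B$ is exactly the set of attributes with the most negative marginal value of attention; the lemma thus says these highest-value attributes always share a common positive $\gamma$, which is what subsequently pins down the optimal mixture at each stage.
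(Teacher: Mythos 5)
Your proof is correct and follows essentially the same route as the paper's: both assume a maximizing coordinate has the wrong sign and derive a contradiction with $\alpha_i > 0$ from the diagonal dominance of $\Sigma^{-1} + \diag(q)$, which is preserved because $q \geq 0$ only inflates the diagonal. The only cosmetic difference is that you normalize to $\gamma_i = -\gamma^*$ while the paper argues directly from $\gamma_i \leq 0$; the inequality chain is the same.
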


\begin{proof}
We use $Q$ to denote $\diag(q)$. Since $(\Sigma^{-1}+Q)^{-1} \alpha = \gamma$, we equivalently have 
\[
\alpha = (\Sigma^{-1}+Q) \gamma.
\]
Suppose for contradiction that $\gamma_i \leq 0$ for some $i \in B$. Using the above vector equality for the $i$-th coordinate, we have 
\[
0 < \alpha_i = \sum_{j = 1}^{K} [\Sigma^{-1}+Q]_{ij} \cdot \gamma_j. 
\]
Rearranging, we then have
\[
[\Sigma^{-1}+Q]_{ii} \cdot (-\gamma_i) < \sum_{j \neq i } [\Sigma^{-1}+Q]_{ij} \cdot \gamma_j \leq \sum_{j \neq i } \vert [\Sigma^{-1}+Q]_{ij} \vert \cdot \vert \gamma_j \vert,
\]
which is impossible because $-\gamma_i \geq \vert \gamma_j \vert$ for each $j \neq i$ and $[\Sigma^{-1}+Q]_{ii} \geq \sum_{j \neq i }\vert [\Sigma^{-1}+Q]_{ij} \vert$. Thus $\gamma_i$ is positive for any $i \in B$. The result that these $\gamma_i$ are the same follows from the definition that their absolute values are maximal.
\end{proof}

\subsection{The Last Stage}
To prove Theorem \ref{thm:general}, we first consider those times $t$ when each of the $K$ sources has been sampled. The following lemma shows that after any such time, it is optimal to maintain a constant attention allocation proportional to $\alpha$.

\begin{lemma}\label{lemma:final stage} 
Suppose $\Sigma^{-1}$ is diagonally-dominant. If at some time $\underline{t}$, the $\underline{t}$-optimal vector satisfies $\partial_1V(n(\underline{t})) = \dots = \partial_KV(n(\underline{t}))$, then the $t$-optimal vector at each time $t \geq \underline{t}$ is given by 
\[
n(t) = n(\underline{t}) + \frac{t-\underline{t}}{\alpha_1+\cdots+\alpha_K} \cdot \alpha.\footnote{That is, $n_i(t) = n_i(\underline{t}) + \frac{t}{\alpha_1+\cdots+\alpha_K} \cdot \alpha_i$ for each $i$.}
\]
\end{lemma}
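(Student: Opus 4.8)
The plan is to verify directly that the proposed linear path is feasible and stationary for $V$, and then invoke convexity to conclude that it is the $t$-optimal vector. Write $A := \alpha_1 + \cdots + \alpha_K$ and let $\mathbf{1}$ denote the all-ones vector. First I would extract the structure of $\gamma$ at the base point: the hypothesis $\partial_1 V(n(\underline{t})) = \cdots = \partial_K V(n(\underline{t}))$ means, by $\partial_i V = -\gamma_i^2$ (Lemma \ref{lemma:V partials}), that $\vert \gamma_1 \vert = \cdots = \vert \gamma_K \vert$ at $q = n(\underline{t})$. Taking $B = \{1, \dots, K\}$ in Lemma \ref{lemma:gamma positive} (which applies since $\Sigma^{-1}$ is diagonally dominant), all these coordinates equal a common positive value, so $\gamma(n(\underline{t})) = c\,\mathbf{1}$ for some $c > 0$. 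Rewriting the defining relation $(\Sigma^{-1} + \diag(n(\underline{t})))\gamma = \alpha$ and using $\diag(n(\underline{t}))\mathbf{1} = n(\underline{t})$ yields the identity $\Sigma^{-1}\mathbf{1} = \alpha/c - n(\underline{t})$, which is the key piece of information I would carry forward.

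Next I would introduce the candidate path $q(t) = n(\underline{t}) + s\,\alpha$ with $s := (t - \underline{t})/A$ for $t \geq \underline{t}$. Feasibility is immediate: each coordinate is a sum of nonnegative terms so $q(t) \geq 0$, and $\sum_i q_i(t) = \underline{t} + sA = t$. The crucial step is to show that $\gamma$ remains proportional to $\mathbf{1}$ along this path. I would do this by positing $\gamma(q(t)) = d\,\mathbf{1}$ and checking the linear system $(\Sigma^{-1} + \diag(q(t)))(d\mathbf{1}) = \alpha$. Expanding and substituting both $\diag(q(t))\mathbf{1} = q(t) = n(\underline{t}) + s\alpha$ and the identity $\Sigma^{-1}\mathbf{1} = \alpha/c - n(\underline{t})$ from the previous step, the $n(\underline{t})$ terms cancel and the equation collapses to $d(1/c + s)\,\alpha = \alpha$. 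Since $\alpha \neq 0$, this is solved by $d = c/(1 + cs) > 0$, confirming $\gamma(q(t)) = d\,\mathbf{1}$ and hence $\partial_1 V(q(t)) = \cdots = \partial_K V(q(t)) = -d^2$.

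Finally I would conclude optimality by convexity. For $t > \underline{t}$ every coordinate of $q(t)$ is strictly positive (as $\alpha_i > 0$), and at this point all partial derivatives of $V$ are equal. Since $V$ is convex, for any competing feasible vector $q'$ with $\sum_i q'_i = t$ we have $V(q') \geq V(q(t)) + \nabla V(q(t)) \cdot (q' - q(t))$; because the gradient equals the constant $-d^2$ times $\mathbf{1}$ and $\sum_i (q'_i - q_i(t)) = 0$, the inner product vanishes, giving $V(q') \geq V(q(t))$. Thus $q(t)$ minimizes $V$ over the feasible simplex, and by uniqueness of the $t$-optimal vector (Lemma \ref{lemma:t-opt unique}) we obtain $n(t) = q(t) = n(\underline{t}) + \frac{t-\underline{t}}{A}\,\alpha$, as claimed.

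The main obstacle is the proportionality computation in the middle paragraph: everything hinges on recognizing that moving in the direction $\alpha$ preserves the property $\gamma \propto \mathbf{1}$, which in turn relies on the specific identity relating $\Sigma^{-1}\mathbf{1}$ to $\alpha$ and $n(\underline{t})$ derived from the base-point condition. Once this algebraic fact is in hand, feasibility and the convexity argument are routine.
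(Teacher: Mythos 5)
Your proof is correct. It shares the same skeleton as the paper's argument---both invoke Lemma \ref{lemma:gamma positive} to get $\gamma(n(\underline{t})) = c\mathbf{1}$ with $c>0$, and both close with the convexity/KKT observation that equal partial derivatives on the full simplex imply optimality---but your middle step is genuinely different and, in one respect, tighter. The paper proceeds infinitesimally: it computes the directional derivative of each $\partial_i V$ along $\alpha$ via the Hessian, finds $\delta_i = 2\gamma_1^3$ independent of $i$, and concludes that the equalities ``continue to hold,'' which implicitly requires an ODE-type continuation argument (the computation must be re-applied at every point of the path, using Lemma \ref{lemma:gamma positive} again at each instant). You instead verify the claim globally by guess-and-check: the identity $\Sigma^{-1}\mathbf{1} = \alpha/c - n(\underline{t})$ lets you solve the linear system $(\Sigma^{-1}+\diag(q(t)))\,d\mathbf{1}=\alpha$ in closed form, yielding $\gamma(q(t)) = \frac{c}{1+cs}\mathbf{1}$ directly, with no passage from local to global needed. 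What your route buys is an explicit formula for the common marginal value along the path (and hence for $V'$), and a proof that avoids the continuity gap; what the paper's route buys is a template that generalizes to the earlier stages (Lemma \ref{lemma:earlier stages}), where the relevant direction is $\tilde\alpha$ on a sub-block and no such clean closed form for $\gamma$ off the active set is available. One cosmetic remark: your final convexity step does not actually need strict positivity of the coordinates of $q(t)$, since the subgradient inequality plus $\nabla V(q(t)) \propto \mathbf{1}$ already kills the linear term for any feasible competitor; but including it does no harm.
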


\begin{proof}
Consider increasing $n(\underline{t})$ by a vector proportional to $\alpha$. If we can show the equalities $\partial_1V = \dots = \partial_KV$ are preserved, then the resulting cumulative attention vector must be $t$-optimal. This is because for the convex function $V$, a vector $q$ minimizes $V(q)$ subject to $q_i \geq 0$ and $\sum_{i} q_i = t$ \emph{if and only if} it satisfies the KKT first-order conditions. 

We check the equalities $\partial_1V = \dots = \partial_KV$ by computing the marginal changes of each $\partial_iV$ when the attention vector $q = n(\underline{t})$ increases in the direction of $\alpha$. Denoting $\diag(q)$ by $Q$ to save notation, this marginal change equals
\[
\delta_i:= \sum_{j = 1}^{K} \partial_{ij} V \cdot \alpha_j = 2\sum_{j=1}^{K} \gamma_i \gamma_j \left[(\Sigma^{-1}+Q)^{-1}\right]_{ij} \cdot \alpha_j
\]
by Lemma \ref{lemma:V partials}. Applying Lemma \ref{lemma:gamma positive}, we have $\gamma_1 = \dots = \gamma_K$. Thus the above simplifies to
\[
\delta_i= 2\gamma_1^2 \sum_{j = 1}^{K}\left[(\Sigma^{-1}+Q)^{-1}\right]_{ij} \cdot \alpha_j = 2\gamma_1^2\gamma_i = 2\gamma_1^3.
\]
Hence $\partial_1V = \dots = \partial_KV$ continues to hold, completing the proof.  
\end{proof}

\subsection{Earlier Stages}
In general, we need to show that even when the agent is choosing from a subset of the sources, the $t$-optimal vector $n(t)$ is still increasing over time. This is guaranteed by the following lemma, which says that the agent optimally attends to those sources that maximize the marginal reduction of $V$, until a new source becomes another maximizer. For ease of exposition we work under the stronger assumption that $\Sigma^{-1}$ is \emph{strictly} diagonally-dominant, in the sense that $[\Sigma^{-1}]_{ii} > \sum_{j \neq i} \vert [\Sigma^{-1}]_{ij} \vert$ for all $ 1 \leq i \leq K.$ Later we discuss how the lemma should be modified without this strictness. 

\begin{lemma}\label{lemma:earlier stages}
Suppose $\Sigma^{-1}$ is strictly diagonally-dominant. Choose any time $\underline{t}$ and denote 
\[
B = \textstyle \argmin_{i} \partial_iV(n(\underline{t})) = \argmax_{i} \vert \gamma_i \vert. 
\] 
Then there exists $\beta \in \Delta^{K-1}$ supported on $B$ and $\overline{t} > \underline{t}$ such that $n(t) = n(\underline{t}) + (t-\underline{t})\cdot \beta$ at times $t \in [\underline{t}, \overline{t}]$. 

The vector $\beta$ depends only on $\Sigma, \alpha$ and $B$. The time $\overline{t}$ is the earliest time after $\underline{t}$ at which $\argmin_{i} \partial_i V(n(\overline{t}))$ is a strict superset of $B$. Moreover, when $\vert B \vert < K$ it holds that $\overline{t} < \infty$. Whereas when $\vert B \vert = K$, it holds that $\overline{t} = \infty$ and $\beta$ is proportional to $\alpha$.
\end{lemma}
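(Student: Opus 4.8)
The plan is to verify that the proposed straight-line path $n(t)=n(\underline t)+(t-\underline t)\beta$ is the (unique) $t$-optimal vector for $t\in[\underline t,\overline t]$ by checking the KKT conditions of the convex program $\min_{q\ge0,\ \sum_i q_i=t}V(q)$; since $V$ is convex and the $t$-optimal vector is unique (Lemma \ref{lemma:t-opt unique}), any KKT point is \emph{the} minimizer. The KKT conditions say exactly that all sources receiving positive attention share the minimal (most negative) value of $\partial_iV=-\gamma_i^2$, while sources with zero attention have a weakly larger $\partial_iV$. The first observation I would record is that at the $t$-optimal vector $n(\underline t)$, the support is contained in $B=\argmin_i\partial_iV(n(\underline t))=\argmax_i|\gamma_i|$; hence $n_j(\underline t)=0$ for every $j\notin B$, i.e.\ the inactive sources have received \emph{no} cumulative attention. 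By Lemma \ref{lemma:gamma positive}, $\gamma_i$ equals a common positive value $g$ for all $i\in B$.

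Next I would identify the direction $\beta$. Writing $Q=\diag(q)$, $P=\Sigma^{-1}+Q$ and $M=P^{-1}$, and differentiating $P\gamma=\alpha$ along a direction $\beta$ supported on $B$ gives $\dot\gamma=-g\,M\beta$. To keep the coordinates of $\gamma_B$ equal (so that the common minimality on $B$ persists), I need $(M\beta)_i$ constant over $i\in B$, i.e.\ $M_{BB}\beta_B\propto\mathbf 1_B$, so $\beta_B\propto M_{BB}^{-1}\mathbf 1_B$. Using the block-inverse identity $M_{BB}^{-1}M_{BB^c}=-P_{BB^c}P_{B^cB^c}^{-1}$ together with $\gamma_B=g\mathbf 1_B$ and the fact that $q_{B^c}=0$ (so $P_{BB^c}=[\Sigma^{-1}]_{BB^c}$ and $P_{B^cB^c}=[\Sigma^{-1}]_{B^cB^c}$), this simplifies to
\[
\beta_B\ \propto\ \alpha_B-[\Sigma^{-1}]_{BB^c}\big([\Sigma^{-1}]_{B^cB^c}\big)^{-1}\alpha_{B^c},
\]
which depends only on $\Sigma,\alpha$ and $B$ — the weights of the projection of $\omega$ onto $\mathrm{span}\{\theta_i:i\in B\}$. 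A short self-consistency check shows this \emph{fixed} direction continues to work along the whole stage: since attention is added only on $B$, $M_{BB}(t)^{-1}=M_{BB}(\underline t)^{-1}+(t-\underline t)\diag(\beta_B)$, so $M_{BB}(t)^{-1}\mathbf 1_B$ stays proportional to $\beta_B$, and hence $\dot\gamma_B$ remains parallel to $\mathbf 1_B$ throughout the stage.

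I would then establish feasibility and the KKT inequalities. The matrix $M_{BB}^{-1}$ is the Schur complement of the diagonally-dominant matrix $\Sigma^{-1}+Q$, so it is itself symmetric positive-definite and diagonally dominant; consequently $M_{BB}^{-1}\mathbf 1_B\ge0$ (strictly positive under strict diagonal dominance), which makes $\beta$ a genuine probability vector with support exactly $B$ and delivers the monotonicity of $n(t)$. The equality $\partial_iV=-g(t)^2$ holds on $B$ by construction, and the inequality $|\gamma_j|\le g(t)$ for $j\notin B$ holds at $\underline t$ (strictly, since $B$ is the argmax) and therefore on a maximal interval $[\underline t,\overline t]$, where $\overline t$ is the first time some inactive $|\gamma_j|$ catches up to $g(t)$. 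At that instant the argmax set strictly enlarges, so by Lemma \ref{lemma:gamma positive} the set $\argmin_i\partial_iV(n(\overline t))$ is a strict superset of $B$, as claimed.

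Finally, I would settle the two regimes. If $|B|=K$ then $B^c=\varnothing$, the displayed formula reduces to $\beta\propto M^{-1}\gamma=(\Sigma^{-1}+Q)\gamma=\alpha$, and there is no source left to catch up, so $\overline t=\infty$ — this is exactly Lemma \ref{lemma:final stage}. If $|B|<K$, then because every $\beta_i>0$ on $B$ the posterior variances of all $\theta_i$, $i\in B$, vanish as $t\to\infty$, forcing $g(t)=\mathrm{Cov}(\theta_i,\omega\mid\cdot)\to0$; meanwhile the residual $\omega^{\perp}$ of $\omega$ off $\mathrm{span}\{\theta_i:i\in B\}$ is nonzero (as $\alpha>0$ and $\Sigma$ has full rank) and must be correlated with some $\theta_j$, $j\notin B$, so $|\gamma_j(t)|$ stays bounded away from $0$. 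Hence $g(t)$ must drop to meet some $|\gamma_j|$ at a finite $\overline t$. The main obstacle, and the step I would be most careful with, is the pair of claims that $\beta$ is simultaneously (i) history-independent and (ii) self-perpetuating along the stage; both rest on the structural facts $n_{B^c}(\underline t)=0$ and $\gamma_B\propto\mathbf 1_B$, without which $M_{BB}^{-1}\mathbf 1_B$ would drift with the accumulated attention. The positivity of $\beta$ via Schur-complement diagonal dominance is where Assumption \ref{assmp:diagonal dominance} is genuinely used.
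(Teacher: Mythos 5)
Your proof is correct, and it reaches exactly the same objects as the paper's argument---the stage direction $\beta_B\propto \alpha_B-[\Sigma^{-1}]_{BB^c}([\Sigma^{-1}]_{B^cB^c})^{-1}\alpha_{B^c}$, positivity of $\beta$ via diagonal dominance of a Schur complement, and Lemma \ref{lemma:gamma positive} as the engine---but it gets there by a genuinely different route. The paper changes basis to attributes $\theta_{k+1}^*,\dots,\theta_K^*$ orthogonal to $\{\theta_i\}_{i\in B}$, observes that the problem then decouples into a smaller $|B|$-source problem with weights $\tilde\alpha=(\Sigma_{TL})^{-1}(\Sigma_{TL},\Sigma_{TR})\alpha$ (which equals your $\beta_B$ up to normalization, via $(\Sigma_{TL})^{-1}\Sigma_{TR}=-[\Sigma^{-1}]_{BB^c}([\Sigma^{-1}]_{B^cB^c})^{-1}$), and invokes Lemma \ref{lemma:final stage} on that sub-problem; the conceptual payoff is the interpretation of $\beta$ as the weights of the projection of $\omega$ onto the learnable component. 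You instead work directly with the KKT system, differentiate $\gamma=(\Sigma^{-1}+Q)^{-1}\alpha$ along the candidate direction, and verify persistence of the equal-marginal-value condition through the exact Schur-complement recursion $M_{BB}(t)^{-1}=M_{BB}(\underline t)^{-1}+(t-\underline t)\diag(\beta_B)$; this is more computational but self-contained and makes the history-independence of $\beta$ transparent. One presentational caution: as written, the step ``$\dot\gamma=-gM\beta$, so keeping $\gamma_B\propto\mathbf 1_B$ requires $M_{BB}\beta_B\propto\mathbf 1_B$'' presupposes $\gamma_B(t)\propto\mathbf 1_B$ at the time of differentiation, which is what you are propagating; the non-circular way to close the loop is to note that the two block equations of $\alpha=P(t)\gamma(t)$ give $M_{BB}(t)^{-1}\gamma_B(t)=\alpha_B-[\Sigma^{-1}]_{BB^c}([\Sigma^{-1}]_{B^cB^c})^{-1}\alpha_{B^c}$, a vector that is \emph{constant} along the stage and proportional to $\beta_B$, whence $\gamma_B(t)=M_{BB}(t)\cdot(\text{const}\cdot\beta_B)\propto\mathbf 1_B$ follows directly from your recursion. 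Your finiteness argument for $\overline t$ (the common value $g(t)\to 0$ while some $|\gamma_j|$, $j\notin B$, stays bounded away from zero because the residual of $\omega$ off $\mathrm{span}\{\theta_i\}_{i\in B}$ is nondegenerate) also differs from the paper's (which argues the $t$-optimal variance tends to zero but cannot be attained on a proper subset of sources); both are valid.
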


\begin{proof}
The case when $\vert B \vert = K$ has been proved in Lemma \ref{lemma:final stage}, so we only consider $\vert B \vert < K$. Without loss we assume $B = \{1, \dots, k\}$ with $1 \leq k < K$. Let $q = n(\underline{t})$ and define $\gamma$ as before. By Lemma \ref{lemma:gamma positive}, $\gamma_i$ is the same positive number for $i \leq k$. Moreover, $t$-optimality implies that $q_j = 0$ whenever $j > k$. Otherwise the posterior variance could be reduced by decreasing $q_j$ and increasing $q_1$, as source $1$ has strictly higher marginal value than source $j$.

We now use a trick to deduce the current lemma from the previous Lemma \ref{lemma:final stage}. Specifically, given the prior covariance matrix $\Sigma$, we can choose another basis of the attributes $\theta_1, \dots, \theta_k, \theta_{k+1}^*, \dots, \theta_K^*$ with two properties: 
\begin{enumerate}
    \item each $\theta_j^*$ ($j > k$) is a linear combination of the original attributes $\theta_1, \theta_2, \dots, \theta_K$; 
    \item $\mbox{Cov}[\theta_i, \theta_j^*] = 0$ for all $i \leq k < j$, where the covariance is computed according to the prior belief $\Sigma$.
\end{enumerate}
Denote by $\tilde{\theta}$ the vector $(\theta_1, \dots, \theta_k)'$, and by $\theta^*$ the vector $(\theta_{k+1}^*, \dots, \theta_K^*)'$. The payoff-relevant state $\omega = \alpha' \cdot \theta$ can thus be rewritten as $\tilde{\alpha}' \cdot \tilde{\theta} + \alpha^{*'} \cdot \theta^*$ for some constant coefficient vectors $\tilde{\alpha} \in \mathbb{R}^k$ and $\alpha^* \in \mathbb{R}^{K-k}$. Using property 2 above, we can solve for $\tilde{\alpha}$ from $\Sigma$, $\alpha$ and $B$:
\begin{equation}\label{eq:tilde alpha}
\tilde{\alpha} = (\Sigma_{TL})^{-1} \cdot (\Sigma_{TL}, ~\Sigma_{TR}) \cdot \alpha
\end{equation}
where $\Sigma_{TL}$ is the $k \times k$ top-left sub-matrix of $\Sigma$ and $\Sigma_{TR}$ is the $k \times (K-k)$ top-right block. 

With this transformation, we have reduced the original problem with $K$ sources to a smaller problem with only the first $k$ sources. To see why this reduction is valid, recall that sampling sources $1 \sim k$ only provides information about $\tilde{\theta}$, which is orthogonal to $\theta^*$ according to the prior. So as long as the agent has only looked at the first $k$ sources, the transformed attributes continue to satisfy property 2 above (zero covariances) under any posterior belief. It follows that the posterior variance of $\omega$ is simply the variance of $\tilde{\alpha}'\cdot \tilde{\theta}$ plus the variance of $\alpha^{*'} \cdot \theta^*$. Since the latter uncertainty cannot be reduced, the agent's objective (at those times when only the first $k$ sources are attended to) is equivalent to minimizing the posterior variance of $\tilde{\omega} = \tilde{\alpha}' \cdot \tilde{\theta}$. 

Thus, in this smaller problem, the prior covariance matrix is $\Sigma_{TL}$ and the payoff weights are $\tilde{\alpha}$. Assuming that $\tilde{\alpha}$ has strictly positive coordinates, we can then apply Lemma \ref{lemma:final stage}: As long as the agent attends to the first $k$ sources proportional to $\tilde{\alpha}$, $\partial_1V = \dots = \partial_kV$ continues to hold.\footnote{Lemma \ref{lemma:final stage} implies $\partial_1\tilde{V} = \dots = \partial_k\tilde{V}$, where $\tilde{V}(q_1, \dots, q_k)$ is the posterior variance of $\tilde{\alpha}' \cdot \tilde{\theta}$ in the smaller problem. But as discussed, $\tilde{V}$ differs from $V$ by a constant, so its derivatives are the same as those of $V$.} Moreover, at $q = n(\underline{t})$, the definition of the set $B$ implies that these $k$ partial derivatives have greater magnitude (i.e., more negative) than the rest. By continuity, the same comparison holds until some time $\overline{t} > \underline{t}$. Thus, when $t \in [\underline{t}, \overline{t})$, the cumulative attention vector (under this strategy) still satisfies the first-order condition $B = \argmin_{1\leq i \leq K} \partial_iV$ and $q_j = 0$ for $j \notin B$. Since $V$ is convex, this must be the $t$-optimal vector as we desire to show. It also follows that $\overline{t} < \infty$, because at $t = \infty$ the minimum possible posterior variance is zero, which cannot be achieved by attending only to a subset $B$ of sources.

It remains to prove that $\tilde{\alpha}_i$ is positive for $1 \leq i \leq k$. To this end, define $\tilde{Q} = \diag(q_1, \dots, q_k)$ to be the $k \times k$ top-left sub-matrix of $Q$, and let
\begin{equation}\label{eq:tilde gamma}
\tilde{\gamma} = ((\Sigma_{TL})^{-1} + \tilde{Q})^{-1} \cdot \tilde{\alpha}. 
\end{equation}
We will show that $\tilde{\gamma}$ is just the first $k$ coordinates of $\gamma$. Indeed, for $1 \leq i \leq k$, $\tilde{\gamma}_i$ is by definition the covariance between $\theta_i$ and $\tilde{\omega} = \tilde{\alpha}' \cdot \tilde{\theta}$ under the posterior belief at time $\underline{t}$. Since $\omega = \tilde{\omega} + \alpha^{*'} \cdot \theta^*$, and the vector $\theta^*$ is by construction independent of $\theta_i$, we deduce that $\mbox{Cov}(\theta_i, \tilde{\omega}) = \mbox{Cov}(\theta_i, \omega)$. Thus $\tilde{\gamma}_i = \gamma_i$ as desired.

Given this, Lemma \ref{lemma:gamma positive} tells us that $\tilde{\gamma}_i$ is the same positive number for $1 \leq i \leq k$. Rewriting (\ref{eq:tilde gamma}) as $\tilde{\alpha} = ((\Sigma_{TL})^{-1} + \tilde{Q}) \cdot  \tilde{\gamma}$, we see that $\tilde{\alpha}_i$ is proportional to the $i$-th row sum of the matrix $(\Sigma_{TL})^{-1} + \tilde{Q}$, which is just the row sum of $(\Sigma_{TL})^{-1}$ plus $q_i$. By \citet{CarlsonMarkham}, if $\Sigma^{-1}$ is (strictly) diagonally-dominant, then so is $(\Sigma_{TL})^{-1}$ for any principal sub-matrix $\Sigma_{TL}$ (because $(\Sigma_{TL})^{-1}$ is the Schur complement of $\Sigma^{-1}$ with respect to its bottom-right block). So the row sums of $(\Sigma_{TL})^{-1}$ are all strictly positive, implying $\tilde{\alpha}_i > 0$.
\end{proof}

\subsection{Piecing Together Different Stages}
We now apply Lemma \ref{lemma:earlier stages} repeatedly to prove Theorem \ref{thm:general}. Continuing to assume strict diagonal dominance, we can apply Lemma \ref{lemma:earlier stages} with $\underline{t} = 0$ and deduce that up to some time $t_1 = \overline{t} > 0$, $t$-optimality can be achieved by a constant attention strategy supported on $B_1 = \argmin_{1 \leq i \leq K} \partial_i V(0)$. Applying Lemma \ref{lemma:earlier stages} again with $\underline{t} = t_1$, we know that the agent can maintain $t$-optimality from time $t_1$ to some time $t_2$ with a constant attention strategy supported on $B_2 = \argmin_{1 \leq i \leq K} \partial_i V(n(t_1))$. So on and so forth. The sets $\emptyset = B_0, B_1, B_2, \dots$ are nested by construction, so eventually $B_m = \{1, \dots, K\}$ . This delivers the result. 

\subsection{The Case of Weak Diagonal Dominance}\label{appx:zero weights}

Here we demonstrate how to prove Theorem \ref{thm:general} assuming only that $\Sigma^{-1}$ is weakly diagonally-dominant. The new difficulty is that in the proof of Lemma \ref{lemma:earlier stages}, we cannot conclude the optimal attention allocation (which is proportional to $\widetilde{\alpha}$) has \emph{strictly} positive coordinates on $B$. Thus the agent does not necessarily mix over \emph{all} of the sources that maximize marginal reduction of variance. This might lead to the failure of Theorem \ref{thm:general} for two reasons. First, it is possible that the agent optimally divides attention across a \emph{subset} of the sources that he has paid attention to in the past, which would violate the requirement of nested observation sets. Second, when a new source achieves maximal marginal value, the agent might (not attend to it and) use a different mixture over the sources previously sampled, which would violate the requirement of constant attention allocation for a given observation set. 

We now show that neither occurs in our setting. In response to the first concern above, note that we can still follow the proof of Lemma \ref{lemma:earlier stages} to deduce that the optimal instantaneous attention $\widetilde{\alpha}_i$ given to a source $i \in \argmin_{j} \partial_j V(t)$ is proportional to the $i$-th row sum of $(\Sigma_{TL})^{-1}$ plus $q_i$. Since $(\Sigma_{TL})^{-1}$ is weakly diagonally-dominant, its row sums are weakly positive. Thus $\widetilde{\alpha}_i > 0$ whenever  $q_i > 0$. In words, any source that has received attention in the past will be allocated strictly positive attention at every future instant. 

To address the second concern, consider two times $\widetilde{t} < \widehat{t}$ with $\argmin_{j} \partial_j V(n(\widetilde{t})) \subsetneq \argmin_{j} \partial_j V(n(\widehat{t}))$. Reordering the attributes, we may assume that at time $\widetilde{t}$ the first $\widetilde{k}$ sources have the highest marginal value, whereas at time $\widehat{t}$ this set expands to the first $\widehat{k} > \widetilde{k}$ sources. Let $\widetilde{\alpha} \in \mathbb{R}^{\widetilde{k}}$ and $\widehat{\alpha} \in \mathbb{R}^{\widehat{k}}$ be the optimal attentions associated with these subsets, as given by (\ref{eq:tilde alpha}). We want to show that if $\widehat{\alpha}$ is supported on the same set of sources as $\widetilde{\alpha}$, then $\widehat{\alpha}$ coincides with $\widetilde{\alpha}$ on their support. Indeed, by definition of $\widehat{\alpha}$ (see the proof of Lemma \ref{lemma:earlier stages}), 
\[
\omega = \sum_{i \leq \widehat{k}} \widehat{\alpha}_i \theta_i + \text{ residual term orthogonal to } \theta_1, \dots, \theta_{\widehat{k}}.
\]
If $\widehat{\alpha}$ has the same support as $\widetilde{\alpha}$, then the above implies
\[
\omega = \sum_{i \leq \widetilde{k}} \widehat{\alpha}_i \theta_i + \text{ residual term orthogonal to } \theta_1, \dots, \theta_{\widetilde{k}},
\]
where we use the fact that any term orthogonal to the first $\widehat{k}$ attributes is clearly orthogonal to the first $\widetilde{k}$ attributes. This last representation of $\omega$ reduces to the definition of $\widetilde{\alpha}$. Hence $\widehat{\alpha}_i = \widetilde{\alpha}_i$ for $1 \leq i \leq \widetilde{k}$, as we desire to prove. 

\subsection{The Case of Perpetual Substitutes or Perpetual Complements}\label{appx:alternative sufficient conditions}

We now prove Theorem \ref{thm:general} under Assumption \ref{assmp:Substitutes} or \ref{assmp:Complements}. Our proof above uses the diagonal dominance assumption at two places. It is crucial for proving Lemma \ref{lemma:gamma positive} (that is, the coordinates of $\gamma$ with greatest magnitude are all positive), as well as for showing that the transformed weight vector $\tilde{\alpha}$ is positive in the proof of Lemma \ref{lemma:earlier stages}. Thus we just need to verify these two steps under the alternative assumptions. 

Lemma \ref{lemma:gamma positive} continues to hold because, as we show in the proof of Propositions \ref{prop:substitutes} and \ref{prop:complements} in Online Appendix \ref{appx:substitutes/complements}, these alternative assumptions imply that $\gamma = (\Sigma^{-1} + Q)^{-1} \cdot \alpha$ has non-negative coordinates for any $q \geq 0$. It trivially follows that those coordinates with maximal absolute value must be strictly positive. 

As for $\tilde{\alpha}$ in the proof of Lemma \ref{lemma:earlier stages}, first consider Assumption \ref{assmp:Substitutes} which imposes that $\Sigma^{-1}$ is an $M$-matrix. We use the matrix identity $(\Sigma_{TL})^{-1} \cdot \Sigma_{TR} = - (\Sigma^{-1})_{TR} \cdot [(\Sigma^{-1})_{BR}]^{-1}$, which can be proved using $\Sigma \cdot \Sigma^{-1} = I_K$ and comparing the top-right block. By assumption, $(\Sigma^{-1})_{TR}$ has non-positive entries, since it only consists of off-diagonal entries of $\Sigma^{-1}$. Moreover, $[(\Sigma^{-1})_{BR}]^{-1}$ has non-negative entries, since $(\Sigma^{-1})_{BR}$ is an $M$-matrix. We thus conclude that $(\Sigma_{TL})^{-1} \cdot \Sigma_{TR}$ is a matrix with non-negative entries. From \eqref{eq:tilde alpha} we have $\tilde{\alpha} = (\Sigma_{TL})^{-1} \cdot (\Sigma_{TL}, ~\Sigma_{TR}) \cdot \alpha = (I_k, ~(\Sigma_{TL})^{-1}\Sigma_{TR}) \cdot \alpha$. So each coordinate of $\tilde{\alpha}$ is larger than the corresponding coordinate of $\alpha$, and is thus strictly positive. 

If instead Assumption \ref{assmp:Complements} is satisfied, then $\Sigma$ itself is an $M$-matrix, and so is the principal sub-matrix $\Sigma_{TL}$. Thus $(\Sigma_{TL})^{-1}$ has non-negative entries off the diagonal and strictly positive entries on the diagonal. From \eqref{eq:tilde gamma} we have $\tilde{\alpha} = ((\Sigma_{TL})^{-1} + \tilde{Q}) \cdot \tilde{\gamma}$. Since $\tilde{\gamma}$ is a positive vector (with equal coordinates), we deduce $\tilde{\alpha} \gg 0$. This completes the proof of Theorem \ref{thm:general}.

\section{Algorithm for Computing the Optimal Strategy} \label{appx:algorithm}

Here we provide an algorithm for recursively finding the times $t_k$ and sets $B_k$ in Theorem \ref{thm:general}. Set $Q_0$ to be the $K \times K$ matrix of zeros, and $t_0 = 0$. For each stage $k\geq 1$:

\bigskip

\noindent \textbf{1. Computation of the observation set $B_k$.} Define the $K \times 1$ vector $\gamma^k=(\Sigma^{-1} + Q_{k-1})^{-1} \cdot \alpha$ where $\Sigma$ is the prior covariance matrix, and $\alpha$ is the weight vector. The set of attributes that the agent attends to in stage $k$ is 
\[
    B_k=\textstyle \argmax_i \vert \gamma^k_i \vert.
\]
These sources have highest marginal reduction of posterior variance (see Lemma \ref{lemma:V partials}). 
    
\bigskip

\noindent \textbf{2. Computation of the constant attention allocation in stage $k$.} 
If $B_k$ is the set of all sources, then we are already in the last stage and the algorithm ends. Otherwise let $\ell = \vert B_k \vert < K$. We can re-order the attributes so that the $\ell$ attributes in $B_k$ are the first $\ell$ attributes. In an abuse of notation, let $\Sigma$ be the covariance matrix for the re-ordered attribute vector $\theta$. Define $\Sigma_{TL}$ to be the $\ell \times \ell$ top-left submatrix of $\Sigma$ and $\Sigma_{TR}$ to be the $\ell \times (K-\ell)$ top-right block. Finally let
\[
    \alpha^k = (\Sigma_{TL})^{-1} \cdot (\Sigma_{TL}, ~\Sigma_{TR}) \cdot \alpha
\]
be an $\ell \times 1$ vector. The agent's optimal attention allocation in stage $k$ is proportional to $\alpha^k$: 
\[
    \beta_i^k = \left\{\begin{array}{cl}
    \alpha_i^k/\sum_i \alpha_i^k & \mbox{if } i\leq \ell \\
    0 & \mbox{otherwise}
    \end{array} \right.
\]
As the agent acquires information in this mixture during stage $k$, the marginal values of learning about different attributes in $B_k$ remain the same, and strictly higher than learning about any attribute outside of the set.

\bigskip

\noindent \textbf{3. Computation of the next time $t_k$.} For arbitrary $t$, define 
\[
    Q^k(t):= Q_{k-1} + (t-t_{k-1})\cdot \diag(\beta^k). 
\] 
Let $t_k$ be the smallest $t > t_{k-1}$ such that the coordinates maximizing $(\Sigma^{-1} + Q^k(t))^{-1} \cdot \alpha$ are a strict superset of $B_k$. At this time, the marginal value of some attribute(s) outside of $B_k$ equalizes the attributes in $B_k$, and stage $k+1$ commences, with $Q_k = Q^k(t_k)$. 
    
The time $t_k$ can be computed as follows. For each source $j > \ell$, consider the following (polynomial) equation in $t$:
\begin{equation}\label{eq:computation of next time}
    e_j' \cdot (\Sigma^{-1}+Q^{k}(t))^{-1} \cdot \alpha = \pm e_1' \cdot (\Sigma^{-1}+Q^{k}(t))^{-1} \cdot \alpha.
\end{equation}
Any solution $t > t_{k-1}$ is a time at which source $j$ has the same marginal value as sources $1, \dots, \ell$. So $t_k$ is  the smallest such solution $t$ across all $j > \ell$.

\clearpage

\clearpage 


\pagenumbering{gobble}
\small
\pagenumbering{arabic}
\renewcommand{\thesection}{\Alph{section}}
\renewcommand\thesubsection{\thesection.\arabic{subsection}}

\setcounter{section}{14}

 


\section{For Online Publication} 

\subsection{Uniqueness of Optimal Information Acquisition}\label{appx:uniqueness}
By Lemma \ref{lemma:Greenshtein}, whenever a uniformly optimal strategy exists, it is the optimal attention strategy regardless of the form of $u(\tau, a, \omega)$. Without further assumptions on $u$, there could exist other optimal attention strategies. For example, consider the payoff function used in the proof of Lemma \ref{lemma:uniform optimal converse}. Under this payoff function, the agent always stops at some fixed time $t$. Hence any strategy that achieves the $t$-optimal vector $n(t)$ would be optimal in this problem. 

Nonetheless, such examples can be ruled out by an assumption on the agent's stopping rule:
\begin{assumption} \label{assmp:stop} 
Given any attention allocation strategy $S$, any history of signal realizations up to time $t$ such that the agent has not stopped, and any $t' > t$, there exists a positive probability of continuation histories such that the agent optimally stops in the interval $(t, t']$.
\end{assumption}



\begin{proposition}
Suppose Assumption \ref{assmp:waiting is costly} holds strictly, and Assumption \ref{assmp:stop} is satisfied. Then, any optimal attention strategy $S$ induces the same posterior variance as the uniformly optimal strategy $S^*$ at every history where the agent has not stopped. Consequently, the two strategies induce the same cumulative attention vectors, and coincide at almost every time before stopping. 
\end{proposition}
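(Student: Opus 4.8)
The plan is to argue by contradiction, reusing the replication construction behind Lemma~\ref{lemma:Greenshtein} but now extracting a \emph{strict} payoff improvement from the strict form of Assumption~\ref{assmp:waiting is costly}. Write $v^*_t=V(n(t))$ for the deterministic posterior variance under $S^*$, and $v_t=V(q_t)$ for the (possibly random, history-dependent) posterior variance under an arbitrary optimal strategy $S$ with cumulative attention vector $q_t$. Recall from \eqref{eq:V(q)} that posterior variance depends only on $q_t$, that $v_t\ge v^*_t$ pointwise, and that $v^*_t$ is strictly decreasing (since $n(t)$ always adds attention to a source of positive marginal value). The heart of the proof is to show $v_t=v^*_t$ at every non-stopping history reached with positive probability; the remaining claims then follow from uniqueness of the $t$-optimal vector.

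For the core step, suppose instead that on a positive-probability set there is a non-stopping history $h$ at some time $t_0$ with $v_{t_0}(h)>v^*_{t_0}$. Since $q_t=\int_0^t\beta(s)\,ds$ is Lipschitz in $t$ (the allocations sum to at most $1$), $q_t$ stays within a bounded region on which $V$ is Lipschitz, and $v^*_t$ is continuous, there exist $\epsilon>0$ and $\delta>0$ with $v_t\ge v^*_t+\delta$ for all $t\in(t_0,t_0+\epsilon)$ along \emph{every} continuation of $h$ (the $q_t$-perturbation bound is uniform across continuations). By Assumption~\ref{assmp:stop}, from $h$ there is positive probability that the agent optimally stops at some $\tau\in(t_0,t_0+\epsilon)$. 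On this event $v_\tau>v^*_\tau$, and because $v^*$ is strictly decreasing the time $T^*(v_0-v_\tau)$ at which $S^*$ first reaches variance $v_\tau$ satisfies $T^*(v_0-v_\tau)<\tau$ strictly. Now replicate exactly as in Lemma~\ref{lemma:Greenshtein}: the rule that holds the belief at $\mathcal{F}_\tau$ fixed but stops at the earlier time $T^*(v_0-v_\tau)$ is feasible under $S^*$, and strict Assumption~\ref{assmp:waiting is costly} makes $\max_a\mathbb{E}[u(T^*(v_0-v_\tau),a,\omega)\mid\mathcal{F}_\tau]$ \emph{strictly} exceed $\max_a\mathbb{E}[u(\tau,a,\omega)\mid\mathcal{F}_\tau]$ on this event, and weakly exceed it everywhere. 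Taking expectations and passing through the Dambis--Dubins--Schwartz change of variable $\nu=v_0-v_\tau$ as in Lemma~\ref{lemma:Greenshtein} yields a payoff under $S^*$ strictly larger than under $S$, contradicting optimality. Hence $v_t=v^*_t$ at every positive-probability non-stopping history.

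It remains to upgrade equality of variances to equality of cumulative attention. At such a history $h$ we have $V(q_t(h))=v^*_t$. If the total attention $\sum_i[q_t(h)]_i$ were some $s<t$, then since $V$ is decreasing we would get $V(q_t(h))\ge\min_{\sum_i q_i=s}V(q)=v^*_s>v^*_t$, a contradiction; so the full budget is used, $\sum_i[q_t(h)]_i=t$. Then $q_t(h)$ minimizes $V$ over the budget-$t$ simplex, and uniqueness of the $t$-optimal vector (Lemma~\ref{lemma:t-opt unique}) forces $q_t(h)=n(t)$. Thus the cumulative attention vector is deterministic and equals that of $S^*$ at every time before stopping, and differentiating in $t$ shows the instantaneous allocations coincide for a.e.\ $t$.

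The main obstacle is the continuity-and-stopping step in the second paragraph: converting a single positive-probability non-stopping history that carries a variance gap into a positive-probability \emph{stopping} event on which the payoff strictly falls short. This demands (i) continuity (uniform in continuations) of the history-dependent process $v_t$ to propagate the gap to a whole interval, (ii) a correct invocation of Assumption~\ref{assmp:stop} to guarantee imminent stopping with positive probability, and (iii) care that the strict pointwise payoff inequality survives both the expectation and the DDS time-change, so the strict improvement is not washed out. Everything else is a routine strengthening of the weak inequalities already established in Lemma~\ref{lemma:Greenshtein}.
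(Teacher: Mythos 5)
Your proof is correct and follows essentially the same route as the paper's: propagate the variance gap over a short interval by continuity, invoke Assumption~\ref{assmp:stop} to obtain a positive-probability stopping event inside that interval, and use the strict form of Assumption~\ref{assmp:waiting is costly} within the replication argument of Lemma~\ref{lemma:Greenshtein} to extract a strict payoff improvement, then pass to cumulative attention via uniqueness of $n(t)$. The only difference is that you spell out the continuity/Lipschitz details and the budget-exhaustion step, which the paper leaves implicit.
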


\begin{proof}
Suppose $S$ induces larger posterior variance than $S^*$ at some time $t$, then by continuity the same holds from time $t$ to some later time $t' > t$ along this history. By assumption, the agent stops between these times with positive probability. Thus there is positive probability that the agent stops with posterior variance strictly larger than the minimal variance. From the proof of Lemma \ref{lemma:Greenshtein}, we see that payoff under $S$ is strictly below $S^*$, contradicting the optimality of $S$. The second part of the result follows from the uniqueness of $n(t)$, and the fact that $\beta(t)$ integrates to $n(t)$.
\end{proof}

Although Assumption \ref{assmp:stop} is stated in terms of the endogenous stopping rule, it is satisfied in any problem where the agent always stops to take some action when he has an extremely high (or low) expectation about $\omega$. This is guaranteed if extreme values of $\omega$ agree on the optimal action and the marginal cost of delay is bounded away from zero. These conditions on the primitives are rather weak, and are satisfied in many applications such as binary choice with linear waiting cost.

\subsection{Non-existence of Uniformly Optimal Strategy}

\subsubsection{Counterexample for $K=2$} \label{appx:counterexample}
The example below illustrates how and why Theorem \ref{thm:K=2} might fail without Assumption \ref{assmp:K=2}:

\begin{example}\label{ex:failure}
There are two unknown attributes with prior distribution
\[
\left(\begin{array}{c} \theta_1 \\ \theta_2 \end{array}\right) \sim \mathcal{N}\left( \left(\begin{array}{c} \mu_1 \\ \mu_2 \end{array}\right), \left(\begin{array}{cc} 10 & -3 \\ -3 & 1 \end{array} \right) \right).
\]
The agent wants to learn $\theta_1 + 4 \theta_2$.

Given $q_1$ units of attention devoted to learning $\theta_1$, and $q_2$ devoted to $\theta_2$, the agent's posterior variance about $\omega$ is given by (\ref{eq:V(q)}). Simplifying, we have $V(q_1,q_2)=\frac{2+16q_1 + q_2}{(1+q_1)(10+q_2)-9}$.
The $t$-optimal cumulative attention vectors $n(t)$ (see Section \ref{sec:proof sketch}) are defined to minimize $V(q_1,q_2)$ subject to $q_1, q_2 \geq 0$ and the budget constraint $q_1+q_2\leq t$.

These vectors do not evolve monotonically: Initially, the marginal value of learning $\theta_1$ exceeds that of learning $\theta_2$, since the agent has greater prior uncertainty about $\theta_1$ (even accounting for the difference in payoff weights). Thus at all times $t < 1/4$, the $t$-optimal vector is $(t,0)$, and the agent learns only about attribute 1. 

After a quarter-unit of time devoted to learning $\theta_1$, the agent's posterior covariance matrix becomes 
$ \left(\begin{array}{cc} 20/7 & -6/7 \\ -6/7 & 5/14 \end{array} \right)$. Note that the two sources have equal marginal values at $t = 1/4$, since $\gamma_1 = \frac{-4}{7}$ and $\gamma_2 = \frac{4}{7}$ have the same absolute value. However, to maintain equal marginal values at future instants, it would be optimal to take attention away from attribute 1 and re-distribute it to attribute 2. Specifically, at all times $t \in [1/4, 1]$ the $t$-optimal vector is given by $n(t) = \left(\frac{-t+1}{3}, \frac{4t-1}{3}\right)$, and the optimal cumulative attention toward $\theta_1$ is decreasing in this interval.

Consequently, there does not exist a uniformly optimal strategy in this example (Lemma \ref{lemma:monotonicity}). Hence the optimal information acquisition strategy varies according to when the agent expects to stop, and Theorem \ref{thm:K=2} cannot hold independently of the payoff criterion. 



    
    
\end{example}

\subsubsection{Necessity of Assumption \ref{assmp:K=2} for Theorem \ref{thm:K=2}} \label{appx:K=2 necessity}
We show here that when $K = 2$, the assumption $cov_1 + cov_2 \geq 0$ is also \emph{necessary} for the existence of a uniformly optimal strategy. The result generalizes Example \ref{ex:failure} above. 
\begin{proposition}\label{prop:K=2converse}
Suppose $K = 2$ and Assumption \ref{assmp:K=2} is violated. Then a uniformly optimal strategy does not exist. 
\end{proposition}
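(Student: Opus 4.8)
The plan is to invoke Lemma~\ref{lemma:monotonicity}, which reduces the claim to showing that the $t$-optimal vector $n(t)$ fails to be weakly increasing in each coordinate once $cov_1 + cov_2 < 0$. Rather than track $n(t)$ on all of $[0,\infty)$, I will isolate a single time interval on which one coordinate of $n(t)$ strictly decreases. The first step is to note that a violation of Assumption~\ref{assmp:K=2} forces $\alpha_1 \ne \alpha_2$: if $\alpha_1 = \alpha_2$, then the bound $2|\Sigma_{12}| \le \Sigma_{11} + \Sigma_{22}$ (AM--GM together with positive definiteness) yields $cov_1 + cov_2 = \alpha_1(\Sigma_{11} + 2\Sigma_{12} + \Sigma_{22}) \ge 0$. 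I therefore relabel the two sources so that $\alpha_1 < \alpha_2$; since $\det(\Sigma) > 0$, this gives $0 < x_1 < x_2$ for $x_i = \alpha_i \det(\Sigma)$ as in Appendix~\ref{appx:theorem K=2}, and I set $D := -(cov_1 + cov_2) > 0$.

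Next I use the derivatives recorded in (\ref{eq:K=2partials}), where $\partial_1 V$ and $\partial_2 V$ are proportional, with a common positive factor, to $-(x_1 q_2 + cov_1)^2$ and $-(x_2 q_1 + cov_2)^2$ respectively. The equal-marginal-value condition $\partial_1 V = \partial_2 V$ thus factors into two linear branches, $x_1 q_2 + cov_1 = \pm(x_2 q_1 + cov_2)$. The crucial observation is that the \emph{opposite-sign} branch $x_2 q_1 + x_1 q_2 = D$ meets the open positive quadrant precisely when $D > 0$, i.e.\ exactly when the assumption fails. Intersecting this branch with the budget line $q_1 + q_2 = t$ produces the explicit candidate $q(t) = \big(\tfrac{D - x_1 t}{x_2 - x_1},\, \tfrac{x_2 t - D}{x_2 - x_1}\big)$, whose coordinates I will check are both strictly positive exactly for $t \in (D/x_2, D/x_1)$ — a nonempty interval because $x_1 < x_2$.

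To conclude that $q(t)$ really is $n(t)$, I will verify the KKT conditions for minimizing $V$ over the slice $\{q \ge 0,\ \sum_i q_i = t\}$: at an interior point these reduce to equality of the two partials, which holds by construction of the branch, while convexity of $V$ (the corollary to Lemma~\ref{lemma:V partials}) makes the conditions sufficient for a global minimum; uniqueness of the minimizer (Lemma~\ref{lemma:t-opt unique}) then forces $n(t) = q(t)$ throughout the interval. Since $n_1(t) = \tfrac{D - x_1 t}{x_2 - x_1}$ has derivative $-x_1/(x_2 - x_1) < 0$, the vector $n(t)$ is non-monotone, and Lemma~\ref{lemma:monotonicity} yields the result. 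The main obstacle — and the conceptual heart of the proof — is recognizing that the governing solution of the first-order condition is this opposite-sign branch (on which $\gamma_1$ and $\gamma_2$ share magnitude but carry opposite signs), rather than the positively-aligned branch that drives the regime of Theorem~\ref{thm:K=2}; once that branch is identified, certifying it as the genuine minimizer via convexity and uniqueness is routine. A secondary point needing care is the reduction to $\alpha_1 \neq \alpha_2$, without which the decreasing interval would collapse.
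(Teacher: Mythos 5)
Your proposal is correct and follows essentially the same route as the paper: both reduce the claim via Lemma~\ref{lemma:monotonicity} to non-monotonicity of $n(t)$, identify the same interval $\bigl(D/x_2,\, D/x_1\bigr)$, and arrive at the identical closed form $n(t) = \bigl(\tfrac{D - x_1 t}{x_2 - x_1}, \tfrac{x_2 t - D}{x_2 - x_1}\bigr)$ on the opposite-sign branch of the first-order condition, whose first coordinate strictly decreases. The only (immaterial) difference is in certification: the paper first establishes interiority via the boundary comparisons $\partial_1 V(0,t) < \partial_2 V(0,t)$ and $\partial_1 V(t,0) > \partial_2 V(t,0)$ and then selects the branch by a sign argument, whereas you posit the candidate on that branch and confirm it is the unique minimizer via KKT, convexity, and Lemma~\ref{lemma:t-opt unique}.
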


\begin{proof}
Suppose that $cov_1 + cov_2 < 0$. First note that one of $cov_1, cov_2$ is positive, because $\alpha_1cov_1 + \alpha_2cov_2$ = $\alpha' \Sigma \alpha > 0$. So without loss we can assume $cov_2 > 0 > -cov_2 > cov_1$. Moreover, from $\alpha_1 cov_1 + \alpha_2 cov_2 > 0$ we obtain $\alpha_2 > \alpha_1$ and hence $x_2 > x_1$. Below we show the $t$-optimal attention vector $n(t)$ is non-monotonic. 

Suppose $\frac{-(cov_1+cov_2)}{x_2} < t < \frac{-(cov_1+cov_2)}{x_1}$, then by (\ref{eq:K=2partials}) we have $\partial_1V(0,t) < \partial_2V(0,t)$ and $\partial_1V(t,0) > \partial_2V(t,0)$. These imply that $n(t)$ is interior, and the first-order condition yields 
\[
x_1n_2(t) + cov_1 = -(x_2n_1(t) + cov_2),
\]
where we use the fact that for $t$ in this range  $x_1q_2 + cov_1$ is always negative. Together with $n_1(t) + n_2(t) = t$, we can solve that $n(t) = (\frac{-x_1t-cov_1-cov_2}{x_2-x_1}, \frac{x_2t+cov_1+cov_2}{x_2-x_1})$. Thus as $t$ increases in this range, $n_1(t)$ actually decreases. So a uniformly optimal strategy does not exist. 
\end{proof}

\subsubsection{Counterexample for $K=3$} \label{appx:K=3 counterexample}

We present another example where a uniformly optimal strategy does not exist. In the following example, the three attributes have positive correlation with each other, but not positive \emph{partial correlation}. Thus the example illustrates the subtlety of Assumption \ref{assmp:Substitutes}.

Let the primitives be $K = 3$, $\alpha_1 = \alpha_2 = 1$, $\alpha_3 = 20$ and 
\[
    \Sigma = \left(\begin{array}{ccc} 19 & 3 & 0 \\ 3 & 5 & 3 \\ 0 & 3 & 2 \end{array} \right) \qquad \Sigma^{-1} = \left(\begin{array}{ccc} 1 & -6 & 9 \\ -6 & 38 & -57 \\ 9 & -57 & 86 \end{array} \right)
\]
The choice of $\Sigma_{11} = 19$ is such that $\Sigma$ has determinant exactly $1$. This makes it easier to calculate its inverse matrix, while still ensuring that $\Sigma$ is positive-definite. We highlight the fact that $[\Sigma^{-1}]_{13} = 9$ is positive, so this example does not satisfy Assumption \ref{assmp:Substitutes}. 
    
Consider the cumulative attention vector $q = (1, 14, 0)'$. Simple calculation gives $(\Sigma^{-1} + Q) \cdot (-1, 1, 1)' = (1, 1, 20)' = \alpha$. Thus $\gamma(q) = (\Sigma^{-1} + Q)^{-1}\alpha = (-1, 1, 1)'$. Since the three coordinates of $\gamma$ have equal absolute value, the sources have equal marginal reduction of $V$ at the attention vector $q$. This means $q$ is the $t$-optimal vector for $t = 15$. 

Next consider a different attention vector $\hat{q} = (0, 15, 20)$. We can similarly calculate that $(\Sigma^{-1} + \hat{Q}) \cdot (-1, 1, 1)' = (2, 2, 40)' = 2\alpha$. So $\gamma(\hat{q}) = (-1/2, 1/2, 1/2)$. By the same reasoning, $\hat{q}$ is the $t$-optimal vector for $t = 35$. Hence we see that when $t$ increases from $15$ to $35$, the optimal amount of attention devoted to $\theta_1$ decreases from $1$ to $0$. This implies that a uniformly optimal strategy does not exist in this example.

\subsection{When are Sources Substitutes/Complements?} \label{appx:substitutes/complements}

Since more information \emph{reduces} the posterior variance $V$, we define two sources $i$ and $j$ to be substitutes if the cross-partial $\partial_{ij} V(q)$ is \emph{non-negative} at any cumulative attention vector $q \geq 0$. The following result shows that Assumption \ref{assmp:Substitutes} precisely characterizes when two sources are substitutes. 

\begin{proposition}\label{prop:substitutes} 
Given any positive payoff weight vector $\alpha$. The following conditions on the prior covariance matrix $\Sigma$ are equivalent:
\begin{enumerate}
    \item $\Sigma^{-1}$ has non-positive off-diagonal entries;
    \item Every pair of sources $i \neq j \in \{1, \dots, K\}$ are substitutes in the sense that $\partial_{ij} V(q) \geq 0$ at every cumulative attention vector $q \in \mathbb{R}_{+}^K$.
\end{enumerate}
\end{proposition}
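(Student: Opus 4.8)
The starting point is the cross-partial formula from Lemma \ref{lemma:V partials}, namely $\partial_{ij}V(q) = 2\gamma_i\gamma_j\,[(\Sigma^{-1}+\diag(q))^{-1}]_{ij}$ with $\gamma = (\Sigma^{-1}+\diag(q))^{-1}\alpha$. Since the leading factor $2$ is positive, the sign of $\partial_{ij}V(q)$ is governed entirely by the product of the three quantities $\gamma_i$, $\gamma_j$, and the off-diagonal entry $[(\Sigma^{-1}+\diag(q))^{-1}]_{ij}$. The plan is to prove the two implications separately, exploiting the theory of $M$-matrices for one direction and a Schur-complement limit for the other.

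For the direction $(1)\Rightarrow(2)$, write $M = \Sigma^{-1}$. Condition (1) says $M$ has non-positive off-diagonal entries; since $\Sigma$ is positive definite, so is $M$, and hence $M$ is an $M$-matrix. For any $q \geq 0$ the matrix $M + \diag(q)$ remains positive definite and has the same (non-positive) off-diagonal entries, so it too is an $M$-matrix. By the standard fact cited in the paper (see \citet{Plemmons1977, Johnson1982}), the inverse of an $M$-matrix is entrywise non-negative. Therefore $[(\Sigma^{-1}+\diag(q))^{-1}]_{ij} \geq 0$, and moreover $\gamma = (\Sigma^{-1}+\diag(q))^{-1}\alpha \geq 0$ because it is a non-negative matrix applied to the positive vector $\alpha$. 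All three factors are non-negative, so $\partial_{ij}V(q)\geq 0$ at every $q \geq 0$, which is exactly (2).

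The converse $(2)\Rightarrow(1)$ is where the real work lies. Fix $i \neq j$ and set $S = \{i,j\}$. The idea is to reduce the off-diagonal structure of $(\Sigma^{-1}+\diag(q))^{-1}$ to its $2\times 2$ principal block indexed by $S$ by sending $q_k \to \infty$ for all $k \notin S$. A Schur-complement computation shows that under this limit the $S$-block of $(\Sigma^{-1}+\diag(q))^{-1}$ converges to $(M_{SS}+\diag(q_i,q_j))^{-1}$, while the corresponding coordinates of $\gamma$ converge to $(M_{SS}+\diag(q_i,q_j))^{-1}\alpha_S$; here $M_{SS}$ is the $2\times 2$ principal submatrix of $M$, which is positive definite, so $D := \det(M_{SS}+\diag(q_i,q_j)) > 0$. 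Explicitly, the limiting cross-partial has the sign of $(-M_{ij})\cdot[(M_{jj}+q_j)\alpha_i - M_{ij}\alpha_j]\cdot[(M_{ii}+q_i)\alpha_j - M_{ij}\alpha_i]$. Choosing $q_i$ and $q_j$ large makes the last two bracketed factors strictly positive (they tend to $+\infty$), so non-negativity of the limit forces $-M_{ij}\geq 0$, i.e.\ $[\Sigma^{-1}]_{ij}\leq 0$. Since $i \neq j$ was arbitrary, this yields (1).

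The main obstacle is this converse direction, and within it the clean justification of the Schur-complement limit: I need that as the ``outside'' precisions $q_k$ ($k\notin S$) grow without bound, both the $S$-block of the inverse and the restriction $\gamma_S$ converge to their two-dimensional analogues, and that $\partial_{ij}V\geq 0$ along this path passes to the limit (a limit of non-negative quantities is non-negative). The remaining care is purely bookkeeping---verifying $D>0$ and choosing $(q_i,q_j)$ so that both $\gamma$-coordinates are positive in the limit---after which the sign of $-M_{ij}$ is cleanly isolated.
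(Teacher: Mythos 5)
Your proof is correct. The direction $(1)\Rightarrow(2)$ is essentially identical to the paper's: both pass from the $M$-matrix property of $\Sigma^{-1}+\diag(q)$ to entrywise non-negativity of its inverse, hence non-negativity of $\gamma$ and of the cross-partial. For the converse, the two arguments differ in how they isolate the sign of $[\Sigma^{-1}]_{ij}$. The paper sends \emph{all} coordinates of $q$ to infinity, uses $(\Sigma^{-1}+Q)^{-1}=Q^{-1}(I_K+o(1))$ to conclude $\gamma>0$, deduces $[(\Sigma^{-1}+Q)^{-1}]_{ij}\geq 0$ for all large $q$, and then extracts $[\Sigma^{-1}]_{ij}\leq 0$ by identifying $[\Sigma^{-1}]_{ij}\cdot\prod_{k\neq i,j}q_k$ as the dominant term of a cofactor determinant. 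You instead send only the coordinates outside $\{i,j\}$ to infinity, so that by a Schur-complement limit the $\{i,j\}$ block of the inverse and the corresponding coordinates of $\gamma$ converge to their two-dimensional analogues; the sign of the limiting cross-partial is then read off from an explicit $2\times 2$ inverse, and choosing $q_i,q_j$ large makes the two $\gamma$-factors strictly positive, forcing $-[\Sigma^{-1}]_{ij}\geq 0$. Your reduction buys an explicit closed form and avoids the leading-term bookkeeping in the determinant expansion; the paper's version avoids the block-partitioned limit. Both limit arguments are legitimate because $\partial_{ij}V(q)\geq 0$ is assumed at every finite $q\geq 0$ and non-negativity passes to the limit, and your verification that the off-diagonal block $-(A-BD^{-1}C)^{-1}BD^{-1}$ of the inverse vanishes (so that $\gamma_S$ indeed converges to $(M_{SS}+\diag(q_i,q_j))^{-1}\alpha_S$) is the one point that genuinely needed checking.
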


\begin{proof}
In one direction, suppose $\Sigma^{-1}$ has non-positive off-diagonal entries. Then for any vector $q \geq 0$ and corresponding diagonal matrix $Q = \diag(q)$, $\Sigma^{-1} + Q$ satisfies the same property. Thus the positive-definite matrix $\Sigma^{-1} + Q$ is an \emph{$M$-matrix} whose inverse is known to have non-negative entries off the diagonal and strictly positive entries on the diagonal; see e.g.\ \citet{Plemmons1977}. It follows that $\gamma = (\Sigma^{-1} + Q)^{-1} \cdot \alpha$ is a vector with positive entries. Thus by Lemma \ref{lemma:V partials}, 
\[
\partial_{ij} V(q) = 2 \gamma_i \gamma_j \cdot [(\Sigma^{-1} + Q)^{-1}]_{ij} \geq 0.
\]

Conversely, suppose $\partial_{ij}V(q) \geq 0$ for every $q \geq 0$. Let us choose any vector $q$ whose coordinates are all sufficiently large. Then 
\[
(\Sigma^{-1} + Q)^{-1} = [(\Sigma^{-1} \cdot Q^{-1} + I_K) \cdot Q]^{-1} = Q^{-1} (\Sigma^{-1} \cdot Q^{-1} + I_K)^{-1} = Q^{-1} (I_K + o(1)),
\]
where $o(1)$ is a matrix that goes to zero as $q_1, \dots, q_K$ all go to infinity (at any rate). Note that when the $o(1)$ term is sufficiently small, $(I_K + o(1)) \cdot \alpha$ is a vector with positive coordinates. Thus, $\gamma = (\Sigma^{-1} + Q)^{-1} \alpha = Q^{-1} (I_K + o(1)) \alpha$ has positive coordinates. Together with $\partial_{12}V(q) \geq 0$ and Lemma \ref{lemma:V partials}, this implies $[(\Sigma^{-1} + Q)^{-1}]_{12} \geq 0$ for any such $q$. Using the matrix inverse formula, we further obtain that the determinant of the cofactor matrix $[\Sigma^{-1} + Q]_{-21}$ (i.e., the sub-matrix of $\Sigma^{-1} + Q$ with the second row and first column removed) must be non-positive. Expanding this determinant using permutations, it is easy to see that it contains the term $[\Sigma^{-1}]_{12} \cdot \prod_{k = 3}^{K} q_k$, which is in fact the dominant term when each $q_k$ is sufficiently large. Hence we deduce $[\Sigma^{-1}]_{12} \leq 0$, and similarly $[\Sigma^{-1}]_{ij} \leq 0$ for every pair $i \neq j$.
\end{proof}

We next show that Assumption \ref{assmp:Complements} characterizes when two sources are complements.
\begin{proposition}\label{prop:complements} 
Given any positive payoff weight vector $\alpha$. The following conditions on the prior covariance matrix $\Sigma$ are equivalent:
\begin{enumerate}
    \item $\Sigma$ has non-positive off-diagonal entries and $\Sigma \cdot \alpha$ has non-negative coordinates;
    \item Every pair of sources $i \neq j \in \{1, \dots, K\}$ are complements in the sense that $\partial_{ij} V(q) \leq 0$ at every cumulative attention vector $q \in \mathbb{R}_{+}^K$.
\end{enumerate}
\end{proposition}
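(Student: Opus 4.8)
The plan is to work throughout with the formula from Lemma \ref{lemma:V partials}, namely $\partial_{ij}V(q) = 2\gamma_i\gamma_j\,[(\Sigma^{-1}+Q)^{-1}]_{ij}$ where $Q=\diag(q)$ and $\gamma=(\Sigma^{-1}+Q)^{-1}\alpha$ is the vector of posterior covariances between the attributes and $\omega$. Writing $\widehat\Sigma = (\Sigma^{-1}+Q)^{-1}$ for the posterior covariance matrix, the complementarity condition $\partial_{ij}V(q)\le0$ is equivalent to $\gamma_i\gamma_j\,\widehat\Sigma_{ij}\le 0$ for every $q\ge0$ and $i\ne j$. So the proposition reduces entirely to controlling the signs of the off-diagonal entries $\widehat\Sigma_{ij}$ and of the coordinates of $\gamma$ as $q$ ranges over $\mathbb{R}_+^K$.

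For the direction $1\Rightarrow2$, I would prove two sign invariants that persist as the agent accumulates attention: (a) $\widehat\Sigma_{ij}\le0$ for all $i\ne j$, and (b) $\gamma\ge0$; together these give $\gamma_i\gamma_j\widehat\Sigma_{ij}\le0$, i.e.\ complementarity. Both follow from a differential (subtangentiality) argument along the flow of adding attention. From $\tfrac{d}{dq_k}\widehat\Sigma = -\widehat\Sigma e_k e_k'\widehat\Sigma$ one gets $\tfrac{d}{dq_k}\widehat\Sigma_{ij} = -\widehat\Sigma_{ik}\widehat\Sigma_{kj}$ and $\tfrac{d}{dq_k}\gamma_i = -\widehat\Sigma_{ik}\gamma_k$. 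At $q=0$ we have $\widehat\Sigma=\Sigma$, whose off-diagonals are $\le0$ by hypothesis, and $\gamma=\Sigma\alpha\ge0$ by hypothesis, so the process starts in the region $R=\{\widehat\Sigma_{ij}\le0\ \forall i\ne j\}\cap\{\gamma\ge0\}$. On the face $\widehat\Sigma_{ij}=0$ every single-coordinate derivative is $\le0$ (it vanishes for $k\in\{i,j\}$, and for $k\ne i,j$ it equals $-\widehat\Sigma_{ik}\widehat\Sigma_{kj}\le0$ since both factors are $\le0$); on the face $\gamma_i=0$ every derivative is $\ge0$ (it vanishes for $k=i$ and equals $-\widehat\Sigma_{ik}\gamma_k\ge0$ for $k\ne i$). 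Since attention is only ever added, all admissible directions satisfy $dq\ge0$, so these boundary inequalities make $R$ forward-invariant; hence the invariants hold at every $q\ge0$ and $\partial_{ij}V(q)\le0$ throughout. This simultaneously establishes the ``absorbing'' property of Assumption \ref{assmp:Complements} used in the main text.

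For the converse $2\Rightarrow1$, the strategy is to extract the required signs in two limiting regimes, mirroring and extending the converse of Proposition \ref{prop:substitutes}. Sending all coordinates of $q$ to infinity gives $\widehat\Sigma = Q^{-1}-Q^{-1}\Sigma^{-1}Q^{-1}+\cdots$ and $\gamma = Q^{-1}(I+o(1))\alpha>0$, so complementarity forces $\widehat\Sigma_{ij}\le0$ in this limit, i.e.\ $[\Sigma^{-1}]_{ij}\ge0$ — a necessary but not sufficient consequence. The off-diagonal condition on $\Sigma$ itself is instead read at $q=0$, where $\widehat\Sigma=\Sigma$ and $\gamma=\Sigma\alpha$, so that $\partial_{ij}V(0)=2(\Sigma\alpha)_i(\Sigma\alpha)_j\Sigma_{ij}\le0$; this yields $\Sigma_{ij}\le0$ whenever $(\Sigma\alpha)_i$ and $(\Sigma\alpha)_j$ share a common nonzero sign. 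Everything therefore hinges on first proving $\Sigma\alpha\ge0$, i.e.\ $\gamma(0)\ge0$.

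I expect establishing $\Sigma\alpha\ge0$ to be the main obstacle. Unlike the substitutes case, no single limit simultaneously guarantees $\gamma>0$ and exposes the entries of $\Sigma$: positivity of $\gamma$ is automatic only as $q\to\infty$, whereas the entries of $\Sigma$ are visible only as $q\to0$, and a priori some $(\Sigma\alpha)_m$ could be negative or zero. I would argue by contradiction: if $(\Sigma\alpha)_m<0$, then since $\alpha'\gamma(q)=\alpha'\widehat\Sigma\alpha=V(q)>0$ forces some coordinate of $\gamma(0)$ to be strictly positive, the complementarity inequality at $q=0$ pins down a sign pattern among the $\Sigma_{mp}$; tracking $\gamma$ along judiciously chosen single-coordinate increases (using $\tfrac{d\gamma_m}{dq_m}=-\widehat\Sigma_{mm}\gamma_m$, which keeps $\gamma_m$ on one side of zero, against the cross terms that would be needed to flip it) should produce a $q$ at which some $\gamma_i\gamma_j\widehat\Sigma_{ij}>0$, contradicting condition 2. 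Degenerate indices with $(\Sigma\alpha)_i=0$ are then cleaned up by a continuity/perturbation argument. Once $\Sigma\alpha\ge0$ is secured, $\Sigma_{ij}\le0$ follows from the $q=0$ evaluation above, completing the equivalence.
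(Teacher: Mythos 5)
Your forward direction ($1\Rightarrow 2$) is essentially sound. The invariance of $\{\gamma\ge 0\}$ via $\partial\gamma_i/\partial q_k=-\widehat\Sigma_{ik}\gamma_k$ closely parallels the paper's own argument (which makes the boundary behaviour rigorous by tracking $f_i(q)=(q_i+\epsilon)\gamma_i$ and invoking the Mean Value Theorem, plus a $\Sigma+\delta I_K$ perturbation to handle weak inequalities); your invariance of $\{\widehat\Sigma_{ij}\le 0\}$ via $\partial\widehat\Sigma_{ij}/\partial q_k=-\widehat\Sigma_{ik}\widehat\Sigma_{kj}$ replaces the paper's citation that inverse $M$-matrices are closed under adding a nonnegative diagonal (Johnson 1982), and is a legitimate self-contained alternative. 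To be fully rigorous you should spell out the perturbation that turns the non-strict boundary inequalities into a genuine forward-invariance statement, exactly as the paper does for the $\gamma$ part.

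The converse ($2\Rightarrow 1$) has a genuine gap, and it sits precisely where you say you expect the main obstacle: proving that $\gamma$ cannot have a negative coordinate. Your plan --- assume $(\Sigma\alpha)_m<0$, read off a sign pattern from complementarity at $q=0$, then ``track $\gamma$ along judiciously chosen single-coordinate increases'' until some $\gamma_i\gamma_j\widehat\Sigma_{ij}>0$ appears --- is a hope, not an argument; you never exhibit the $q$ that delivers the contradiction, and it is not clear the dynamics ever produce one (the derivative $\partial\gamma_m/\partial q_m=-\widehat\Sigma_{mm}\gamma_m$ pushes $\gamma_m$ \emph{toward} zero, so negativity can persist forever). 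The paper closes this step with a purely \emph{static} argument at a single $q$ where all $\gamma_i\ne 0$: if $\gamma$ had coordinates of both signs, complementarity forces the off-diagonal block of $\widehat\Sigma=(\Sigma^{-1}+Q)^{-1}$ linking the positive-$\gamma$ and negative-$\gamma$ index sets to be entrywise nonnegative; writing the ``negative'' coordinates of $\gamma=\widehat\Sigma\alpha$ as $\widehat\Sigma_{BL}\alpha_{+}+\widehat\Sigma_{BR}\alpha_{-}$, the first term is $\ge 0$ and the second has a strictly positive coordinate because $\widehat\Sigma_{BR}$ is positive definite and $\alpha_-\gg 0$ --- contradicting that all those coordinates are negative. (That $\gamma$ has at least one positive coordinate follows from $\alpha'\gamma=V(q)>0$, as you note.) Your cleanup of the degenerate indices with $(\Sigma\alpha)_i=0$ also depends on this unproved positivity at nearby $q$; the paper handles it by observing that $\{q:\gamma_i(q)=0 \text{ for some } i\}$ has measure zero, applying the static argument at generic $q\to 0$, and passing to the limit. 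Without the block-decomposition step (or a worked-out substitute), your converse does not go through.
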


\begin{proof}
To show the first condition implies the second, we note that since $\Sigma$ is assumed to be an $M$-matrix, $\Sigma^{-1}$ is an inverse $M$-matrix. By Theorem 3 in \citet{Johnson1982}, $\Sigma^{-1}+Q$ is also an inverse $M$-matrix. This implies that for any attention vector $q$, the posterior covariance matrix $(\Sigma^{-1}+Q)^{-1}$ must be an $M$-matrix with non-positive off-diagonal entries. We claim that the vector $\gamma = (\Sigma^{-1}+Q)^{-1} \cdot \alpha$ has non-negative coordinates. Once this is shown, Lemma \ref{lemma:V partials} will imply that $\partial_{ij}V = 2 \gamma_i \gamma_j [(\Sigma^{-1}+Q)^{-1}]_{ij} \leq 0$ for every pair $i \neq j$.

By assumption, if $q$ is the zero vector, then $\gamma = \Sigma \cdot \alpha$ indeed has non-negative coordinates. Our goal below is to show this also holds at any $q \geq 0$. To do this, we first work under the stronger assumption that $\Sigma \cdot \alpha$ has strictly positive coordinates, so that $\gamma(0)$ is a positive vector. Recall from the proof of Lemma \ref{lemma:V partials} that when $\gamma_i$ is viewed as a function of $q$, its partial derivatives are given by
\[
\frac{\partial \gamma_i}{\partial q_j} = - \gamma_j \cdot [(\Sigma^{-1}+Q)^{-1}]_{ij} \quad \text{for each } j \in \{1, \dots, K\}. 
\]
Now since $\Sigma^{-1}$ is positive-definite, we can choose $\epsilon > 0$ such that $\Sigma^{-1} \geq \epsilon I_K$ in the matrix order. Fixing this $\epsilon$, we consider the $K$ functions
\[
f_i(q_1, \dots, q_K) = (q_i + \epsilon) \cdot \gamma_i(q_1, \dots, q_K).
\]
For every $j \neq i$, $\partial f_i / \partial q_j = (q_i + \epsilon) \cdot -\gamma_j \cdot [(\Sigma^{-1}+Q)^{-1}]_{ij}$. This product is non-negative whenever $\gamma_j \geq 0$, since $[(\Sigma^{-1}+Q)^{-1}]_{ij} \leq 0$ as shown above. On the other hand, by the product rule, 
\[
\frac{\partial f_i}{\partial q_i} = \gamma_i \cdot [1 - (q_i + \epsilon)[(\Sigma^{-1}+Q)^{-1}]_{ii}].
\]
This is non-negative whenever $\gamma_i \geq 0$, since $[(\Sigma^{-1}+Q)^{-1}]_{ii} \leq [(\epsilon I_K+Q)^{-1}]_{ii} = (q_i+\epsilon)^{-1}$ where the inequality uses standard properties of the matrix order. 

Hence, we have shown that whenever $f_1(q), \dots, f_K(q)$ are all non-negative, their derivatives with respect to each $q_i$ are also non-negative. Moreover, we know from our stronger assumption that $f_1, \dots, f_K$ are strictly positive at $q = 0$. These together imply $f_1(q), \dots, f_K(q)$ are always strictly positive, by the following argument. Suppose for contradiction that there exists some $i$ and some $q \geq 0$ such that $f_i(q) \leq 0$. By continuity of $f$ and a limit argument, we may assume $q$ is minimal in the sense that at every $q' < q$, $f_j(q')$ is positive for every $j$. Thus, if we let $g(t) = f_i(t \cdot q)$ be the one-variable function defined for $t \in [0, 1]$, we have $g(0) > 0$, $g'(t) \geq 0$ for $t \in [0, 1)$ and $g(1) \leq 0$. This contradicts the Mean Value Theorem. 

It follows that $\gamma(0)$ being positive implies $\gamma(q)$ is positive. We now extend this result to the case of weak inequalities. Suppose $\gamma(0) = \Sigma \cdot \alpha$ has non-negative coordinates. Then by considering $\Sigma + \delta I_K$ instead of $\Sigma$, we see that the corresponding $\gamma$ vector is strictly positive at $q = 0$. The above analysis applied to the $M$-matrix $\Sigma + \delta I_K$ thus implies $[(\Sigma + \delta I_K)^{-1} + Q]^{-1} \cdot \alpha$ is positive for any attention vector $q \geq 0$. Letting $\delta \to 0$ yields $[\Sigma^{-1}+Q]^{-1} \cdot \alpha \geq 0$, as we desire to show. This completes the proof that the first condition in the proposition guarantees complementarity. 

\bigskip

Turning to the converse, we assume at every $q \geq 0$, $\partial_{ij} V = 2\gamma_i(q) \cdot \gamma_j(q) \cdot [(\Sigma^{-1}+Q)^{-1}]_{ij} \leq 0$. Let us choose $q$ such that $\gamma_i(q)$ is nonzero for each $i$; its existence will be verified later. For this $q$, we claim that $\gamma_i(q)$ must be positive for each $i$. Indeed, since $\gamma = (\Sigma^{-1}+Q)^{-1} \alpha$, we have $\alpha' \gamma = \alpha' (\Sigma^{-1}+Q)^{-1} \alpha > 0$. Thus $\gamma$ must have at least one positive coordinate. Suppose for contradiction that $\gamma$ has a negative coordinate, then we can without loss assume $\gamma_i$ is positive for $i \leq k$ and negative for $i > k$, where $k$ satisfies $1 \leq k < K$. Using $\partial_{ij} V \leq 0$, we deduce $[(\Sigma^{-1}+Q)^{-1}]_{ij} \geq 0$ for every pair $i \leq k$ and $j  > k$. Now let us decompose $(\Sigma^{-1}+Q)^{-1}$ into four block sub-matrices: $[(\Sigma^{-1}+Q)^{-1}]_{TL}$, $[(\Sigma^{-1}+Q)^{-1}]_{TR}$, $[(\Sigma^{-1}+Q)^{-1}]_{BL}$ and $[(\Sigma^{-1}+Q)^{-1}]_{BR}$ are the top-left $k \times k$, top-right $k \times (K-k)$, bottom-left $(K-k) \times k$ and bottom-right $(K-k) \times (K-k)$ sub-matrices of $(\Sigma^{-1}+Q)^{-1}$ respectively. Recall that $\gamma = (\Sigma^{-1}+Q)^{-1} \cdot \alpha$. By looking at the last $K-k$ coordinates, we obtain 
\begin{align*}
(\gamma_{k+1}, \dots, \gamma_K)' &= \left([(\Sigma^{-1}+Q)^{-1}]_{BL}, [(\Sigma^{-1}+Q)^{-1}]_{BR} \right) \cdot \alpha \\
&= [(\Sigma^{-1}+Q)^{-1}]_{BL} \cdot (\alpha_1, \dots, \alpha_k)' + [(\Sigma^{-1}+Q)^{-1}]_{BR} \cdot (\alpha_{k+1}, \dots, \alpha_{K})'.
\end{align*}
The preceding analysis tells us that $[(\Sigma^{-1}+Q)^{-1}]_{BL}$ has non-negative entries, so the vector $[(\Sigma^{-1}+Q)^{-1}]_{BL} \cdot (\alpha_1, \dots, \alpha_k)'$ is non-negative. In addition, $[(\Sigma^{-1}+Q)^{-1}]_{BR}$ is positive-definite, so the vector $[(\Sigma^{-1}+Q)^{-1}]_{BR} \cdot (\alpha_{k+1}, \dots, \alpha_{K})'$ has at least one positive coordinate. Thus, the above displayed equation contradicts the assumption that $\gamma_{k+1}, \dots, \gamma_K$ are all negative. 

We thus know that if $\gamma(0)$ has nonzero coordinates, then it is in fact a positive vector. Complementarity further requires $\Sigma_{ij} \leq 0$ for all $i \neq j$, which would complete the proof. In the general case, $\gamma(0)$ may have some coordinates equal to zero, so we instead look for $q$ close to the zero vector such that $\gamma(q)$ has nonzero coordinates. To see why such $q$ exists, note that we can calculate the matrix inverse $(\Sigma^{-1} + Q)^{-1}$ using the determinants of cofactor matrices. From this we see that modulo a multiplicative factor of $[\det(\Sigma^{-1} + Q)]^{-1}$, each $\gamma_i(q) = (\Sigma^{-1} + Q)^{-1} \alpha$ is a nonzero multi-linear polynomial in the $K-1$ variables $\{q_{j}\}_{j \neq i}$ (with leading term $\alpha_i \cdot \prod_{j \neq i} q_j$). Thus, the set of vectors $q$ that make $\gamma_i(q)$ equal to zero has measure zero. It follows that we can choose $q$ with arbitrarily small coordinates, such that $\gamma_i(q)$ is nonzero for all $i$. By the earlier analysis, $\gamma(q)$ is a positive vector, and $[(\Sigma^{-1} + Q)^{-1}]_{ij} \leq 0$ for all $i \neq j$. Letting $q \to 0$, we conclude by continuity that $\gamma(0) \geq 0$ and $\Sigma_{ij} \leq 0$. This is what we desire to show.  
\end{proof}  



\subsection{Proof of Proposition \ref{prop:binary choice FSS}}\label{appx:binary choice}

\subsubsection{Preliminary Analysis}
Suppose the agent's prior about the two unknown payoffs is normal with covariance matrix $ \left(\begin{array}{cc}
\Sigma_{11} & -\Sigma_{12} \\
-\Sigma_{21} & \Sigma_{22} \end{array}\right)$. Throughout we assume $\Sigma_{11} \geq \Sigma_{22}$. The objective is to maximize
\[
\mathbb{E}[\mathbb{E}[\max\{v_1, v_2\} \mid \mathcal{F}_{\tau} ] - c\tau ],
\]
To reduce this problem to our main model, we use $\max\{v_1, v_2\} = \max\{v_1 - v_2, 0\} + v_2$ to rewrite the objective as 
\[
\mathbb{E}[\mathbb{E}[\max\{v_1 - v_2, 0\}  \mid \mathcal{F}_{\tau} ] - c\tau] + \mathbb{E}[ \mathbb{E}[v_2 \mid F_{\tau}]]
\]
The posterior expectations of $v_2$, $M_t = \mathbb{E}[v_2 \mid \mathcal{F}_{t}]$, form a continuous martingale with continuous paths. Moreover, the family $\{M_t\}$ are uniformly integrable because they are conditional expectations of an integrable random variable $v_2$. Thus we can apply Doob's Optional Sampling Theorem to deduce 
$
\mathbb{E}[ \mathbb{E}[v_2 \mid \mathcal{F}_{\tau}]] = \mathbb{E}[v_2],
$
which is just the prior expectation of $v_2$ (and does not depend on the agent's strategy). It follows that the agent simply maximizes 
\[
\mathbb{E}[\mathbb{E}[\max\{v_1 - v_2, 0\} \mid \mathcal{F}_{\tau} ] - c\tau ].
\]

As a corollary, the payoff difference $v_1 - v_2$ is a sufficient statistic for the agent's decision. Now if we let $\theta_1 = v_1$, $\theta_2 = -v_2$, then the prior covariance matrix about $\theta$ is simply $\Sigma := \left(\begin{array}{cc}
\Sigma_{11} & \Sigma_{12} \\
\Sigma_{21} & \Sigma_{22} \end{array}\right)$. This returns our main model with prior covariance matrix $\Sigma$ and payoff-relevant state $\omega = v_1 - v_2 = \theta_1 + \theta_2$. Since $\alpha_1 = \alpha_2 = 1$, our Theorem \ref{thm:K=2} applies and yields Corollary \ref{corr:binary choice attention}. 

For the subsequent analysis, we need to keep track of how fast the posterior variance of $\omega$ evolves over time. These (minimal) posterior variances are given below:
\begin{lemma}\label{lemma:binary choice variances}
Suppose $\Sigma_{11} \geq \Sigma_{22}$. When adopting the optimal information acquisition strategy, the agent's posterior variance of $\omega = \theta_1 + \theta_2$ at time $t$ is given by 
\[
\sigma_t^2 = \begin{cases} \frac{\Sigma_{11} + \Sigma_{22} + 2 \Sigma_{12} + \det(\Sigma)t}{1 + \Sigma_{11} t} &\text{if } t \leq t_1^*; \\
\frac{4 \det(\Sigma)}{\Sigma_{11} + \Sigma_{22} - 2\Sigma_{12} + \det(\Sigma)t} &\text{if } t \geq t_1^*.
\end{cases}
\]
\end{lemma}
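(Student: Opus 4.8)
The plan is to evaluate the posterior variance function $V$ of \eqref{eq:V(q)} directly along the deterministic optimal cumulative attention path pinned down by Corollary \ref{corr:binary choice attention}. Because $\alpha_1 = \alpha_2 = 1$, the vector $\gamma = (\Sigma^{-1}+Q)^{-1}\alpha$ computed explicitly in the proof of Theorem \ref{thm:K=2} (Appendix \ref{appx:theorem K=2}) gives $V(q) = \alpha'\gamma = \gamma_1 + \gamma_2$ in closed form. Writing $Q = \diag(q_1,q_2)$ and $D := \det(\Sigma)$, and using $x_1 = x_2 = D$ together with $cov_1 + cov_2 = \Sigma_{11} + \Sigma_{22} + 2\Sigma_{12}$, this yields
\[
V(q_1,q_2) = \frac{D(q_1+q_2) + \Sigma_{11} + \Sigma_{22} + 2\Sigma_{12}}{1 + \Sigma_{11} q_1 + \Sigma_{22} q_2 + D\, q_1 q_2},
\]
where the denominator is exactly $\det(I + \Sigma Q)$ (cf.\ Lemma \ref{lemma:V}). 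With this formula in hand, the lemma reduces to substituting the two-stage path and simplifying.

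First I would substitute the Stage 1 allocation $q(t) = (t,0)$, valid for $t \le t_1^*$, which collapses the denominator to $1 + \Sigma_{11} t$ and reproduces the first branch immediately. Next I would substitute the Stage 2 allocation: by Corollary \ref{corr:binary choice attention} the agent puts all attention on source $1$ until $t_1^* = (\Sigma_{11}-\Sigma_{22})/D$ and then splits equally, so the cumulative vector is $q(t) = \big(\tfrac{t+t_1^*}{2},\,\tfrac{t-t_1^*}{2}\big)$ for $t \ge t_1^*$. The bulk of the work is the Stage 2 simplification, and the cleanest route is to avoid reducing the quotient directly: instead I would verify the claimed identity by cross-multiplication, checking that
\[
\big(Dt + \Sigma_{11} + \Sigma_{22} + 2\Sigma_{12}\big)\big(Dt + \Sigma_{11} + \Sigma_{22} - 2\Sigma_{12}\big) = 4D\big(1 + \Sigma_{11} q_1 + \Sigma_{22} q_2 + D\, q_1 q_2\big).
\]
The left-hand side is a difference of squares, $(Dt + \Sigma_{11} + \Sigma_{22})^2 - 4\Sigma_{12}^2$. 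On the right-hand side, substituting $q_1,q_2$ and using $t_1^* = (\Sigma_{11}-\Sigma_{22})/D$ to evaluate $\Sigma_{11}q_1 + \Sigma_{22}q_2$ and $D\,q_1 q_2$, then applying $\Sigma_{11}\Sigma_{22} = D + \Sigma_{12}^2$, the $(\Sigma_{11}-\Sigma_{22})^2$ contributions recombine so that the two polynomials match term by term in $t$. This establishes the second branch.

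Finally, as a consistency check I would verify that the two branches agree at $t = t_1^*$, where both reduce to $\tfrac{2D}{\Sigma_{11}-\Sigma_{12}}$, confirming that $\sigma_t^2$ is continuous across the switch point and that I have stitched the stages together correctly. The only real obstacle is the bookkeeping in the Stage 2 algebra; no conceptual difficulty arises, since $V$ is available in closed form, the optimal path is deterministic, and the difference-of-squares factorization is what keeps the cross-multiplication manageable.
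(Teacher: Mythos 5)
Your proposal is correct, and for the second branch it takes a genuinely different route from the paper. The paper's proof handles Stage 1 exactly as you do (direct computation of the posterior covariance matrix, equivalently $V=\alpha'\gamma$ with $\alpha=(1,1)'$), but for Stage 2 it switches to an informational argument: once the two attributes have equal posterior variance at $t_1^*$, the pair of signals received per unit of time is equivalent to independent signals about $\theta_1+\theta_2$ and $\theta_1-\theta_2$ each with noise $\mathcal{N}(0,4)$, so precisions simply add and $\sigma_t^2=\bigl(\sigma_{t_1^*}^{-2}+(t-t_1^*)/4\bigr)^{-1}$. Your route instead stays entirely within the rational-function formula
\[
V(q_1,q_2)=\frac{D(q_1+q_2)+\Sigma_{11}+\Sigma_{22}+2\Sigma_{12}}{1+\Sigma_{11}q_1+\Sigma_{22}q_2+D q_1 q_2},
\]
substitutes the Stage 2 cumulative vector $q(t)=\bigl(\tfrac{t+t_1^*}{2},\tfrac{t-t_1^*}{2}\bigr)$, and verifies the claimed closed form by cross-multiplication; I checked the key identity and it reduces to $(\Sigma_{11}+\Sigma_{22})^2-4\Sigma_{12}^2=4\det(\Sigma)+(\Sigma_{11}-\Sigma_{22})^2$, which is exactly $\det(\Sigma)=\Sigma_{11}\Sigma_{22}-\Sigma_{12}^2$, so the algebra goes through, and the two branches do agree at $t_1^*$ with common value $2\det(\Sigma)/(\Sigma_{11}-\Sigma_{12})$. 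What each approach buys: yours is mechanical and uniform (one formula, one substitution, no need to argue that $\theta_1+\theta_2$ and $\theta_1-\theta_2$ remain independent along the Stage 2 path), while the paper's signal-decomposition argument yields the conceptual fact---reused later, e.g.\ in the discussion of why asymmetric priors speed up learning---that in Stage 2 the agent learns about $\theta_1+\theta_2$ exactly as if receiving a single $\mathcal{N}(0,4)$-noise signal per unit of time.
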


\begin{proof}
At time $t \leq t_1^*$, the posterior covariance matrix is 
\[
\left[\left(\begin{array}{cc}
\Sigma_{11} & \Sigma_{12} \\
\Sigma_{21} & \Sigma_{22} \end{array}\right)^{-1} + \left(\begin{array}{cc}
t & 0 \\
0 & 0 \end{array}\right)\right]^{-1} = \frac{1}{1 + \Sigma_{11}t} \left(\begin{array}{cc}
\Sigma_{11} & \Sigma_{12} \\
\Sigma_{21} & \Sigma_{22} + \det(\Sigma) t \end{array}\right).
\]
This gives the first part of the lemma. 

At time $t = t_1^*$, the levels of uncertainty about $\theta_1$ and $\theta_2$ have equalized. From this time on, each unit of time produces two normal signals $\theta_1 + \epsilon_1$ and $\theta_2 + \epsilon_2$ with $\epsilon_1$ and $\epsilon_2$ independently and identically distributed according to $\mathcal{N}(0,2)$. These two signals are informationally equivalent to their sum and difference $\theta_1 + \theta_2 + \epsilon_1 + \epsilon_2$ and $\theta_1 - \theta_2 + \epsilon_1 - \epsilon_2$, with i.i.d.\ noise terms $\epsilon_1 + \epsilon_2, \epsilon_1 - \epsilon_2 \sim \mathcal{N}(0,4)$. Note that as long as the agent's uncertainty about $\theta_1$ and $\theta_2$ are the same, $\theta_1 + \theta_2$ and $\theta_1 - \theta_2$ are independent. Thus, in terms of learning about $\theta_1 + \theta_2$, it is as if the agent receives only the signal $\theta_1 + \theta_2 + \mathcal{N}(0,4)$ over each unit of time. This observation enables us to calculate the posterior variance of $\theta_1 + \theta_2$ as follows: For any $t \geq t_1^*$,
\[
\sigma_t^2 = \left(\frac{1}{\sigma_{t_1^*}^2} + \frac{t-t_1^*}{4}\right)^{-1},
\]
where $\frac{1}{\sigma_{t_1^*}^2}$ is the belief precision of $\theta_1 + \theta_2$ at time $t_1^*$ and $\frac{t-t_1^*}{4}$ is the signal precision from those $\theta_1 + \theta_2 + \mathcal{N}(0,4)$ signals between time $t_1^*$ and time $t$. Plugging in $t_1^* = \frac{\Sigma_{11} - \Sigma_{22}}{\det(\Sigma)}$ and $\sigma_{t_1^*}^2 = \frac{2 \det(\Sigma)}{\Sigma_{11} - \Sigma_{12}}$ then yields the second part of the lemma.
\end{proof}

\subsubsection{Stopping Boundaries and Choice Accuracy}

Using the posterior variances characterized above, we can write down the process for the posterior expectation of $\omega$, which we denote by $Y_t = \mathbb{E}[\theta_1 + \theta_2 \mid \mathcal{F}_t]$: 
\begin{equation}\label{eq:binary choice belief process}
Y_t = Y_0 + \int_{0}^{t} \sqrt{\frac{\partial \sigma_s^2}{\partial s}} \cdot d B_s,
\end{equation}
where $B_s$ is a standard Brownian motion with respect to the filtration of the agent's information. The volatility term $\sqrt{\frac{\partial \sigma_s^2}{\partial s}}$ is such that the variance of posterior expectation $Y_t$ matches the reduction in posterior variance $\sigma_0^2 - \sigma_t^2$. This representation is a direct generalization of Lemma 1 in \cite{FudenbergStrackStrzalecki} and follows from standard results. 

Therefore, given any prior covariance matrix $\Sigma$ and any prior expectation $Y_0 = y$, the agent's problem can be rewritten as 
\begin{equation}\label{eq:binary choice objective}
\max_{\tau} \mathbb{E}[\max\{Y_{\tau}, 0\} - c \tau],
\end{equation}
where $\tau$ can be any stopping time adapted to the $Y$ process given above. We now define the value function $U(y, c, \Sigma)$ to be the agent's maximal payoff in this problem. It is easy to see that $U$ is non-negative, increasing in $y$ and decreasing in $c$ (we refer to weak monotonicity, unless otherwise specified). In addition, just as in \cite{FudenbergStrackStrzalecki}, the \emph{stopping boundary} at time $t = 0$ is symmetric and given by
\[
k^*(c, \Sigma) = \min\{x > 0:~ U(-x, c, \Sigma ) = 0\}.
\]
What this means is that if the prior expectation satisfies $\vert Y_0 \vert \geq k^*(c, \Sigma)$, then the agent optimally stops immediately and chooses good 1 or good 2 depending on whether $Y_0$ is positive or negative. Whereas if $\vert Y_0 \vert < k^*(c, \Sigma)$, then the optimal stopping time $\tau$ is strictly positive. 

To study how the agent's choice accuracy changes over time, we need to also consider the stopping boundaries at later times $t > 0$. For this we let $\Sigma_t$ be the posterior covariance matrix of $\theta$ at time $t$, given the prior covariance matrix $\Sigma$ and given the optimal attention strategy. Then the stopping boundary at time $t$ is simply $k^*(c, \Sigma_t)$ (this implicitly uses the fact that starting from the prior $\Sigma_t$, the posterior at time $s$ would be $\Sigma_{t+s}$). 

The next lemma (essentially Theorem 2 in \cite{FudenbergStrackStrzalecki}) characterizes a necessary and sufficient condition for choice accuracy to decrease over time, in terms of these stopping boundaries:

\begin{lemma}\label{lemma:binary choice accuracy boundary}
Given $c$ and $\Sigma$, and suppose $\vert Y_0 \vert < k^*(c, \Sigma)$ (so that the agent does not immediately stop). Let $p_t$ be the conditional probability that good 1 is better than good 2 when the agent stops at time $t$ and chooses good 1 (and vice verse, by symmetry). Then $p_t$ decreases in $t$ if and only if $\frac{k^*(c, \Sigma_t)}{\sigma_t}$ decreases in $t$. 
\end{lemma}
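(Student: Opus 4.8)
The plan is to establish the explicit formula $p_t = \Phi\!\left(k^*(c,\Sigma_t)/\sigma_t\right)$, i.e.\ exactly the expression in \eqref{eq:pt}; once this is in hand the claimed equivalence is immediate, since $\Phi$ is strictly increasing and therefore $p_t$ is decreasing in $t$ if and only if the ratio $k^*(c,\Sigma_t)/\sigma_t$ is. All the content of the lemma is thus in deriving this formula, and the derivation rests entirely on the Gaussian structure recorded in Lemma \ref{lemma:binary choice variances}.

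First I would recall that, conditional on the history $\mathcal{F}_t$, the payoff-relevant state $\omega = \theta_1 + \theta_2$ is normally distributed with mean $Y_t = \mathbb{E}[\omega \mid \mathcal{F}_t]$ and variance $\sigma_t^2$, where by Lemma \ref{lemma:binary choice variances} the variance $\sigma_t^2$ is a \emph{deterministic} function of $t$. Hence the posterior probability that good $1$ is better is
\[
\mathbb{P}(\omega > 0 \mid \mathcal{F}_t) = \Phi\!\left(\frac{Y_t}{\sigma_t}\right),
\]
a quantity that depends on the entire history only through the single statistic $Y_t$. Next I would identify the conditioning event: because the stopping boundary is symmetric and the agent chooses good $1$ precisely when $Y_\tau > 0$, the event ``stop at time $t$ and choose good $1$'' is $\mathcal{F}_t$-measurable and forces $Y_t = k^*(c,\Sigma_t)$ on it. The key step is then the following conditioning argument: for any $\mathcal{F}_t$-measurable event $A$ on which $Y_t$ is pinned to a constant value $y$, we have $\mathbb{P}(\omega>0\mid A) = \Phi(y/\sigma_t)$, since $\mathbb{P}(\omega>0\mid\mathcal{F}_t) = \Phi(Y_t/\sigma_t)$ is a deterministic function of $Y_t$ and hence equal to the constant $\Phi(y/\sigma_t)$ on $A$, so averaging over $A$ (using the tower property and $A \in \mathcal{F}_t$) leaves it unchanged. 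Applying this with $y = k^*(c,\Sigma_t)$ yields $p_t = \Phi(k^*(c,\Sigma_t)/\sigma_t)$, completing the derivation.

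The main subtlety to handle carefully is that, in continuous time, $\{\tau = t\}$ is typically a null event, so ``conditional on stopping exactly at $t$'' must be given meaning via a limiting argument---conditioning on the event $\{\tau \in [t,t+\varepsilon],\ \text{choose good }1\}$ and letting $\varepsilon \to 0$---or via regular conditional distributions. The conceptual content that makes this harmless is precisely the observation above: since $\sigma_t^2$ is deterministic, the posterior law of $\omega$ given the whole path up to $t$ is a function of $Y_t$ alone, so conditioning on any finer feature of the stopping event beyond the value $Y_t = k^*(c,\Sigma_t)$ carries no additional information about $\omega$. This is exactly where the normal environment is used, and it is what lets the argument go through regardless of the detailed geometry of the first-passage event. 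With the formula in hand, the final monotonicity equivalence follows from the strict monotonicity of $\Phi$ and requires no further work.
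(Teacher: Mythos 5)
Your proposal is correct and follows essentially the same route as the paper's proof: identify that on the stopping-and-choose-good-1 event the posterior mean is pinned at $Y_t = k^*(c,\Sigma_t)$ while the posterior variance $\sigma_t^2$ is deterministic, conclude $p_t = \Phi\bigl(k^*(c,\Sigma_t)/\sigma_t\bigr)$, and then invoke monotonicity of $\Phi$. Your additional care about the null event $\{\tau = t\}$ in continuous time is a reasonable refinement the paper elides, but it does not change the argument.
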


\begin{proof}
If the agent stops at time $t$ and chooses good 1, then $Y_t = k^*(c, \Sigma_t)$ (and $t$ is the earliest time this happens). So by definition, the posterior belief of $\omega$ is normal with mean $k^*(c, \Sigma_t)$ and standard deviation $\sigma_t$. Thus, the conditional probability that $\omega > 0$ (i.e., good 1 is better) is the normal c.d.f.\ evaluated at $\frac{k^*(c, \Sigma_t)}{\sigma_t}$. This yields the result. 
\end{proof}

Note that $\sigma_t^2$, being the posterior variance of $\theta_1 + \theta_2$, is just the sum of all the entries in the matrix $\Sigma_t$. Thus the condition in Lemma \ref{lemma:binary choice accuracy boundary} is ultimately about how $k^*(c, \Sigma)$ varies with $\Sigma$. The next section studies this change in detail.

\subsubsection{Effect of $\Sigma$ on Stopping Boundary}

We will say two covariance matrices $\widetilde{\Sigma}$ and $\widehat{\Sigma}$ \emph{induce the same prior uncertainty}, if the prior variances $\widetilde{\sigma}_0^2 = \widetilde{\Sigma}_{11} + 2\widetilde{\Sigma}_{12} + \widetilde{\Sigma}_{22}$ and $\widehat{\sigma}_0^2 = \widehat{\Sigma}_{11} + 2 \widehat{\Sigma}_{12} + \widehat{\Sigma}_{22}$ of $\theta_1 + \theta_2$ are equal. 

The result below provides a sufficient condition for stopping boundaries under two different prior covariance matrices to be comparable: 
\begin{lemma}\label{lemma:binary choice general change}
Let $\widetilde{\Sigma}$ and $\widehat{\Sigma}$ be two covariance matrices that induce the same prior uncertainty. Suppose further that the following two conditions holds:
\begin{enumerate}
    \item[(a)] $\widetilde{\Sigma}_{11} - \widetilde{\Sigma}_{22} \geq \widehat{\Sigma}_{11} - \widehat{\Sigma}_{22} \geq 0$;
    \item[(b)] $(\widetilde{\sigma}_{\widetilde{t}_1^*})^2 = \frac{2 \det (\widetilde{\Sigma})}{\widetilde{\Sigma}_{11} - \widetilde{\Sigma}_{12}} \leq \frac{2 \det (\widehat{\Sigma})}{\widehat{\Sigma}_{11} - \widehat{\Sigma}_{12}} = (\widehat{\sigma}_{\widehat{t}_1^*})^2 $.
\end{enumerate}
Then the posterior variances satisfy $\widetilde{\sigma}_t^2 \leq \widehat{\sigma}_t^2$ for all $t \geq 0$. Consequently $k^*(c, \widetilde{\Sigma}) \geq k^*(c, \widehat{\Sigma})$.
\end{lemma}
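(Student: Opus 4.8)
The plan is to reparametrize each posterior-variance path by its own value and observe that it then solves an \emph{autonomous} ODE, collapsing the whole comparison to a single pointwise inequality between velocity fields. Writing $v=\sigma_t^2$ and differentiating the two branches of Lemma~\ref{lemma:binary choice variances}, I would verify that $v(t)$ obeys $\dot v=F(v)$ where, setting $s_0:=\sigma_0^2=\Sigma_{11}+\Sigma_{22}+2\Sigma_{12}$, $cov_1:=\Sigma_{11}+\Sigma_{12}$, and $v_1:=\sigma_{t_1^*}^2=\tfrac{2\det(\Sigma)}{\Sigma_{11}-\Sigma_{12}}$,
\[
F(v)=-\frac{(\Sigma_{11}v-\det(\Sigma))^2}{cov_1^2}\ \text{ on }\ [v_1,s_0],\qquad F(v)=-\frac{v^2}{4}\ \text{ on }\ (0,v_1].
\]
The essential observation is that the stage boundary is itself a level of $v$ (stage~1 is exactly $\{v\ge v_1\}$), so $F$ is a bona fide function of $v$ alone; a direct check shows the two branches agree at $v_1$, so $F$ is continuous and strictly negative on $(0,s_0]$. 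Because $t\mapsto v(t)$ is then a strictly decreasing bijection with inverse $T(v)=\int_v^{s_0}dw/|F(w)|$ — the time needed to drive the posterior variance down to level $v$ — the assertion $\widetilde\sigma_t^2\le\widehat\sigma_t^2$ for all $t$ is equivalent to $\widetilde T(v)\le\widehat T(v)$ for all $v$, and this follows from the pointwise domination $|\widetilde F(v)|\ge|\widehat F(v)|$ (that is, $\widetilde F(v)\le\widehat F(v)$) on $(0,s_0]$. Thus the entire lemma reduces to this one velocity-field inequality.

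To establish it I would introduce the affine map $\psi(v):=(\Sigma_{11}v-\det(\Sigma))/cov_1=\sqrt{|F(v)|}$, valid on the stage-1 range, note its slope $\Sigma_{11}/cov_1\ge\tfrac12$ (equivalently $\Sigma_{11}\ge\Sigma_{12}$, true by positive-definiteness) and that $\psi(v_1)=v_1/2$, so that $\psi(v)\ge v/2$ for all $v\ge v_1$. Condition~(b) is exactly $\widetilde v_1\le\widehat v_1$, which splits $(0,s_0]$ into three regions. On $v\le\widetilde v_1$ both fields equal $-v^2/4$. On $\widetilde v_1\le v\le\widehat v_1$ the inequality is $\widetilde\psi(v)^2\ge v^2/4$, immediate from $\widetilde\psi(v)\ge v/2$. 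On $v\ge\widehat v_1$ both paths are in stage~1 and I must show $\widetilde\psi(v)\ge\widehat\psi(v)$; since both are affine it suffices to check the endpoints of $[\widehat v_1,s_0]$: at $s_0$, $\psi(s_0)=cov_1$ and $\widetilde{cov}_1\ge\widehat{cov}_1$ is condition~(a) (via $cov_1=(s_0+\Sigma_{11}-\Sigma_{22})/2$ and the common $s_0$), while at $\widehat v_1$, $\widehat\psi(\widehat v_1)=\widehat v_1/2\le\widetilde\psi(\widehat v_1)$ by the slope fact applied at the level $\widehat v_1\ge\widetilde v_1$. I expect this region — where the breakpoints $\widetilde v_1,\widehat v_1$ differ so that no two formulas match — to be the main obstacle; it is resolved precisely by the affine structure of $\psi$, with (a) controlling the right endpoint and (b) both ordering the breakpoints and controlling the left endpoint.

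For the consequence $k^*(c,\widetilde\Sigma)\ge k^*(c,\widehat\Sigma)$ I would time-change the stopping problem~\eqref{eq:binary choice objective}. Representing $Y_t=Y_0+W_{\langle Y\rangle_t}$ for a Brownian motion $W$ with deterministic $\langle Y\rangle_t=s_0-\sigma_t^2$, and switching the clock from $t$ to accumulated variance $\nu$, the objective becomes $\sup_\eta\mathbb{E}[\max\{Y_0+W_\eta,0\}-c\,t(\eta)]$ with $t(\nu)$ the deterministic time to accumulate variance $\nu$. Since $\widetilde\sigma_t^2\le\widehat\sigma_t^2$ means the $\widetilde{}$ path accumulates any given variance weakly sooner, $\widetilde t(\nu)\le\widehat t(\nu)$ pointwise, so waiting is uniformly cheaper and $\widetilde U(y,c)\ge\widehat U(y,c)$ for every $y$. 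As $U$ is nonnegative and increasing in $y$, the inclusion $\{x:\widetilde U(-x)=0\}\subseteq\{x:\widehat U(-x)=0\}$ gives $k^*(c,\widehat\Sigma)\le k^*(c,\widetilde\Sigma)$, as claimed.
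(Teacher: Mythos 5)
Your proof is correct and is essentially the paper's own argument up to the change of variables $v \mapsto \sigma_0^2 - v$: your velocity field satisfies $|F(v)| = 1/T'(\sigma_0^2 - v)$ for the hitting-time function $T$ the paper differentiates, your affine $\psi$ is the paper's affine expression $\Sigma_{11}+\Sigma_{12}-\tfrac{\Sigma_{11}}{\Sigma_{11}+\Sigma_{12}}v$, the regional case analysis and endpoint checks match, and the concluding time-change argument for $k^*$ is the same. (One cosmetic point: $\Sigma_{11}\ge\Sigma_{12}$ needs $\Sigma_{11}\ge\Sigma_{22}$ in addition to positive-definiteness, but that is guaranteed by condition (a).)
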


Part (a) says that there is greater asymmetry in the agent's uncertainty about the two attributes in the prior covariance matrix $\widetilde{\Sigma}$ compared to $\widehat{\Sigma}$. Part (b) says that the agent's uncertainty about $\theta_1+\theta_2$ at the optimal switchpoint $\widetilde{t}^*_1$ given prior $\widetilde{\Sigma}$ is lower than the agent's uncertainty about $\theta_1+\theta_2$ at the optimal switchpoint $\hat{t}^*_1$ given prior $\widehat{\Sigma}$, i.e., the agent has learned more about $\theta_1+\theta_2$ in (the endogenous) Stage 1 starting from $\widetilde{\Sigma}$. The lemma says that these conditions imply that the agent's uncertainty about $\theta_1+\theta_2$ is lower at every moment of time starting from prior $\widetilde{\Sigma}$, i.e., the agent learns faster.

\begin{proof}
Given any prior covariance matrix $\Sigma$ and resulting path of posterior variances $\sigma_t$, we can define for each $v \in [0, \sigma_0^2)$ the hitting time $T(v)$ such that $\sigma_{T(v)}^2 = \sigma_0^2 - v$. $T(v)$ is well-defined because $\sigma_t^2$ decreases strictly and continuously in $t$. This monotonicity also implies that the comparison $\widetilde{\sigma}_t^2 \leq \widehat{\sigma}_t^2$ for each $t$ is equivalent to $\widetilde{T}(v) \leq \widehat{T}(v)$ for each $v < \widetilde{\sigma}_0^2 = \widehat{\sigma}_0^2$. Thus in what follows we study the properties of $T(v)$. 

From Lemma \ref{lemma:binary choice variances}, it is not difficult to derive the following formula for $T(v)$: 
\[
T(v) = \begin{cases} \frac{v}{(\Sigma_{11}+\Sigma_{12})^2 - \Sigma_{11}v} &\text{if } v \in [0, v^*]; \\
\frac{4}{\sigma_0^2 - v} - \frac{\Sigma_{11} + \Sigma_{22} - 2\Sigma_{12}}{\det(\Sigma)} &\text{if } v \in [v^*, \sigma_0^2).
\end{cases}
\]
Above, the switchpoint $v^*$ is given by
\[
v^* = \sigma_0^2 - \sigma_{t_1^*}^2 = (\Sigma_{11} + \Sigma_{22} + 2\Sigma_{12}) - \frac{2 \det(\Sigma)}{\Sigma_{11} - \Sigma_{12}} = \frac{(\Sigma_{11}-\Sigma_{22})(\Sigma_{11}+\Sigma_{12})}{\Sigma_{11} - \Sigma_{12}} .
\]
Not surprisingly, at $v = v^*$ either formula for $T(v)$ yields the time $t_1^* = \frac{\Sigma_{11}-\Sigma_{22}}{\det(\Sigma)}$.

We now compute the (right) derivative of $T(v)$:
\begin{equation}\label{eq:binary choice T'}
T'(v) = \begin{cases} \frac{1}{\left(\Sigma_{11}+\Sigma_{12} - \frac{\Sigma_{11}}{\Sigma_{11}+\Sigma_{12}} v\right)^2} &\text{if } v \in [0, v^*); \\
\frac{4}{(\sigma_0^2-v)^2} &\text{if } v \in [v^*, \sigma_0^2).
\end{cases}
\end{equation}
This time perhaps more surprisingly, both formulae for $T'(v)$ yield the same value at $v = v^*$. Moreover, it can be checked that $T'(v) \leq \frac{4}{(\sigma_0^2-v)^2}$ for all $v$.

We claim that under the two stated conditions on $\widetilde{\Sigma}$ and $\widehat{\Sigma}$, it holds that $\widetilde{T}'(v) \leq \widehat{T}'(v)$ for every $v$. Since $\widetilde{T}(0) = \widehat{T}(0) = 0$, this would imply the desired comparison $\widetilde{T}(v) \leq \widehat{T}(v)$. To compare those derivatives, note that given $\widetilde{\Sigma}_{11} + \widetilde{\Sigma}_{22} + 2 \widetilde{\Sigma}_{12} = \widehat{\Sigma}_{11} + \widehat{\Sigma}_{22} + 2 \widehat{\Sigma}_{12}$, the assumption $\widetilde{\Sigma}_{11} - \widetilde{\Sigma}_{22} \geq \widehat{\Sigma}_{11} - \widehat{\Sigma}_{22}$ is equivalent to $\widetilde{\Sigma}_{11} + \widetilde{\Sigma}_{12} \geq \widehat{\Sigma}_{11} + \widehat{\Sigma}_{12}$. Thus $\widetilde{T}'(0) \leq \widehat{T}'(0)$ holds. Moreover, the second assumption in the lemma translates into $\widetilde{v}^* \geq 
\widehat{v}^*$. Below we show these are sufficient to imply $\widetilde{T}'(v) \leq \widehat{T}'(v)$. 

Indeed, for $v \geq \widehat{v}^*$, we deduce from \eqref{eq:binary choice T'} that 
\[
\widetilde{T}'(v) \leq \frac{4}{(\sigma_0^2-v)^2} = \widehat{T}'(v). 
\]
On the other hand, for $v \leq \widehat{v}^*$, $T'(v)$ is given by the first term in \eqref{eq:binary choice T'} for both $\widetilde{\Sigma}$ and $\widehat{\Sigma}$. Thus the comparison between $\widetilde{T}'(v)$ and $\widehat{T}'(v)$ reduces to a comparison between two linear functions of $v$: We want to show $\Sigma_{11}+\Sigma_{12} - \frac{\Sigma_{11}}{\Sigma_{11}+\Sigma_{12}} v$ is larger when $\Sigma = \widetilde{\Sigma}$ than when $\Sigma = \widehat{\Sigma}$. We already know this holds at $v = 0$ and $v = \widetilde{v}^*$, so by linearity, it also holds at any $v$ in between, completing the proof. 

Hence we have shown that $\widetilde{\sigma}_t^2 \leq \widehat{\sigma}_t^2$ for every $t$. It remains to show that the comparison of posterior variances implies the comparison of stopping boundaries. For this we observe that lower posterior variances under $\widetilde{\Sigma}$ imply that the value function $U(y, c, \widetilde{\Sigma})$ is weakly larger than $U(y, c, \widehat{\Sigma})$ for any cost $c$ and any prior expectation $y$. This follows from the same time-change argument as in the proof of Lemma \ref{lemma:Greenshtein}, and the idea is simply that any stopping time under prior $\widehat{\Sigma}$ can be replicated under prior $\widetilde{\Sigma}$ at an earlier stopping time. Thus 
\[
0 = U(-k^*(c, \widetilde{\Sigma}), c, \widetilde{\Sigma}) \geq U(-k^*(c, \widetilde{\Sigma}), c, \widehat{\Sigma}).
\]
In fact we have equality since $U$ is non-negative. Hence $k^*(c, \widehat{\Sigma})$, being the smallest $x$ such that $U(-x, c, \widehat{\Sigma}) = 0$, must be smaller than $k^*(c, \widetilde{\Sigma})$.
\end{proof}

Two useful corollaries follow from Lemma \ref{lemma:binary choice general change} (the proofs are immediate and thus omitted): 
\begin{lemma}[Effect of correlation]\label{lemma:binary choice correlation change}
Let $\widetilde{\Sigma}$ and $\widehat{\Sigma}$ be two covariance matrices that induce the same prior uncertainty. Suppose further that $\widetilde{\Sigma}_{11} = \widetilde{\Sigma}_{22}$ and $\widehat{\Sigma}_{11} = \widehat{\Sigma}_{22}$ (symmetric priors). Then $k^*(c, \widetilde{\Sigma}) = k^*(c, \widehat{\Sigma})$. 
\end{lemma}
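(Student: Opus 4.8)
The plan is to obtain this as a two-sided application of Lemma~\ref{lemma:binary choice general change}. The key observation is that both hypotheses of that lemma degenerate to \emph{equalities} when the two priors are symmetric and induce the same prior uncertainty, so the lemma may be invoked with the roles of $\widetilde{\Sigma}$ and $\widehat{\Sigma}$ interchanged, yielding the two reverse inequalities between stopping boundaries.

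First I would check hypothesis (a). Since $\widetilde{\Sigma}_{11} = \widetilde{\Sigma}_{22}$ and $\widehat{\Sigma}_{11} = \widehat{\Sigma}_{22}$, both differences $\widetilde{\Sigma}_{11} - \widetilde{\Sigma}_{22}$ and $\widehat{\Sigma}_{11} - \widehat{\Sigma}_{22}$ vanish, so (a) holds as the chain $0 \geq 0 \geq 0$, and it persists after swapping the two matrices. Next I would verify hypothesis (b) by a one-line computation: for any symmetric covariance matrix, $\det(\Sigma) = \Sigma_{11}^2 - \Sigma_{12}^2 = (\Sigma_{11} - \Sigma_{12})(\Sigma_{11} + \Sigma_{12})$, so the Stage-$1$ endpoint variance collapses,
\[
\frac{2\det(\Sigma)}{\Sigma_{11} - \Sigma_{12}} = 2(\Sigma_{11} + \Sigma_{12}) = \Sigma_{11} + 2\Sigma_{12} + \Sigma_{22} = \sigma_0^2 .
\]
(Equivalently, a symmetric prior has switch time $t_1^* = 0$, so the Stage-$1$ endpoint is just the prior itself.) Because $\widetilde{\Sigma}$ and $\widehat{\Sigma}$ induce the same prior uncertainty, both sides of (b) equal the common value $\widetilde{\sigma}_0^2 = \widehat{\sigma}_0^2$, so (b) too holds with equality, in either ordering. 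Applying Lemma~\ref{lemma:binary choice general change} once gives $k^*(c, \widetilde{\Sigma}) \geq k^*(c, \widehat{\Sigma})$, and applying it with the roles reversed gives $k^*(c, \widehat{\Sigma}) \geq k^*(c, \widetilde{\Sigma})$; together these force the claimed equality.

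There is essentially no obstacle here beyond the routine verification of (b); the only point to watch is confirming that the symmetric structure genuinely collapses the Stage-$1$ endpoint variance to $\sigma_0^2$, so that (b) becomes an equality rather than a strict inequality in one direction. As a sanity check (and an alternative route that avoids the two-sided argument), I would note from Lemma~\ref{lemma:binary choice variances} that for a symmetric prior the entire posterior-variance path reduces to $\sigma_t^2 = 4\sigma_0^2/(4 + \sigma_0^2 t)$, a function of $\sigma_0^2$ alone. Since the value function $U(\cdot, c, \Sigma)$—and hence the stopping boundary $k^*(c, \Sigma)$—depends on $\Sigma$ only through this path, equal prior uncertainty immediately forces $k^*(c, \widetilde{\Sigma}) = k^*(c, \widehat{\Sigma})$.
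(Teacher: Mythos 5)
Your proof is correct and follows the paper's intended route: the paper derives this lemma as an immediate corollary of Lemma~\ref{lemma:binary choice general change} (omitting the details), and your two-sided application—after verifying that both hypotheses (a) and (b) collapse to equalities for symmetric priors with common $\sigma_0^2$—is exactly the omitted verification. Your computation $2\det(\Sigma)/(\Sigma_{11}-\Sigma_{12}) = 2(\Sigma_{11}+\Sigma_{12}) = \sigma_0^2$ is right, and the sanity check via the explicit path $\sigma_t^2 = 4\sigma_0^2/(4+\sigma_0^2 t)$ is a valid (and arguably cleaner) alternative.
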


\begin{lemma}[Effect of asymmetry]\label{lemma:binary choice asymmetry change}
Let $\widetilde{\Sigma}$ and $\widehat{\Sigma}$ be two covariance matrices that induce the same prior uncertainty. Suppose further that $\widetilde{\Sigma}_{11} > \widetilde{\Sigma}_{22}$ while $\widehat{\Sigma}_{11} = \widehat{\Sigma}_{22}$ (asymmetric versus symmetric). Then $k^*(c, \widetilde{\Sigma}) \geq k^*(c, \widehat{\Sigma})$. 
\end{lemma}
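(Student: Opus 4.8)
The plan is to reduce the claim to the two hypotheses (a) and (b) of Lemma \ref{lemma:binary choice general change} and verify each for the pair $(\widetilde{\Sigma},\widehat{\Sigma})$ at hand. That lemma, applied to two covariance matrices inducing the same prior uncertainty, yields the variance comparison $\widetilde{\sigma}_t^2 \le \widehat{\sigma}_t^2$ for every $t$ and hence the desired conclusion $k^*(c,\widetilde{\Sigma}) \ge k^*(c,\widehat{\Sigma})$. So the entire task is to check its two conditions here. Condition (a) is immediate: by hypothesis $\widetilde{\Sigma}_{11} > \widetilde{\Sigma}_{22}$ while $\widehat{\Sigma}_{11} = \widehat{\Sigma}_{22}$, so $\widetilde{\Sigma}_{11} - \widetilde{\Sigma}_{22} > 0 = \widehat{\Sigma}_{11} - \widehat{\Sigma}_{22}$, as required.

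The substance lies in condition (b). First I would simplify its right-hand side using the symmetry of $\widehat{\Sigma}$. Writing $\widehat{\Sigma}_{11} = \widehat{\Sigma}_{22} = a$ and $\widehat{\Sigma}_{12} = b$, one computes $\frac{2\det(\widehat{\Sigma})}{\widehat{\Sigma}_{11} - \widehat{\Sigma}_{12}} = \frac{2(a^2 - b^2)}{a - b} = 2(a+b)$, which is exactly the shared prior variance $\sigma_0^2 = 2a + 2b$. (Intuitively, a symmetric prior has switch time $\widehat{t}_1^* = 0$, so its switch-point variance is simply the prior variance.) Condition (b) therefore reduces to the single inequality $\frac{2\det(\widetilde{\Sigma})}{\widetilde{\Sigma}_{11} - \widetilde{\Sigma}_{12}} \le \widetilde{\Sigma}_{11} + 2\widetilde{\Sigma}_{12} + \widetilde{\Sigma}_{22}$, where I have used that $\widetilde{\Sigma}$ and $\widehat{\Sigma}$ induce the same prior uncertainty.

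It then remains to clear the denominator and expand. Positive-definiteness of $\widetilde{\Sigma}$ gives $\widetilde{\Sigma}_{12}^2 < \widetilde{\Sigma}_{11}\widetilde{\Sigma}_{22} < \widetilde{\Sigma}_{11}^2$, so in particular $|\widetilde{\Sigma}_{12}| < \widetilde{\Sigma}_{11}$ and the denominator $\widetilde{\Sigma}_{11} - \widetilde{\Sigma}_{12} > 0$. Multiplying through and simplifying, the inequality collapses to
\[
(\widetilde{\Sigma}_{11} - \widetilde{\Sigma}_{22})(\widetilde{\Sigma}_{11} + \widetilde{\Sigma}_{12}) \ge 0.
\]
Both factors are positive: the first by the asymmetry hypothesis $\widetilde{\Sigma}_{11} > \widetilde{\Sigma}_{22}$, and the second again because $|\widetilde{\Sigma}_{12}| < \widetilde{\Sigma}_{11}$. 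Hence condition (b) holds (in fact strictly), and Lemma \ref{lemma:binary choice general change} gives $k^*(c,\widetilde{\Sigma}) \ge k^*(c,\widehat{\Sigma})$.

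The only genuine obstacle is conceptual rather than computational: recognizing that a symmetric prior's switch-point variance equals $\sigma_0^2$, which is what turns the opaque-looking condition (b) into the clean, factorable comparison above. Everything else is routine algebra. As a convenience, I could alternatively invoke Lemma \ref{lemma:binary choice correlation change} to replace $\widehat{\Sigma}$ by the diagonal matrix $\tfrac{\sigma_0^2}{2} I$ without affecting $k^*$, but since the computation already treats an arbitrary symmetric $\widehat{\Sigma}$, this normalization is unnecessary.
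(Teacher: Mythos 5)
Your proof is correct and takes the same route the paper intends: the paper derives this lemma as an immediate corollary of Lemma \ref{lemma:binary choice general change} (omitting the verification), and your check of conditions (a) and (b) is exactly that verification. The algebra is right---the symmetric prior's switch-point variance does equal $\sigma_0^2$, and clearing the denominator $\widetilde{\Sigma}_{11}-\widetilde{\Sigma}_{12}>0$ does reduce condition (b) to $(\widetilde{\Sigma}_{11}-\widetilde{\Sigma}_{22})(\widetilde{\Sigma}_{11}+\widetilde{\Sigma}_{12})\geq 0$, both factors of which are positive under the stated hypotheses.
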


While the above results hold fixed the prior variance $\sigma_0^2$, we also need a result that considers a change in overall prior uncertainty, and characterizes its effect on the stopping boundary. 

\begin{lemma}[Effect of scaling $\Sigma$]\label{lemma:binary choice uncertainty change}
For any $c, \Sigma$ and any $\lambda \in (0, 1)$, it holds that $\lambda k^*(c, \Sigma) \geq k^*(c, \lambda^2 \Sigma)$.
\end{lemma}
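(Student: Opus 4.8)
The plan is to exploit a clean scaling symmetry of the problem under the map $\Sigma \mapsto \lambda^2\Sigma$, which rescales the posterior-mean state by $\lambda$ and time by $\lambda^2$, and to track how this rescaling propagates to the stopping boundary $k^*$. The key object is a scaling identity for the value function $U(y,c,\Sigma)$, from which the comparison of boundaries follows by elementary monotonicity.

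First I would record the scaling of the posterior-variance function. Since $(\lambda^2\Sigma)^{-1} = \lambda^{-2}\Sigma^{-1}$, for $\alpha = (1,1)'$ the formula $V(q) = \alpha'(\Sigma^{-1} + \diag(q))^{-1}\alpha$ factors as $V_{\lambda^2\Sigma}(q) = \lambda^2\, V_\Sigma(\lambda^2 q)$ (write $\lambda^{-2}\Sigma^{-1} + \diag(q) = \lambda^{-2}(\Sigma^{-1} + \diag(\lambda^2 q))$ and invert). Minimizing over the budget simplex, which turns the constraint $\sum_i q_i = t$ into $\sum_i q_i = \lambda^2 t$, then yields the identity $\sigma_t^2(\lambda^2\Sigma) = \lambda^2\,\sigma_{\lambda^2 t}^2(\Sigma)$ for the minimal posterior variances under the (uniformly) optimal strategy from Corollary \ref{corr:binary choice attention}.

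Second, I would push this through the optimal stopping problem. Reparametrizing the posterior-mean martingale by its quadratic variation exactly as in the proof of Lemma \ref{lemma:Greenshtein}—or equivalently, observing that $Z_t := \lambda\, Y_{\lambda^2 t}$ (with $Y$ the posterior mean under prior $\Sigma$ started at $y/\lambda$) is a continuous martingale whose deterministic quadratic variation $\lambda^2\langle Y\rangle_{\lambda^2 t} = \lambda^2(\sigma_0^2(\Sigma) - \sigma_{\lambda^2 t}^2(\Sigma))$ matches that of the posterior mean under prior $\lambda^2\Sigma$ started at $y$—I would substitute $s = \lambda^2\tau$ into the objective. Using $\max\{\lambda Y_s,0\} = \lambda\max\{Y_s,0\}$ and $c\tau = (c/\lambda^2)s$, this produces the value-function scaling identity
\[
U(y, c, \lambda^2\Sigma) = \lambda\, U\!\left(y/\lambda,\ c/\lambda^3,\ \Sigma\right).
\]
Setting $y = -x$, the vanishing of $U(-x,c,\lambda^2\Sigma)$ is then equivalent to the vanishing of $U(-x/\lambda,\ c/\lambda^3,\ \Sigma)$, and since $x \mapsto x/\lambda$ is increasing, the definition of $k^*$ as the smallest such threshold gives $k^*(c,\lambda^2\Sigma) = \lambda\, k^*(c/\lambda^3,\Sigma)$.

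Finally I would invoke monotonicity in the cost: since $U$ is decreasing in $c$, the boundary $k^*(\cdot,\Sigma)$ is weakly decreasing in the cost parameter. Because $\lambda \in (0,1)$ forces $c/\lambda^3 \geq c$, this yields $k^*(c/\lambda^3,\Sigma) \leq k^*(c,\Sigma)$, and hence $k^*(c,\lambda^2\Sigma) = \lambda\, k^*(c/\lambda^3,\Sigma) \leq \lambda\, k^*(c,\Sigma)$, as desired. The main obstacle I anticipate is the bookkeeping of the three coupled rescalings—state by $\lambda$, time by $\lambda^2$, and the induced cost by $\lambda^3$—together with a rigorous justification of the martingale time-change (that $Z$ and the $\lambda^2\Sigma$-posterior mean genuinely share a law, resting on both being continuous martingales with identical deterministic quadratic variation and equal starting points, and that the stopping-time correspondence respects the admissible range). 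It is precisely the emergence of the $\lambda^3$ factor on the cost, combined with the hypothesis $\lambda < 1$, that makes the concluding monotonicity step deliver the inequality in the correct direction.
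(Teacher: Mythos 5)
Your proposal is correct and follows essentially the same route as the paper: both establish the scaling identity $k^*(c,\lambda^2\Sigma)=\lambda\,k^*(c\lambda^{-3},\Sigma)$ via the posterior-variance relation $\sigma_t^2(\lambda^2\Sigma)=\lambda^2\sigma_{\lambda^2 t}^2(\Sigma)$ and the identification of the belief process under $\lambda^2\Sigma$ with $\lambda\,Y_{\lambda^2 t}$, and then conclude by monotonicity of the boundary in the cost since $c\lambda^{-3}\geq c$ for $\lambda<1$. The only cosmetic difference is that you derive the variance relation from the scaling of $V$ directly, whereas the paper reads it off the explicit formulas in its Lemma \ref{lemma:binary choice variances}.
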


\begin{proof}
We will show that for any $\lambda > 0$, 
\begin{equation}\label{eq:binary choice k* identity}
k^*(c, \lambda^2 \Sigma) = \lambda k^*(c\lambda^{-3}, \Sigma). 
\end{equation}
This implies the lemma because the cost $c\lambda^{-3}$ is higher than $c$ whenever $\lambda < 1$, which decreases the value function and thus also decreases the stopping boundary, resulting in $k^*(c\lambda^{-3}, \Sigma) \leq k^*(c, \Sigma)$. 

We note that the identity \eqref{eq:binary choice k* identity} is a direct generalization of Equation (A6) in \cite{FudenbergStrackStrzalecki}. Nonetheless, we provide a proof below for completeness. The key insight is that we can identify the belief processes under prior $\Sigma$ and under $\lambda^2\Sigma$ via a scaling of time and space. Specifically, let $Y_t$ denote the belief process under $\Sigma$ as given by \eqref{eq:binary choice belief process}, with $\sigma_t^2(\Sigma)$ denoting the posterior variance at time $t$ (under the optimal path). Similarly define $Z_t$ for the belief process starting from the prior $\lambda \Sigma$, with  $\sigma_t^2(\lambda^2 \Sigma)$ denoting the posterior variance at time $t$. From Lemma \ref{lemma:binary choice variances}, we have the relation
\[
\sigma_t^2(\lambda^2 \Sigma) = \lambda^2 \cdot \sigma_{\lambda^2t}^2(\Sigma). 
\]
Assuming $Z_0 = \lambda Y_0$, then the process $Z_t$ has the same distribution as the process $\lambda \cdot Y_{\lambda^2 t}$. Intuitively, receiving standard normal signals about $\lambda \omega$ for one unit of time is equivalent to receiving standard normal signals about $\omega$ for $\lambda^2$ units of time. 

Now that we identify $Z_t$ with $\lambda \cdot Y_{\lambda^2 t}$, we can rewrite the stopping problem with respect to $Z_t$ in terms of the $Y$ process instead. Specifically, the agent's problem is 
\[
\max_{\tau} \mathbb{E}[\max\{\lambda Y_{\lambda^2\tau}, 0\} - c \tau] = \lambda \max_{\tau'} \mathbb{E}[\max\{Y_{\tau'}, 0\} - \frac{c}{\lambda^3} \tau'],
\]
with $\tau' = \lambda^2\tau$. Thus it is as if the agent chooses an optimal stopping time with respect to $Y_t$, but with transformed marginal cost $\frac{c}{\lambda^3}$. The agent should stop at time 0 in this problem if and only if  $\vert Y_0 \vert \geq k^*(\frac{c}{\lambda^3}, \Sigma)$, which is equivalent to $\vert Z_0 \vert \geq \lambda k^*(\frac{c}{\lambda^3}, \Sigma)$. Thus \eqref{eq:binary choice k* identity} holds. 
\end{proof}

\subsubsection{Main Proof of Proposition \ref{prop:binary choice FSS}} 

Given Lemma \ref{lemma:binary choice accuracy boundary}, we just need to show that for any times $t < t'$, 
$\frac{k^*(c, \Sigma_t)}{\sigma_t} \geq \frac{k^*(c, \Sigma_{t'})}{\sigma_{t'}}$. Define $\lambda \in (0,1)$ such that $\sigma_{t'} = \lambda \sigma_{t}$. Then the inequality becomes $\lambda k^*(c, \Sigma_t) \geq k^*(c, \Sigma_{t'})$. From Lemma \ref{lemma:binary choice uncertainty change} we have $\lambda k^*(c, \Sigma_t) \geq k^*(c, \lambda^2 \Sigma_t)$, so it is sufficient to show 
\begin{equation}
   k^*(c, \lambda^2 \Sigma_t) \geq k^*(c, \Sigma_{t'}). \tag{to be shown}
\end{equation}
Note that, by the definition of $\lambda$, the matrices $\lambda^2 \Sigma_t$ and $\Sigma_{t'}$ induce the same uncertainty about $\theta_1 + \theta_2$. There are three cases to consider: 

\smallskip

\textbf{Case 1: $t_1^* \leq t < t'$.} In this case $\Sigma_t$ and $\Sigma_{t'}$ are posterior covariance matrices in Stage 2, so they induce symmetric uncertainty about $\theta_1$ and $\theta_2$. Lemma \ref{lemma:binary choice correlation change} thus applies to the symmetric priors $\lambda^2 \Sigma_t$ and $\Sigma_{t'}$, and yields $k^*(c, \lambda^2 \Sigma_t) = k^*(c, \Sigma_{t'})$. Intuitively, the belief process in Stage 2 is the same as in \cite{FudenbergStrackStrzalecki} since correlation does not matter for symmetric priors. Thus the result in this case follows from the result in that paper. 

\smallskip

\textbf{Case 2: $t < t_1^* \leq t'$.} Here Lemma \ref{lemma:binary choice correlation change} no longer applies. We instead apply Lemma \ref{lemma:binary choice asymmetry change} to deduce $k^*(c, \lambda^2 \Sigma_t) \geq k^*(c, \Sigma_{t'})$, since $\lambda^2\Sigma_t$ is an asymmetric prior while $\Sigma_{t'}$ is a symmetric prior. This proves the result, and as discussed, the intuition is that asymmetry increases the stopping boundary relative to symmetric priors. 

\smallskip

\textbf{Case 3: $t < t' < t_1^*$.} To prove the key comparison $k^*(c, \lambda^2 \Sigma_t) \geq k^*(c, \Sigma_{t'})$, we now need to invoke the more general Lemma \ref{lemma:binary choice general change} since $\lambda^2\Sigma_t$ and $\Sigma_{t'}$ are both asymmetric. Thus, we have to check that $\widetilde{\Sigma} = \lambda^2\Sigma_t$ and $\widehat{\Sigma} = \Sigma_{t'}$ satisfy the two conditions stated in Lemma \ref{lemma:binary choice general change}. 

To do this, we let $\Sigma_t = \left(\begin{array}{cc}
p & r \\
r & q \end{array}\right)$ with $p > q > 0$ and $r^2 < pq$. The posterior covariance matrix $\Sigma_{t'}$ at the later time $t'$ can be calculated from the ``prior" covariancee matrix $\Sigma_{t}$, after focusing on $\theta_1$ for $t' - t$ units of time. Thus
\[
\Sigma_{t'} = \left( \Sigma_t^{-1} + \left(\begin{array}{cc}
t'-t & 0 \\
0 & 0 \end{array}\right) \right)^{-1} = \left(\frac{1}{pq-r^2}\left(\begin{array}{cc}
q + (t'-t)(pq-r^2) & -r \\
-r & p \end{array}\right)  \right)^{-1}.
\]
Let $q' = q + (t'-t)(pq-r^2)$ with $q < q' < p$ (the latter inequality holds because $t' < t_1^*$). Then the above simplifies to 
\[
\Sigma_{t'} = \frac{pq - r^2}{pq' - r^2} \left(\begin{array}{cc}
p & r \\
r & q' \end{array}\right)
\]
The scaling factor $\lambda$ is thus given by
\begin{equation}\label{eq:binary choice lambda}
\lambda^2 (p+q+2r) = \frac{pq - r^2}{pq' - r^2} (p+q'+2r). 
\end{equation}

We now check the first condition in Lemma \ref{lemma:binary choice general change}, which for $\widetilde{\Sigma} = \lambda^2 \Sigma_t$ and $\widehat{\Sigma} = \Sigma_{t'}$ becomes $\lambda^2(p-q) \geq \frac{pq - r^2}{pq' - r^2}(p - q')$. Using \eqref{eq:binary choice lambda} to eliminate $\lambda$, it suffices to show that 
\[
\frac{p-q}{p+q+2r} \geq \frac{p-q'}{p+q'+2r}. 
\]
This inequality holds simply because $q' > q$. 

We then turn to the second condition in Lemma \ref{lemma:binary choice accuracy boundary}, which in the current setting becomes $2\lambda^2 (\frac{pq-r^2}{p-r}) \leq 2(\frac{pq-r^2}{pq'-r^2}) \cdot (\frac{pq'-r^2}{p-r})$. This simplifies to $\lambda^2 \leq 1$, which clearly holds. Intuitively, since $\Sigma_t$ and $\Sigma_{t'}$ are both posterior beliefs following the prior $\Sigma$, they ``become symmetric" at the same posterior belief $\Sigma_{t_1^*}$. Thus the second condition in Lemma \ref{lemma:binary choice general change} holds with equality when we consider $\Sigma_t$ versus $\Sigma_{t'}$. It follows that when comparing $\lambda^2 \Sigma_t$ and $\Sigma_{t'}$, the former prior belief leads to lower uncertainty when entering Stage 2. 

Hence Lemma \ref{lemma:binary choice general change} applies, and we again have 
\[
\lambda k^*(c, \Sigma_t) \geq k^*(c, \lambda^2 \Sigma_t) \geq k^*(c, \Sigma_{t'}). 
\]
This proves $\frac{k^*(c, \Sigma_t)}{\sigma_t} \geq \frac{k^*(c, \Sigma_{t'})}{\sigma_{t'}}$ and the proposition.

\subsubsection{Generalization to Unequal Learning Speeds} \label{appx:binary choice informativeness} 

In this section we show that the conclusion of Proposition \ref{prop:binary choice FSS} further generalizes to situations where the information about the two unknown payoffs arrives with different levels of precision. Formally, fix $\alpha_1, \alpha_2 > 0$ and suppose a unit of time devoted to learning about the payoff $v_i$ produces the signal $v_i + \mathcal{N}(0,\alpha_i^2)$. Any prior covariance matrix over $(v_1, v_2)$ can be written as $\left(\begin{array}{cc}
\alpha_1^2\Sigma_{11} & -\alpha_1\alpha_2\Sigma_{12} \\
-\alpha_1\alpha_2\Sigma_{21} & \alpha_2^2\Sigma_{22} \end{array}\right)$ where $\Sigma = \left(\begin{array}{cc}
\Sigma_{11} & \Sigma_{12} \\
\Sigma_{21} & \Sigma_{22} \end{array}\right)$ is also a positive definite matrix. This formulation of the problem will make the results below easier to state. 

To transform this problem into our main model, let $\theta_1 = v_1 / \alpha_1$ and $\theta_2 = - v_2 / \alpha_2$, so that each unit of time devoted to $\theta_i$ produces a standard normal signal about it. Moreover, the prior covariance matrix over $(\theta_1, \theta_2)$ is simply $\Sigma$, and the payoff-relevant state is $\omega = v_1 - v_2 = \alpha_1 \theta_1 + \alpha_2 \theta_2$. Throughout this section, we assume $\Sigma$ and $\alpha$ satisfy Assumption \ref{assmp:K=2}, so that we can apply Theorem \ref{thm:K=2} to characterize optimal attention allocation. 

\begin{corollary}
Suppose Assumption \ref{assmp:K=2} holds and $\alpha_1\Sigma_{11} + \alpha_2\Sigma_{12} \geq \alpha_1\Sigma_{12} + \alpha_2\Sigma_{22}$. The agent's optimal information acquisition strategy $(\beta_1(t),\beta_2(t))$ in this generalized binary choice problem consists of two stages: 
\begin{itemize}[noitemsep]
\item \textbf{Stage 1:} At all times 
\[
t < t_1^* = \frac{\alpha_1\Sigma_{11} + \alpha_2\Sigma_{12} - \alpha_1\Sigma_{21} - \alpha_2\Sigma_{22}}{\alpha_2\det(\Sigma)},
\]
the agent optimally allocates all attention to $\theta_1$.
\item \textbf{Stage 2:} At times $t \geq t_1^*$, the agent optimally uses the constant mixture $\left(\frac{\alpha_1}{\alpha_1+\alpha_2}, \frac{\alpha_2}{\alpha_1+\alpha_2} \right)$.
\end{itemize}
\end{corollary}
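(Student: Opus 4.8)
The plan is to apply Theorem \ref{thm:K=2} essentially verbatim, after invoking the change of variables that has already been set up in the paragraph preceding the corollary. Under the substitution $\theta_1 = v_1/\alpha_1$ and $\theta_2 = -v_2/\alpha_2$, a unit of time devoted to source $i$ yields a \emph{standard} normal signal about $\theta_i$, the prior covariance matrix over $(\theta_1,\theta_2)$ is exactly $\Sigma$, and the payoff-relevant state is $\omega = v_1 - v_2 = \alpha_1\theta_1 + \alpha_2\theta_2$. Thus the transformed problem is precisely an instance of the main model of Section \ref{sec:K=2} with prior covariance $\Sigma$ and payoff weights $(\alpha_1,\alpha_2)$. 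Since we assume $\Sigma$ and $\alpha$ satisfy Assumption \ref{assmp:K=2}, Theorem \ref{thm:K=2} applies without modification, so it only remains to translate its output (the switch time and the two stage allocations) into the primitives of the present statement.

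The key bookkeeping step is to compute the prior covariances $cov_i = \mathrm{Cov}(\omega,\theta_i) = \alpha_i\Sigma_{ii} + \alpha_j\Sigma_{ji}$ in this parametrization. This gives $cov_1 = \alpha_1\Sigma_{11} + \alpha_2\Sigma_{21}$ and $cov_2 = \alpha_2\Sigma_{22} + \alpha_1\Sigma_{12}$. I would then observe that the auxiliary hypothesis $\alpha_1\Sigma_{11} + \alpha_2\Sigma_{12} \geq \alpha_1\Sigma_{12} + \alpha_2\Sigma_{22}$ is exactly the inequality $cov_1 \geq cov_2$ (using $\Sigma_{12}=\Sigma_{21}$), which is what selects source $1$ as the high-covariance source in Theorem \ref{thm:K=2}, i.e.\ sets $i=1$, $j=2$. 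Substituting into the theorem's formula $t_i^* = (cov_i - cov_j)/(\alpha_j\det\Sigma)$ and simplifying $cov_1 - cov_2 = \alpha_1\Sigma_{11} + \alpha_2\Sigma_{12} - \alpha_1\Sigma_{21} - \alpha_2\Sigma_{22}$ reproduces the stated expression for $t_1^*$, while the theorem's Stage 2 allocation $\left(\tfrac{\alpha_1}{\alpha_1+\alpha_2},\tfrac{\alpha_2}{\alpha_1+\alpha_2}\right)$ is exactly the claimed long-run mixture.

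There is essentially no analytic obstacle here: the content of the corollary is already contained in Theorem \ref{thm:K=2}, and the proof is a matter of verifying that (i) the change of variables is information-preserving and yields the standard-normal-per-unit-time normalization required by the main model, and (ii) the two displayed algebraic identities (the equivalence of the auxiliary hypothesis with $cov_1\ge cov_2$, and the simplification of $cov_1-cov_2$) hold. The only point requiring a line of justification, rather than pure computation, is the information-equivalence in step (i): devoting $T$ units of time to the signal $v_i + \mathcal{N}(0,\alpha_i^2)$ must produce, after dividing by $\alpha_i$, exactly $T$ independent standard normal observations of $\theta_i = \pm v_i/\alpha_i$, so that the attention budget and informativeness-per-unit-time assumptions of the main model are met. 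Once that is noted, the proof closes immediately by quoting Theorem \ref{thm:K=2}.
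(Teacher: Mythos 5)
Your proposal is correct and follows essentially the same route as the paper: the paper likewise obtains this corollary by the change of variables $\theta_1 = v_1/\alpha_1$, $\theta_2 = -v_2/\alpha_2$ (set up in the text immediately preceding the statement) and then reads off Theorem \ref{thm:K=2}, with the auxiliary hypothesis being precisely $cov_1 \geq cov_2$ and the displayed $t_1^*$ being the theorem's formula $(cov_1-cov_2)/(\alpha_2\det\Sigma)$. Your identification of the information-equivalence of the rescaled signals as the only non-computational step matches why the paper treats the result as an immediate corollary.
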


From this we can compute the agent's posterior variance of $\alpha_1 \theta_1 + \alpha_2 \theta_2$ at each time $t$. Generalizing Lemma \ref{lemma:binary choice variances}, we have 
\[
\sigma_t^2 = \begin{cases} \frac{\alpha_1^2\Sigma_{11} + \alpha_2^2 \Sigma_{22} + 2\alpha_1\alpha_2 \Sigma_{12} + \alpha_2^2 \det(\Sigma)t}{1 + \Sigma_{11} t} &\text{if } t \leq t_1^*; \\
\frac{(\alpha_1+\alpha_2)^2 \det(\Sigma)}{\Sigma_{11} + \Sigma_{22} - 2\Sigma_{12} + \det(\Sigma)t} &\text{if } t \geq t_1^*.
\end{cases}
\]
In particular, $\sigma_{t_1^*}^2 = \frac{(\alpha_1+\alpha_2)\alpha_2\det(\Sigma)}{\Sigma_{11}-\Sigma_{12}}$. We omit the detailed calculations.

Next, note that Lemma \ref{lemma:binary choice accuracy boundary} holds without change, so we just need to study how the stopping boundary $k^*(c, \Sigma)$ varies with $\Sigma$ in this more general setting. Naturally, we say that two prior covariance matrices $\widetilde{\Sigma}$ and $\widehat{\Sigma}$ \emph{induce the same prior uncertainty} if the prior variances $\alpha_1^2\widetilde{\Sigma}_{11} + \alpha_2^2 \widetilde{\Sigma}_{22} + 2\alpha_1\alpha_2 \widetilde{\Sigma}_{12}$ and $\alpha_1^2\widehat{\Sigma}_{11} + \alpha_2^2 \widehat{\Sigma}_{22} + 2\alpha_1\alpha_2 \widehat{\Sigma}_{12}$ are equal. The key Lemma \ref{lemma:binary choice general change} above is then generalized as follows:

\begin{lemma}
Let $\widetilde{\Sigma}$ and $\widehat{\Sigma}$ be two covariance matrices that induce the same prior uncertainty. Suppose further that the following two conditions holds:
\begin{enumerate}
    \item[(a)] $\alpha_1\widetilde{\Sigma}_{11} + \alpha_2\widetilde{\Sigma}_{12} - \alpha_1\widetilde{\Sigma}_{21} - \alpha_2\widetilde{\Sigma}_{22} \geq \alpha_1\widehat{\Sigma}_{11} + \alpha_2\widehat{\Sigma}_{12} - \alpha_1\widehat{\Sigma}_{21} - \alpha_2\widehat{\Sigma}_{22} \geq 0$;
    \item[(b)] $(\widetilde{\sigma}_{\widetilde{t}_1^*})^2 = \frac{(\alpha_1+\alpha_2)\alpha_2 \det (\widetilde{\Sigma})}{\widetilde{\Sigma}_{11} - \widetilde{\Sigma}_{12}} \leq \frac{(\alpha_1+\alpha_2)\alpha_2 \det (\widehat{\Sigma})}{\widehat{\Sigma}_{11} - \widehat{\Sigma}_{12}} = (\widehat{\sigma}_{\widehat{t}_1^*})^2 $.
\end{enumerate}
Then the posterior variances satisfy $\widetilde{\sigma}_t^2 \leq \widehat{\sigma}_t^2$ for all $t \geq 0$. Consequently $k^*(c, \widetilde{\Sigma}) \geq k^*(c, \widehat{\Sigma})$.
\end{lemma}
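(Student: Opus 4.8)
The plan is to follow the proof of Lemma \ref{lemma:binary choice general change} line for line, substituting the $\alpha$-weighted posterior-variance formulas displayed just above this lemma for the $\alpha_1=\alpha_2=1$ ones used there. As in that argument, for a fixed prior $\Sigma$ I define the hitting time $T(v)$ by $\sigma_{T(v)}^2=\sigma_0^2-v$; this is well defined because $\sigma_t^2$ is strictly decreasing and continuous, and the target comparison $\widetilde{\sigma}_t^2\le\widehat{\sigma}_t^2$ for all $t$ is equivalent to $\widetilde{T}(v)\le\widehat{T}(v)$ for all $v\in[0,\sigma_0^2)$. Since $\widetilde{T}(0)=\widehat{T}(0)=0$, it suffices to prove $\widetilde{T}'(v)\le\widehat{T}'(v)$ for almost every $v$.

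Inverting the two branches of $\sigma_t^2$, and writing $a_\Sigma:=\alpha_1\Sigma_{11}+\alpha_2\Sigma_{12}=cov_1$, $b_\Sigma:=\alpha_1\Sigma_{12}+\alpha_2\Sigma_{22}=cov_2$, and $\sigma_0^2=\alpha_1^2\Sigma_{11}+\alpha_2^2\Sigma_{22}+2\alpha_1\alpha_2\Sigma_{12}$, one obtains (using the identity $\sigma_0^2\Sigma_{11}-\alpha_2^2\det(\Sigma)=a_\Sigma^2$)
\[
T(v)=\begin{cases}\dfrac{v}{a_\Sigma^2-\Sigma_{11}v}, & v\in[0,v^*];\\[1ex] \dfrac{(\alpha_1+\alpha_2)^2}{\sigma_0^2-v}-\dfrac{\Sigma_{11}+\Sigma_{22}-2\Sigma_{12}}{\det(\Sigma)}, & v\in[v^*,\sigma_0^2),\end{cases}
\]
with $v^*=\sigma_0^2-\sigma_{t_1^*}^2$. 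Differentiating gives
\[
T'(v)=\begin{cases}\Big(a_\Sigma-\tfrac{\Sigma_{11}}{a_\Sigma}v\Big)^{-2}, & v\in[0,v^*);\\[1ex] \dfrac{(\alpha_1+\alpha_2)^2}{(\sigma_0^2-v)^2}, & v\in[v^*,\sigma_0^2).\end{cases}
\]
I then record two facts, exactly as in the original: the two branches agree at $v=v^*$, and the Stage-1 branch satisfies the global bound $T'(v)\le(\alpha_1+\alpha_2)^2/(\sigma_0^2-v)^2$ on $[0,v^*]$. The latter reduces to the affine inequality $(\alpha_1+\alpha_2)\big(a_\Sigma-\tfrac{\Sigma_{11}}{a_\Sigma}v\big)\ge\sigma_0^2-v$, whose left-minus-right side equals $\alpha_2(a_\Sigma-b_\Sigma)\ge 0$ at $v=0$ (this is the Stage-1 ordering $cov_1\ge cov_2$) and vanishes at $v=v^*$ by continuity, hence is nonnegative throughout.

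Next I translate the hypotheses. Equal prior uncertainty gives $\alpha_1 a_{\widetilde{\Sigma}}+\alpha_2 b_{\widetilde{\Sigma}}=\alpha_1 a_{\widehat{\Sigma}}+\alpha_2 b_{\widehat{\Sigma}}=\sigma_0^2$; combining this with condition (a), namely $a_{\widetilde{\Sigma}}-b_{\widetilde{\Sigma}}\ge a_{\widehat{\Sigma}}-b_{\widehat{\Sigma}}$, yields $a_{\widetilde{\Sigma}}\ge a_{\widehat{\Sigma}}$, while condition (b) is precisely $\widetilde{v}^*\ge\widehat{v}^*$. I then split on $v$. For $v\ge\widehat{v}^*$ the matrix $\widehat{\Sigma}$ is in Stage 2, so $\widehat{T}'(v)=(\alpha_1+\alpha_2)^2/(\sigma_0^2-v)^2$, which dominates $\widetilde{T}'(v)$ by the global bound. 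For $v\le\widehat{v}^*$ both are in Stage 1, and $\widetilde{T}'(v)\le\widehat{T}'(v)$ reduces to $a_{\widetilde{\Sigma}}-\tfrac{\widetilde{\Sigma}_{11}}{a_{\widetilde{\Sigma}}}v\ge a_{\widehat{\Sigma}}-\tfrac{\widehat{\Sigma}_{11}}{a_{\widehat{\Sigma}}}v$; being affine, I check it at the endpoints $v=0$ (where it is $a_{\widetilde{\Sigma}}\ge a_{\widehat{\Sigma}}$) and $v=\widehat{v}^*$ (where the right side equals $(\sigma_0^2-\widehat{v}^*)/(\alpha_1+\alpha_2)$ by continuity at $\widehat{\Sigma}$'s own switchpoint, and the left side is $\ge$ this by the global bound applied to $\widetilde{\Sigma}$ on $[0,\widetilde{v}^*]\supseteq\{\widehat{v}^*\}$). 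This establishes $\widetilde{T}'\le\widehat{T}'$ everywhere, hence $\widetilde{\sigma}_t^2\le\widehat{\sigma}_t^2$, and the consequence $k^*(c,\widetilde{\Sigma})\ge k^*(c,\widehat{\Sigma})$ follows unchanged: the time-change argument of Lemma \ref{lemma:Greenshtein} makes $U(\cdot,c,\widetilde{\Sigma})\ge U(\cdot,c,\widehat{\Sigma})$ pointwise, and the boundary is the smallest zero of $U(-\cdot,c,\Sigma)$.

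The main obstacle is purely bookkeeping: one must confirm that the two simplifications that kept the $\alpha_1=\alpha_2=1$ case clean survive the reweighting—first the identity $\sigma_0^2\Sigma_{11}-\alpha_2^2\det(\Sigma)=a_\Sigma^2$, which is what preserves the rational form of the Stage-1 hitting time, and second the fact that condition (a), stated as a signed linear combination of the $\Sigma_{ij}$, still collapses to the single scalar comparison $a_{\widetilde{\Sigma}}\ge a_{\widehat{\Sigma}}$ once equal prior uncertainty is imposed. With these verified, the affine-comparison step and the use of the global bound with equality at $\widehat{\Sigma}$'s switchpoint proceed exactly as in Lemma \ref{lemma:binary choice general change}.
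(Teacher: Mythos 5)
Your proposal is correct and is essentially the paper's own argument: the paper proves this generalization by pointing to the $T'(v)$ formula (given in its footnote, matching yours) and asserting that conditions (a) and (b) translate to $\widetilde{T}'(0)\le\widehat{T}'(0)$ and $\widetilde{v}^*\ge\widehat{v}^*$, after which the case-split and affine-comparison steps of Lemma \ref{lemma:binary choice general change} go through verbatim. You have simply filled in the bookkeeping the paper leaves implicit (the identity $\sigma_0^2\Sigma_{11}-\alpha_2^2\det(\Sigma)=cov_1^2$, the branch agreement at $v^*$, and the reduction of condition (a) to $cov_1^{\widetilde{\Sigma}}\ge cov_1^{\widehat{\Sigma}}$), and all of it checks out.
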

That is, with potentially unequal payoff weights $\alpha_1$ and $\alpha_2$, prior ``asymmetry" is not simply measured by the difference between $\Sigma_{11}$ and $\Sigma_{22}$. Rather, it is measured by the difference in initial marginal values $cov_1 = \alpha_1\Sigma_{11} + \alpha_2\Sigma_{12}$ and $cov_2 = \alpha_1\Sigma_{21} + \alpha_2\Sigma_{22}$. Condition (a) thus requires this asymmetry to be larger under $\widetilde{\Sigma}$ than under $\widehat{\Sigma}$. This turns out to be equivalent to $\widetilde{T}'(0) \leq \widehat{T}'(0)$, where the hitting time $T$ is the same as defined in the proof of Lemma \ref{lemma:binary choice general change}.\footnote{From the formulae for $\sigma_t^2$ we can compute $T(v)$ and $T'(v)$. Writing $v^*= \sigma_0^2 - \sigma_{t_1^*}^2 = \frac{(cov_1 - cov_2)cov_1}{\Sigma_{11}-\Sigma_{12}}$, we have
\begin{equation*}
T'(v) = \begin{cases} \frac{1}{\left(cov_1 - \frac{\Sigma_{11}}{cov_1} v\right)^2} &\text{if } v \in [0, v^*]; \\
\frac{(\alpha_1+\alpha_2)^2}{(\sigma_0^2-v)^2} &\text{if } v \in [v^*, \sigma_0^2).
\end{cases}
\end{equation*}
}
Together with condition (b), this implies $\widetilde{T}'(v) \leq \widehat{T}'(v)$ for all $v$ and thus $\widetilde{T}(v) \leq \widehat{T}(v)$, just as in that proof. 

Using the same notion of asymmetry, we obtain direct generalizations of Lemma \ref{lemma:binary choice correlation change} and Lemma \ref{lemma:binary choice asymmetry change} as well. Finally, Lemma \ref{lemma:binary choice uncertainty change} continues to hold since its proof does not depend on payoff weights. These lemmata allow us to replicate the proof of Proposition \ref{prop:binary choice FSS} with only minor modifications. Thus, to summarize, we have the following result: 
\begin{proposition}
Consider the binary choice problem with general signal variances $\alpha_1^2$ and $\alpha_2^2$. Denote the agent's prior covariance matrix over the payoffs $(v_1, v_2)$ as $\left(\begin{array}{cc}
\alpha_1^2\Sigma_{11} & -\alpha_1\alpha_2\Sigma_{12} \\
-\alpha_1\alpha_2\Sigma_{21} & \alpha_2^2\Sigma_{22} \end{array}\right)$. Then whenever $\Sigma$ and $\alpha$ satisfy Assumption \ref{assmp:K=2}, the agent's choice accuracy in this problem is (weakly) higher at earlier stopping times. 
\end{proposition}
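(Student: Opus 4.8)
The plan is to mirror the proof of Proposition \ref{prop:binary choice FSS}, substituting the weight-free Lemma \ref{lemma:binary choice uncertainty change} and the three generalized comparison lemmata established above. By the (unchanged) Lemma \ref{lemma:binary choice accuracy boundary}, it suffices to show that $k^*(c,\Sigma_t)/\sigma_t$ is weakly decreasing in $t$. Fix $t<t'$ and define $\lambda\in(0,1)$ by $\sigma_{t'}=\lambda\sigma_t$; the desired monotonicity is then equivalent to $\lambda\,k^*(c,\Sigma_t)\ge k^*(c,\Sigma_{t'})$. Since Lemma \ref{lemma:binary choice uncertainty change} holds for arbitrary payoff weights, $\lambda\,k^*(c,\Sigma_t)\ge k^*(c,\lambda^2\Sigma_t)$, so the whole proposition reduces to establishing
\[
k^*(c,\lambda^2\Sigma_t)\ \ge\ k^*(c,\Sigma_{t'}),
\]
where, by construction of $\lambda$, the priors $\lambda^2\Sigma_t$ and $\Sigma_{t'}$ induce the same prior uncertainty about $\omega=\alpha_1\theta_1+\alpha_2\theta_2$.

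I would then split into three cases according to how $t,t'$ sit relative to the endogenous switch time $t_1^*$. When $t_1^*\le t<t'$ (Case 1), both $\Sigma_t$ and $\Sigma_{t'}$ are Stage-2 posteriors, at which the two sources carry equal marginal value; the generalized analogue of Lemma \ref{lemma:binary choice correlation change} then gives equality of stopping boundaries. When $t<t_1^*\le t'$ (Case 2), $\lambda^2\Sigma_t$ is asymmetric (in the appropriate sense) while $\Sigma_{t'}$ is symmetric-in-marginal-value, and the generalized analogue of Lemma \ref{lemma:binary choice asymmetry change} yields the inequality. Both of these follow from the generalized comparison lemma established above exactly as in the equal-weight argument, so the real content lies in Case 3, where $t<t'<t_1^*$ and both beliefs are asymmetric.

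The main obstacle is verifying the two hypotheses of that generalized lemma in Case 3, because ``asymmetry'' is no longer measured by $\Sigma_{11}-\Sigma_{22}$ but by the gap $cov_1-cov_2$ in initial marginal values. I would parametrize $\Sigma_t=\left(\begin{smallmatrix} p & r \\ r & q \end{smallmatrix}\right)$ with $p>q$ and $r^2<pq$, and compute, after focusing on $\theta_1$ for $t'-t$ units, that $\Sigma_{t'}=\tfrac{pq-r^2}{pq'-r^2}\left(\begin{smallmatrix} p & r \\ r & q' \end{smallmatrix}\right)$ with $q'=q+(t'-t)(pq-r^2)>q$. Using the common-prior-uncertainty relation to eliminate $\lambda$, hypothesis (a) reduces to the monotonicity of the map
\[
x\ \longmapsto\ \frac{\alpha_1(p-r)-\alpha_2(x-r)}{\alpha_1^2 p+2\alpha_1\alpha_2 r+\alpha_2^2 x},
\]
whose derivative has numerator $-\alpha_2(\alpha_1+\alpha_2)\,cov_1\le 0$; here $cov_1\ge 0$ holds throughout Stage 1 because source $1$ is the source learned first under Assumption \ref{assmp:K=2}. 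Hypothesis (b) holds with equality: since $\lambda^2\Sigma_t$ and $\Sigma_{t'}$ are both posteriors of the common prior $\Sigma$, they would enter Stage 2 at the same belief, so a direct computation collapses (b) to $\lambda^2\le 1$. Granting these, the generalized lemma delivers $k^*(c,\lambda^2\Sigma_t)\ge k^*(c,\Sigma_{t'})$ in Case 3 as well, and assembling the three cases proves that $k^*(c,\Sigma_t)/\sigma_t$, hence the choice accuracy $p(t)$, is weakly decreasing in $t$.
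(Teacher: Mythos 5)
Your proposal follows the paper's own route exactly: the paper proves this proposition by replicating the argument for Proposition \ref{prop:binary choice FSS} with the generalized comparison lemmata of Online Appendix \ref{appx:binary choice informativeness}, and your Case-3 verification of hypotheses (a) and (b)---including the derivative numerator $-\alpha_2(\alpha_1+\alpha_2)\,cov_1$ and the collapse of (b) to $\lambda^2\le 1$---is precisely the ``minor modification'' the paper asserts but does not write out. The only cosmetic slip is parametrizing the Stage-1 posterior by $p>q$; with unequal weights the relevant Stage-1 condition is $cov_1>cov_2$ at the posterior (which, together with $cov_1\ge 0$, is all your computation actually uses), so nothing breaks.
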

We reiterate that Assumption \ref{assmp:K=2} is guaranteed if $\alpha_1 = \alpha_2$ (as we assumed previously), or if $\Sigma_{12} \geq 0$ (i.e., the unknown payoffs $v_1$ and $v_2$ are negatively correlated).

\subsection{Proof of Proposition \ref{prop:eqm}}

\subsubsection{Reduction to the Main Model} 

We first show it is without loss to consider $\sigma_b = 1$. Suppose the result holds for $\sigma_b = 1$, then for a general $\sigma_b$, we can write $b = \sigma_b \cdot \hat{b}$ and $\omega = \sigma_b \cdot \hat{\omega}$ for some random variables $\hat{b}$ and $\hat{\omega}$, where $\hat{b}$ has unit variance. Note that the two signals $\omega + \phi_1 b + \mathcal{N}(0, \zeta_1^2)$ and $\omega - \phi_2 b + \mathcal{N}(0, \zeta_2^2)$ are informationally equivalent to $\hat{\omega} + \phi_1 \hat{b} + \mathcal{N}\left(0, (\frac{\zeta_1}{\sigma_b})^2\right)$ and $\hat{\omega} - \phi_2 \hat{b} + \mathcal{N}\left(0, (\frac{\zeta_2}{\sigma_b})^2\right)$. We have thus transformed the general problem into one with payoff-relevant state $\hat{\omega}$ and unknown benefit $\hat{b}$, where $\hat{b}$ has unit variance. The result for this case then pins down equilibrium choices of $\phi_i$ and $\frac{\zeta_i}{\sigma_b}$, which then yield the equilibrium in the general case.

Hence, for the rest of the proof, we assume $\sigma_b = 1$. Define $\theta_1  = \frac{1}{\zeta_1}(\omega + \phi_1 b)$ and $\theta_2 = \frac{1}{\zeta_2}(\omega - \phi_2 b)$. Observe that $\omega + \phi_1 b + \mathcal{N}(0, \zeta_1^2)$ is informationally equivalent to $\frac{1}{\zeta_1}(\omega + \phi_1 b)+ \mathcal{N}(0, 1)$. Thus a unit of time spent on source $i$ produces a standard normal signal about the corresponding $\theta_i$, which returns our main model. The prior covariance matrix for $(\theta_1, \theta_2)$ is 
\[
\Sigma = \left(\begin{array}{cc}
\frac{\sigma_\omega^2 + \phi_1^2}{\zeta_1^2} & \frac{\sigma_\omega^2 - \phi_1 \phi_2}{\zeta_1\zeta_2} \\
\frac{\sigma_\omega^2 - \phi_1 \phi_2}{\zeta_1\zeta_2} & \frac{\sigma_\omega^2 + \phi_2^2}{\zeta_2^2} \end{array}\right)
\]
and the payoff-relevant state can be written as $\omega = \alpha_1 \theta_1 + \alpha_2 \theta_2$
with payoff weights $\alpha_1  = \zeta_1 \cdot \frac{\phi_2}{\phi_1 + \phi_2}$ and $\alpha_2 = \zeta_2 \cdot \frac{\phi_1}{\phi_1 + \phi_2}$.

Below we derive the reader's optimal attention allocation using our main results. Throughout we assume $\zeta_1 \leq \zeta_2$, so that source 1 is more precise. Under this assumption, we have 
$
cov_1 = \sigma_\omega^2 / \zeta_1 \geq \sigma_\omega^2 / \zeta_2 = cov_2,
$
which implies that source 1 is attended to in the first stage. Moreover, by Theorem \ref{thm:K=2} (which applies since $cov_1$ and $cov_2$ are both positive), the length of the first stage is 
\begin{equation}\label{eq:media_t1*}
t_1^* = \frac{cov_1 - cov_2}{\alpha_2 \det(\Sigma)} = \frac{\zeta_1(\zeta_2 - \zeta_1)}{\phi_1(\phi_1 + \phi_2)},
\end{equation}
where we used $\det(\Sigma) = \sigma_\omega^2 \left(\frac{\phi_1 + \phi_2}{\zeta_1 \zeta_2}\right)^2$.

Thus, the reader optimally attends only to source 1 until time $t_1^*$, and afterwards gives $\frac{\alpha_1}{\alpha_1 + \alpha_2}$ fraction of his attention to source 1. We can then write the two sources' payoffs as follows:
\begin{equation}\label{eq:media_payoffs}
\begin{split}
    U_1 & = \int_0^{t_1^*} re^{-rt}dt + \int_{t_1^*}^\infty re^{-rt} \left(\frac{\zeta_1\phi_2}{\zeta_1 \phi_2 + \zeta_2 \phi_1}\right) dt - \lambda(\kappa-\phi_1)^2 \\
    & = 1 - e^{-rt_1^*} \left(\frac{\zeta_2 \phi_1}{\zeta_1 \phi_2 + \zeta_2 \phi_1}\right) - \lambda(\kappa-\phi_1)^2; \\
    U_2 & = e^{-rt_1^*} \left(\frac{\zeta_2 \phi_1}{\zeta_1 \phi_2 + \zeta_2 \phi_1}\right) - \lambda(\kappa-\phi_2)^2.
\end{split}
\end{equation}
These payoffs define a stage game between the two sources, and our goal is to characterize its equilibrium. Our strategy below is to use first-order conditions to pin down what an equilibrium must be. We will then verify the equilibrium by checking all possible deviations.

\subsubsection{Solving for Equilibrium Precisions $\zeta_1^*$, $\zeta_2^*$}

We show here that the precision choices $\zeta_1$, $\zeta_2$ must be equal in any equilibrium. Suppose not, then small changes in $\zeta_1$ and $\zeta_2$ do not affect our standing assumption that $\zeta_1 \leq \zeta_2$. Thus we can take the first-order conditions for $\zeta_1$ and $\zeta_2$. Observe that 
$
\frac{\partial t_1^*}{\partial \zeta_1} = \frac{\zeta_2- 2\zeta_1}{\phi_1(\phi_1 + \phi_2)}.
$
So we can compute that
\begin{align*}
    \frac{\partial U_1}{\partial \zeta_1} = e^{-r t_1^*} \left(\frac{\zeta_2 \phi_1}{\zeta_1 \phi_2 + \zeta_2 \phi_1}\right) \left(r \frac{\zeta_2 - 2\zeta_1}{\phi_1(\phi_1 + \phi_2)} + \frac{\phi_2}{\zeta_1 \phi_2 + \zeta_2 \phi_1}\right).
\end{align*}
The FOC then requires that
\begin{equation}\label{eq:media_FOC_zeta1}
r\frac{2\zeta_1 - \zeta_2}{\phi_1(\phi_1 + \phi_2)} = \frac{\phi_2}{\zeta_1\phi_2 + \zeta_2 \phi_1}.
\end{equation}

Now consider the FOC for source 2. Observe that 
$\frac{\partial t_1^*}{\partial \zeta_2} = \frac{\zeta_1}{\phi_1(\phi_1 + \phi_2)}.$
So
\begin{align*}
    \frac{\partial U_2}{\partial \zeta_2} = e^{-r t_1^*} \left(\frac{\zeta_1 \phi_1}{\zeta_1 \phi_2 + \zeta_2 \phi_1}\right) \left(-r \frac{\zeta_2}{\phi_1(\phi_1 + \phi_2)} + \frac{\phi_2}{\zeta_1 \phi_2 + \zeta_2 \phi_1}\right).
\end{align*}
The FOC then requires that
\begin{equation}\label{eq:media_FOC_zeta2}
r\frac{\zeta_2}{\phi_1(\phi_1 + \phi_2)} = \frac{\phi_2}{\zeta_1\phi_2 + \zeta_2 \phi_1}.
\end{equation}
\eqref{eq:media_FOC_zeta1} and \eqref{eq:media_FOC_zeta2} together imply 
\[
r\frac{2\zeta_1 - \zeta_2}{\phi_1(\phi_1 + \phi_2)}  = r \frac{\zeta_2}{\phi_1(\phi_1 + \phi_2)},
\]
which simplifies to $\zeta_1 = \zeta_2$ and leads to a contradiction. 

Hence $\zeta_1 = \zeta_2$ must hold in equilibrium. In this case the first-order conditions derived above need not hold with equality, because the payoffs in \eqref{eq:media_payoffs} are derived under the assumption that $\zeta_1 \leq \zeta_2$, so that the same payoff expressions apply only to downward deviations of $\zeta_1$ and upward deviations of $\zeta_2$. Given this, the first-order conditions become inequalities $ \frac{\partial U_1}{\partial \zeta_1} \geq 0$ and $ \frac{\partial U_2}{\partial \zeta_2} \leq 0$ (evaluated at the equilibrium choices). These translate into the following inequality versions of \eqref{eq:media_FOC_zeta1} and \eqref{eq:media_FOC_zeta2}:
\begin{align*}
r\frac{2\zeta_1 - \zeta_2}{\phi_1(\phi_1 + \phi_2)} &\leq  \frac{\phi_2}{\zeta_1\phi_2 + \zeta_2 \phi_1}; \\
r\frac{\zeta_2}{\phi_1(\phi_1 + \phi_2)} &\geq \frac{\phi_2}{\zeta_1\phi_2 + \zeta_2 \phi_1}.
\end{align*}
Since we already know $\zeta_1 = \zeta_2$, the two inequalities above must both hold equal, and we further deduce that
\begin{equation}\label{eq:media_zeta and phi}
\zeta_1 = \zeta_2 = \sqrt{\frac{\phi_1 \phi_2}{r}}.
\end{equation}
Note also that given $\zeta_2 = \sqrt{\frac{\phi_1 \phi_2}{r}}$, choosing $\zeta_1$ to be any smaller number cannot be profitable for source 1. This is because as $\zeta_1$ decreases, the term $r \frac{\zeta_2 - 2\zeta_1}{\phi_1(\phi_1 + \phi_2)} + \frac{\phi_2}{\zeta_1 \phi_2 + \zeta_2 \phi_1}$ appearing in $\frac{\partial U_1}{\partial \zeta_1}$ increases and remains positive. So the choice $\zeta_1 = \sqrt{\frac{\phi_1 \phi_2}{r}}$ is robust to any downward deviation (in this variable). Similarly, given this $\zeta_1$, choosing any larger $\zeta_2$ is not profitable for source 2. By symmetry, source 1 (respectively source 2) also cannot profit from upward (respectively downward) deviations in precision.

\subsubsection{Solving for Equilibrium Biases $\phi_1^*$, $\phi_2^*$} 

We now fix precision choices and characterize equilibrium levels of bias. Since $\zeta_1 = \zeta_2$ in equilibrium, we have $t_1^* = 0$, meaning that there is no stage 1. Hence the two sources' payoffs simplify to 
\begin{equation}\label{eq:media_payoffs_phi}
\begin{split}
U_1 &=  \frac{\phi_2}{\phi_1 + \phi_2} - \lambda(\kappa - \phi_1)^2; \\
U_2 &=  \frac{\phi_1}{\phi_1 + \phi_2} - \lambda(\kappa - \phi_2)^2.
\end{split}
\end{equation}
In this smaller game, we will show that there is a (unique) pure strategy equilibrium if and only if $\lambda \kappa^2 \geq \frac{9}{16}$, in which case the equilibrium involves $\phi_1 = \phi_2 = \frac{1}{2}\left(\kappa + \sqrt{\kappa^2 - \frac{1}{2\lambda}}\right)$.

The first-order conditions $\frac{\partial U_i}{\partial \phi_i} = 0$ give
\begin{equation}\label{eq:media_FOC_phi}
2\lambda(\kappa-\phi_1) =  \frac{\phi_2}{(\phi_1 + \phi_2)^2}; \quad \quad \quad \quad 
2\lambda(\kappa-\phi_2) =  \frac{\phi_1}{(\phi_1 + \phi_2)^2}.
\end{equation}
In addition, the second-order conditions $\frac{\partial^2 U_i}{\partial \phi_i^2} \leq 0$ give
\begin{equation}\label{eq:media_SOC_phi}
2\lambda \geq \frac{2\phi_2}{(\phi_1 + \phi_2)^3}; \quad \quad \quad \quad \quad \quad \quad
2\lambda \geq  \frac{2\phi_1}{(\phi_1 + \phi_2)^3}.
\end{equation}
Comparing each equality in \eqref{eq:media_FOC_phi} with the corresponding one in \eqref{eq:media_SOC_phi} yields 
\begin{equation}\label{eq:media_phi1 and phi2}
2(\kappa-\phi_1) \leq \phi_1 + \phi_2; \quad \quad \quad \quad \quad
2(\kappa-\phi_2) \leq \phi_1 + \phi_2. 
\end{equation}
Moreover, multiplying the two equalities in \eqref{eq:media_FOC_phi} yields $(\kappa-\phi_1)\phi_1 = (\kappa-\phi_2)\phi_2$. So either $\phi_1 = \phi_2$ or $\phi_1 + \phi_2 = \kappa$. In the former case \eqref{eq:media_phi1 and phi2} implies $\phi_1 = \phi_2 \geq \frac{\kappa}{2}$. In the latter case \eqref{eq:media_phi1 and phi2} implies $\phi_1 = \phi_2 = \frac{\kappa}{2}$. Thus we always have $\phi_1 = \phi_2 \geq \frac{\kappa}{2}$. 

Now since $\phi_1 = \phi_2$, we can solve from \eqref{eq:media_FOC_phi} that $\phi_1$ satisfies $4 \phi_1 (\kappa - \phi_1) = \frac{1}{2\lambda}$, which is equivalent to 
\[
(2\phi_1 - \kappa)^2 = \kappa^2 - \frac{1}{2\lambda}. 
\]
Using $\phi_1 \geq \frac{\kappa}{2}$, we deduce that the only possible equilibrium is $\phi_1 = \phi_2 = \frac{1}{2}\left(\kappa + \sqrt{\kappa^2 - \frac{1}{2\lambda}}\right)$, which we denote by $\phi^*$. Clearly, for this solution to be a real number, a necessary condition is $\lambda \kappa^2 \geq \frac{1}{2}$.

Below we show these choices are an equilibrium in this smaller game (involving only $\phi_1, \phi_2$) if and only if $\lambda \kappa^2 \geq \frac{9}{16}$. Indeed, with these choices source 1's payoff is 
$\frac{1}{2} - \lambda(\kappa - \phi^*)^2$. This must be higher than choosing $\phi_1$ close to $0$, which would yield a payoff close to $1 - \lambda \kappa^2$. Thus we can deduce the inequality $\phi^* (2\kappa - \phi^*) \geq \frac{1}{2\lambda}$. But recall that $\phi^*$ satisfies the equation $4\phi^* (\kappa - \phi^*) = \frac{1}{2\lambda}$. So we deduce that $2\kappa - \phi^* \geq 4(\kappa - \phi^*)$, or $\phi^* \geq \frac{2}{3} \kappa$. It follows that $\frac{1}{2\lambda} = 4\phi^* (\kappa - \phi^*) \leq \frac{8}{9} \kappa^2$, which implies $\lambda \kappa^2 \geq \frac{9}{16}$ as a necessary condition.

Conversely, suppose $\lambda \kappa^2 \geq \frac{9}{16}$ holds. We need to show that 
\[
f(\phi_1) = \frac{\phi^*}{\phi_1 + \phi^*} - \lambda(\kappa - \phi_1)^2
\]
is maximized at $\phi_1 = \phi^*$. Since we have derived the solution via the first- and second-order conditions, it holds that $f'(\phi^*) = 0$ and $f''(\phi^*) \leq 0$. Moreover, note that $f'''(\phi_1) = \frac{-6\phi^*}{(\phi_1 + \phi^*)^4} < 0$, so $f'(\phi_1)$ is a strictly concave function. The fact that $f'(\phi^*) = 0$ and $f''(\phi^*) \leq 0$ thus imply that $f'(\phi_1) \leq 0$ for all $\phi_1 \geq \phi^*$. Hence $f(\phi_1)$ is decreasing for $\phi_1 \geq \phi^*$. 

On the other hand, since $f'(\phi_1)$ is concave and $f'(\phi^*) = 0$, there are two possibilities for the behavior of $f'$ on the interval $[0, \phi^*]$: Either $f'(0) \geq 0$ and thus $f'$ is non-negative on this whole interval, or $f'(0) < 0$ and $f'$ crosses zero exactly once from below. This means $f$ is either increasing on $[0, \phi^*]$, or first decreasing and then increasing. Hence the maximum of $f$ on this interval must occur at the extreme points. When $\lambda \kappa^2 \geq \frac{9}{16}$, we have $\phi^* = \frac{1}{2}\left(\kappa + \sqrt{\kappa^2 - \frac{1}{2\lambda}}\right) \geq \frac{2}{3} \kappa$. Thus 
\[
f(\phi^*) = \frac{1}{2} - \lambda(\kappa - \phi^*)^2 \geq \frac{1}{2} - \frac{1}{9} \lambda \kappa^2 \geq 1 - \lambda \kappa^2 = f(0).
\]
This shows that the function $f$ is maximized at $\phi^*$ whenever $\lambda \kappa^2 \geq \frac{9}{16}$.

\subsubsection{Verifying the Equilibrium} 

Summarizing the above analysis, we have shown that the only possible pure strategy equilibrium is $\phi_1 = \phi_2 = \phi^*$ and $\zeta_1 = \zeta_2 = \frac{\phi^*}{\sqrt{r}}$, where the latter follows from \eqref{eq:media_zeta and phi}. We have also shown that if $\lambda \kappa^2 \geq \frac{9}{16}$, then given source 2's equilibrium choices, source 1 does not have a profitable deviation in $\zeta_1$ or in $\phi_1$ alone. However, without further assumptions, it is possible for source 1 to profit from choosing $\zeta_1$ and $\phi_1$ both away from the target equilibrium. To illustrate, let $\kappa = 1$, $\lambda = \frac{9}{16}$ and $r = 1$. Then source 2's equilibrium choices are $\phi_2 = \zeta_2 = \phi^* = \frac{2}{3}$. By also choosing $\phi_1 = \zeta_1 = \frac{2}{3}$, source 1 obtains payoff $\frac{1}{2} - \lambda(\kappa - \phi^*)^2 = \frac{7}{16} = 0.4375$. Suppose instead source 1 chooses $\phi_1 = \frac{1}{6}$ and $\zeta_1 = \frac{1}{3}$. Then $t_1^* = \frac{\zeta_1(\zeta_2-\zeta_1)}{\phi_1(\phi_1+\phi_2)} = \frac{4}{5}$, and source 1 receives long-run attention $\frac{\zeta_1\phi_2}{\zeta_1\phi_2+\zeta_2\phi_1} = \frac{2}{3}$. In this case source 1's payoff is higher:
\[
1 - \frac{1}{3} e^{-\frac{4}{5}} - \frac{9}{16}(1-\frac{1}{6})^2 \approx 0.4596.
\]
So the target equilibrium would not be an equilibrium when double deviations are considered. 


However, we will show that when $\lambda \kappa^2 \geq 1.6$, then given $\phi_2 = \phi^*$ and $\zeta_2 = \frac{\phi^*}{\sqrt{r}}$, source 1 cannot profitably deviate to any $\phi_1$ and $\zeta_1 \leq \frac{\phi^*}{\sqrt{r}}$. And given $\phi_1 = \phi^*$ and $\zeta_1 = \frac{\phi^*}{\sqrt{r}}$, source 2 cannot profitably deviate to any $\phi_2$ and $\zeta_2 \geq \frac{\phi^*}{\sqrt{r}}$. Thanks to the symmetry of the target equilibrium, verifying these will be sufficient to show it is indeed an equilibrium. 

First consider source 1. We have the elementary inequality
\[
e^{-r t_1^*} \geq 1 - rt_1^* = \frac{\phi_1(\phi_1+\phi_2) - r \zeta_1(\zeta_2-\zeta_1)}{\phi_1(\phi_1+\phi_2)}, 
\]
which is tight if and only if $\zeta_1 = \zeta_2$ and $t_1^*=0$. Thus from \eqref{eq:media_payoffs} we have
\[
U_1 \leq 1 - \frac{\phi_1(\phi_1+\phi_2) - r \zeta_1(\zeta_2-\zeta_1)}{\phi_1(\phi_1+\phi_2)} \cdot \frac{\zeta_2\phi_1}{\zeta_1\phi_2+\zeta_2\phi_1} - \lambda(\kappa-\phi_1)^2.
\]
Plugging in $\phi_2 = \phi^*$ and $\zeta_2 = \frac{\phi^*}{\sqrt{r}}$, the above simplifies to
\begin{align*}
U_1 &\leq 1 - \frac{\phi_1(\phi_1+\phi^*) - \sqrt{r} \zeta_1 \phi^* + r\zeta_1^2}{(\phi_1+\phi^*)(\phi_1+\sqrt{r}\zeta_1)} - \lambda(\kappa-\phi_1)^2 \\
&= \frac{(\phi_1 + 2\phi^* - \sqrt{r}\zeta_1)\sqrt{r}\zeta_1}{(\phi_1+\phi^*)(\phi_1+\sqrt{r}\zeta_1)} - \lambda(\kappa-\phi_1)^2.
\end{align*}

We now optimize this upper-bound over $\zeta_1$, and then over $\phi_1$. Let $y = \phi_1 + \sqrt{r}\zeta_1$. Note that the range of $y$ is $y \in [\phi_1, \phi_1 + \phi^*]$, since $\sqrt{r}\zeta_1 \leq \sqrt{r}\zeta_2 = \phi^*$. Using $y$, we can rewrite the above inequality as
\begin{equation}\label{eq:media_U1 upper bound}
U_1 \leq \frac{(2\phi_1 + 2\phi^* - y)(y-\phi_1)}{(\phi_1+\phi^*)y} - \lambda(\kappa-\phi_1)^2.
\end{equation}
Thus, in terms of $y$, we would like to maximize 
\[
\frac{(2\phi_1 + 2\phi^* - y)(y-\phi_1)}{y} = - y - \frac{2(\phi_1 + \phi^*)\phi_1}{y} + 2\phi^* + 3\phi_1
\]
This is a single-peaked function in $y$, with global maximum occurring at $y = \sqrt{2(\phi_1 + \phi^*)\phi_1} \geq \phi_1$. We distinguish two cases. If $\sqrt{2(\phi_1 + \phi^*)\phi_1} \geq \phi_1 + \phi^*$, then the maximum of this function of $y$ over the interval $[\phi_1, \phi_1 + \phi^*]$ occurs at $y = \phi_1 + \phi^*$. Plugging back into \eqref{eq:media_U1 upper bound}, we have
\[
U_1 \leq \frac{\phi^*}{\phi_1 + \phi^*} - \lambda(\kappa - \phi_1)^2. 
\]
But in previous analysis we have already shown that the above function of $\phi_1$ is maximized at $\phi^*$ whenever $\lambda \kappa^2 \geq \frac{9}{16}$, so we are done in this case. 

The more difficult case is $y = \sqrt{2(\phi_1 + \phi^*)\phi_1} \leq \phi_1 + \phi^*$, which corresponds to $\phi_1 \leq \phi^*$. Here the global maximum is achievable on the interval $y \in [\phi_1, \phi_1 + \phi^*]$, so we instead deduce from \eqref{eq:media_U1 upper bound} the following:
\[
U_1 \leq \frac{3\phi_1 + 2\phi^* - 2 \sqrt{2(\phi_1 + \phi^*)\phi_1}}{\phi_1+\phi^*} - \lambda(\kappa-\phi_1)^2.
\]
We denote the function on the RHS as $g(\phi_1)$, and aim to show $g(\phi_1) \leq g(\phi^*)$ for all $\phi_1 \leq \phi^*$. Note that
\[
g'(\phi_1) = \frac{\phi^*}{(\phi_1+\phi^*)^2} \cdot \left(1 - \sqrt{\frac{2(\phi_1+\phi^*)}{\phi_1}}\right) + 2\lambda(k-\phi_1).
\]
Thus $g'(\phi^*)= \frac{-1}{4\phi^*} + 2\lambda(k - \phi^*) = 0$. Moreover, we claim that $g'$ is concave on the interval $[0, \phi^*]$. To see this, let us write 
\[
\phi^* g'(\phi_1) = \frac{1}{(\frac{\phi_1}{\phi^*} + 1)^2} \cdot \left(1 - \sqrt{2 + \frac{2\phi^*}{\phi_1}}\right) + 2\phi^*\lambda(k-\phi_1).
\]
The second term $2\phi^*\lambda(k-\phi_1)$ is linear in $\phi_1$ and thus does not affect convexity/concavity. As for the first term, we can rewrite it as 
\[
h(z) = \frac{1}{(1+z)^2} \cdot \left(1 - \sqrt{2 + \frac{2}{z}}\right)
\]
with $z = \frac{\phi_1}{\phi^*}$. It thus suffices to show $h$ is concave on the interval $[0, 1]$. For this we compute that
\[
h''(z) = -3\frac{-8\sqrt{z^5(z+1)} + 8\sqrt{2}z^3 + 12\sqrt{2}z^2 + 5\sqrt{2}z + \sqrt{2}}{4z^{2.5}(z+1)^{4.5}},
\]
which is negative for $z \leq 1$ because $8\sqrt{z^5(z+1)} \leq 8\sqrt{2z^5} \leq 8\sqrt{2}z^2$. 

Hence $g'(\phi^*) = 0$ and $g'$ is concave on $[0, \phi^*]$. It follows that either $g'$ is non-negative on this whole interval, or it is first negative then positive. So $g$ is either increasing, or first decreasing and then increasing. It thus remains to show $g(0) \leq g(\phi^*)$, which reduces to 
\[
2 - \lambda \kappa^2 \leq \frac{1}{2} - \lambda (\kappa - \phi^*)^2. 
\]
Since $\lambda \kappa^2 \geq 1.6$, we have $\phi^* = \frac{1}{2}\left(\kappa + \sqrt{\kappa^2 - \frac{1}{2\lambda}}\right) > \frac{3}{4} \kappa$. Thus we indeed have
\[
2 - \lambda \kappa^2 \leq \frac{1}{2} - \frac{1}{16} \lambda \kappa^2 < \frac{1}{2} - \lambda (\kappa - \phi^*)^2,
\]
where the first inequality again uses the assumption $\lambda \kappa^2 \geq 1.6$. This completes the proof that source 1 does not have any profitable deviation.

\bigskip

We now turn to source 2's incentives. We use another elementary inequality 
\[
e^{rt_1^*} \geq 1 + rt_1^* = 1 + r \frac{\zeta_1(\zeta_2 - \zeta_1)}{\phi_1(\phi_1 + \phi_2)}.
\]
Given $\phi_1 = \phi^*$ and $\zeta_1 = \frac{\phi^*}{\sqrt{r}}$, we can simplify the above as
\[
e^{rt_1^*} \geq 1 + \frac{\sqrt{r}(\zeta_2 - \zeta_1)}{\phi_1 + \phi_2} = \frac{\phi_2 + \sqrt{r} \zeta_2}{\phi_2 + \phi^*}.
\]
So from \eqref{eq:media_payoffs} we obtain an upper bound for $U_2$:
\begin{equation}\label{eq:media_U2 upper bound}
\begin{split}
U_2 &= e^{-rt_1^*} \left( \frac{\zeta_2\phi_1}{\zeta_1\phi_2 + \zeta_2\phi_1}\right) - \lambda(\kappa - \phi_2)^2  \\
&\leq \frac{\phi_2 + \phi^*}{\phi_2 + \sqrt{r} \zeta_2} \cdot \frac{\sqrt{r}\zeta_2}{\phi_2 + \sqrt{r}\zeta_2} - \lambda(\kappa - \phi_2)^2\\
&= (\phi_2 + \phi^*)\frac{\sqrt{r}\zeta_2}{(\phi_2 + \sqrt{r}\zeta_2)^2} - \lambda(\kappa - \phi_2)^2
\end{split}
\end{equation}

Again we will optimize this upper bound first over $\zeta_2$, then over $\phi_2$. In terms of $\zeta_2$, we want to minimize
\[
\frac{(\phi_2 + \sqrt{r}\zeta_2)^2}{\sqrt{r}\zeta_2} = \sqrt{r} \zeta_2 + \frac{\phi_2^2}{\sqrt{r}\zeta_2} + 2\phi_2. 
\]
The global minimum of this single-dipped function occurs at $\sqrt{r}\zeta_2 = \phi_2$, but since $\sqrt{r}\zeta_2 \geq \sqrt{r}\zeta_1 = \phi^*$, there are two cases to consider. In the first case, $\phi_2 \leq \phi^*$. Then the minimum subject to $\sqrt{r}\zeta_2 \geq \phi^*$ occurs precisely at $\sqrt{r}\zeta_2 = \phi^*$. In this case the upper bound \eqref{eq:media_U2 upper bound} becomes $U_2 \leq \frac{\phi^*}{\phi_2 + \phi^*} - \lambda(\kappa-\phi_2)^2$, and as we have shown previously this is maximized at $\phi_2 = \phi^*$.

In the other case, we have $\phi_2 \geq \phi^*$. Then the optimal $\zeta_2$ that maximizes the upper bound \eqref{eq:media_U2 upper bound} is $\sqrt{r}\zeta_2 = \phi_2$. \eqref{eq:media_U2 upper bound} then becomes 
\[
U_2 \leq \frac{\phi_2 + \phi^*}{4\phi_2} - \lambda(\kappa - \phi_2)^2.
\]
We will show that the derivative of this function of $\phi_2$ is negative for any $\phi_2 \geq \phi^*$, so that $\phi_2 = \phi^*$ is again the optimal choice. This comparison reduces to $2\lambda(\kappa-\phi_2) \leq \frac{\phi^*}{4\phi_2^2}$, which is equivalent to $8\lambda(\kappa-\phi_2)\phi_2^2 \leq \phi^*$. Since equality obtains when $\phi_2 = \phi^*$, we just need to show
\[
8\lambda(\kappa-\phi_2)\phi_2^2 \leq 8\lambda(\kappa-\phi^*)(\phi^*)^2.
\]
After factoring out $\phi_2 - \phi^*$, this inequality reduces to 
\[
\kappa(\phi_2 + \phi^*) \leq \phi_2^2 + \phi_2 \phi^* + (\phi^*)^2.
\]
This holds whenever $\phi^* \geq \frac{2}{3}\kappa$, since $\phi_2 \geq \phi^*$ implies $\phi_2^2 + \phi_2 \phi^* + (\phi^*)^2 \geq \frac{3}{2}\phi^*(\phi_2 + \phi^*)$. Hence whenever $\lambda k^2 \geq \frac{9}{16}$, source 2 does not have profitable deviations either.

\subsection{Proof of Proposition \ref{prop:manipulation decreases attention}}

For $t \leq T$, $\hat{n}_i(t) = 0$ for every $i > 1$, so the result $\hat{n}_i(t) \leq n_i(t)$ trivially holds. Below we consider a fixed time $t > T$. We can further assume $n_1(t) < T$, because otherwise the proof of Proposition \ref{prop:manipulation increases attention} (in the main text) shows that $\hat{n}(t)$ coincides with $n(t)$. Given this assumption, we claim that $\hat{n}_1(t)$ is exactly equal to $T$. Indeed, let $\overline{t} \geq t$ be the first time at which $n_1(\overline{t}) = T$; such a time exists by monotonicity and continuity of $n_1(\cdot)$. Then  $\hat{n}(\overline{t})$ coincides with $n(\overline{t})$, which in particular implies $\hat{n}_1(\overline{t}) = n_1(\overline{t}) = T$. Monotonicity of $\hat{n}_1(\cdot)$ then implies $\hat{n}_1(t) \leq T$. But by assumption $\hat{n}_1(t) \geq T$, so equality must hold. 

Next, we connect the two vectors $n(t)$ and $\hat{n}(t)$ by a continuous path. For each $x \in [n_1(t), T]$, we define $q^x$ as the cumulative attention vector (at time $t$) resulting from a hypothetical attention manipulation that forces the agent to observe source 1 for $x$ units of time. That is,
\[
q^x = (q^x_1, \dots, q^x_K) = \argmin_{q_1, \dots, q_K \geq 0: ~\sum_{i} q_i = t \text{ and } q_1 \geq x} V(q). 
\]
Clearly, $n(t) = q^{n_1(t)}$ and $\hat{n}(t) = q^T$. By the same argument as in the previous paragraph, $q^x_1 = x$ holds for $x$ in this range. So in defining $q^x$ we can replace the constraint $q^x_1 \geq x$ with equality. 

To prove the proposition, it suffices to show that as $x$ decreases from $T$ to $n_1(t)$, $q^x_i$ weakly increases for each $i > 1$. Similar to our proof of Theorem \ref{thm:general}, the proof strategy here will be to use the Hessian matrix of $V$ to compute the derivative of the vector $q^x$ with respect to $x$. For this we fix $x > n_1(t)$, and assume for now that $q^x_i$ is strictly positive for each $i > 1$. Then the first-order condition for the constrained optimality of $q^x$ yields $\partial_2 V(q^x) = \cdots = \partial_K V(q^x)$. If the vector $q^x$ is left-differentiable at $x$, then for any $y$ slightly smaller than $x$, $q^y$ must also satisfy the above equal marginal value condition (for every source $i > 1$). 

Thus the left derivative of $q^x$ is a vector $u \in \mathbb{R}^K$ that satisfies $u_1 = 1$, $u_1 + \dots + u_K = 0$ and
\begin{equation}\label{eq:attention manipulation derivative}
Hess_V(q^x) \cdot u = \lambda (c, 1, \dots, 1)',
\end{equation}
for some $\lambda, c \in \mathbb{R}$ that will be determined later. 

Under the differentiability assumption, we can solve for $u$ as follows. Note from Lemma \ref{lemma:V partials} that $\partial_{ij} V = 2 \gamma_i \gamma_j [(\Sigma^{-1} + Q)^{-1}]_{ij}$, where we save notation by writing $Q = \diag(q^x)$ from now on. Then, we have the matrix identity 
\begin{equation}\label{eq:attention manipulation hessian}
Hess_V(q^x) = 2 \diag(\gamma) \cdot (\Sigma^{-1} + Q)^{-1} \cdot \diag(\gamma), 
\end{equation}
where $\gamma$ is as usual the vector $(\Sigma^{-1} + Q)^{-1} \cdot \alpha$. As shown in the proof of Proposition \ref{prop:substitutes}, $\gamma$ has strictly positive coordinates. Recalling $\partial_i V = -\gamma_i^2$, we thus have $\gamma_2 = \dots = \gamma_K  > 0$. $\gamma_1$ cannot be larger, since then $q^x_1$ should be larger than $x$ to minimize $V$. $\gamma_1$ cannot be equal to the other sources either, since in that case the vector $q^x$ would satisfy the first-order condition for the unconstrained variance minimization problem, leading to $q^x = n(t)$ and $x = n_1(t)$. So $\gamma_2 = \dots = \gamma_K > \gamma_1 > 0$, which will be useful below. 

Now, using \eqref{eq:attention manipulation derivative} and \eqref{eq:attention manipulation hessian}, we have 
\begin{align*}
u = \lambda \cdot Hess^{-1} \cdot (c, 1, \dots, 1)' & = 2 \lambda \cdot \diag(1/\gamma) \cdot (\Sigma^{-1}+Q) \cdot \diag(1/\gamma) \cdot (c, 1, \dots, 1)' \\
&= 2 \lambda \cdot \diag(1/\gamma) \cdot (\Sigma^{-1}+Q) \cdot \left(\frac{c}{\gamma_1}, \frac{1}{\gamma_2}, \dots, \frac{1}{\gamma_2}\right)' \\
&= \frac{2\lambda}{\gamma_2^2}\cdot \diag(1/\gamma) \cdot (\Sigma^{-1}+Q) \cdot \left(\frac{c\gamma_2^2}{\gamma_1}, \gamma_2, \dots, \gamma_2\right)'.
\end{align*}
If we rewrite $\frac{c\gamma_2^2}{\gamma_1}$ as $\gamma_1 - b$ for some $b \in \mathbb{R}$, then the vector $\left(\frac{c\gamma_2^2}{\gamma_1}, \gamma_2, \dots, \gamma_2\right)'$ differs from $\gamma$ only in that the first coordinate is smaller by $b$. Using $(\Sigma^{-1}+Q) \gamma = \alpha$, we thus obtain
\[
(\Sigma^{-1}+Q) \cdot \left(\frac{c\gamma_2^2}{\gamma_1}, \gamma_2, \dots, \gamma_2\right)' = (\alpha_1 - b[\Sigma^{-1}+Q]_{11}, \alpha_2 - b[\Sigma^{-1}+Q]_{21}, \dots, \alpha_K - b[\Sigma^{-1}+Q]_{K1})'
\]
By the symmetry of $\Sigma^{-1}+Q$, it follows that
\begin{equation}\label{eq:attention manipulation u}
u_i = \frac{2\lambda}{\gamma_2^2} \cdot \frac{\alpha_i - b[\Sigma^{-1}+Q]_{1i}}{\gamma_i}.
\end{equation}

Recall $u_1 = 1$, which reflects $q^y_1 = y$ for every $y$. Thus $\lambda \neq 0$. Moreover, recall $\sum_{i=1}^{K}u_i = 0$, a consequence of the fact that $\sum_{i=1}^{K}q^y_i = t$ for every $y$. Thus, we obtain
\begin{equation}\label{eq:attention manipulation b}
b \sum_{i=1}^{K} \frac{[\Sigma^{-1}+Q]_{1i}}{\gamma_i} = \sum_{i=1}^{K}\frac{\alpha_i}{\gamma_i}.
\end{equation}
The RHS is clearly positive. On the LHS, using $\gamma_2 = \dots = \gamma_K > \gamma_1 > 0$ and $[\Sigma^{-1}+Q]_{11} > 0$, 
\begin{align*}
\sum_{i=1}^{K} \frac{[\Sigma^{-1}+Q]_{1i}}{\gamma_i} &= \frac{[\Sigma^{-1}+Q]_{11}}{\gamma_1} + \sum_{i > 1} \frac{[\Sigma^{-1}+Q]_{1i}}{\gamma_i} \\
&> \frac{[\Sigma^{-1}+Q]_{11} \cdot \gamma_1}{\gamma_2^2} + \sum_{i > 1}\frac{[\Sigma^{-1}+Q]_{1i} \cdot \gamma_i}{\gamma_2^2} \\
&= \frac{1}{\gamma_2^2} \sum_{i=1}^{K}[\Sigma^{-1}+Q]_{1i} \cdot \gamma_i \\&
=\frac{\alpha_1}{\gamma_2^2} > 0,
\end{align*}
where the last equality uses $\sum_{i = 1}^{K} (\Sigma^{-1} + Q)_{1i} \cdot \gamma_i = \alpha_1 > 0$ as a consequence of $(\Sigma^{-1} + Q) \gamma = \alpha$.

Thus, \eqref{eq:attention manipulation b} implies the crucial inequality $b > 0$. It follows that for this $b$ and any $i > 1$, $\frac{\alpha_i - b[\Sigma^{-1}+Q]_{1i}}{\gamma_i}$ is strictly positive (by assumption $[\Sigma^{-1}+Q]_{1i} \leq 0$). From $\sum_{i = 1}^{K}u_i = 0$, we further know that $\frac{\alpha_1 - b[\Sigma^{-1}+Q]_{11}}{\gamma_1}$ is strictly negative. We can then use \eqref{eq:attention manipulation u} to determine the unique value of $\lambda < 0$ that makes $u_1 = 1$. Since $\lambda < 0$ and $b > 0$, \eqref{eq:attention manipulation u} yields $u_i < 0$ for every $i > 0$. Hence, starting from $q^x$, any local decrease in the amount of attention towards source 1 increases the optimal amount of attention towards every other source. This is what we desire to show. 

\bigskip



To complete the proof, we now dispense with two previous assumptions in the analysis. First, we have assumed $q^x_i > 0$ for each $i > 1$. In general, relabeling the attributes if necessary, we can assume that among the sources $2, \dots, K$, the ones with maximal marginal value (i.e., those with maximal $\gamma_i$) are sources $2 \sim k$ for some $k$. Then the first-order condition for the constrained optimality of $q^x_i$ requires $q^x_i = 0$ for $i > k$. Moreover, by the same argument as above, $\gamma_1$ is strictly smaller. By continuity, sources $2 \sim k$ also maximize the marginal value at any $q^y_i$ where $y$ is slightly smaller than $x$. So $q^y_i = 0$ for $i > k$ also holds. Thus, locally we can reduce the problem with $K$ sources to a smaller problem with only the first $k$ sources, similar to the proof of Theorem \ref{thm:general}. 

In this smaller problem, the payoff weight vector $\tilde{\alpha}$ is given by \eqref{eq:tilde alpha}, and it is strictly positive as shown in the proof of Theorem \ref{thm:general}. The prior covariance matrix becomes $\Sigma_{TL}$, which is the $k \times k$ top-left principal sub-matrix of $\Sigma$. Since $\Sigma^{-1}$ is an $M$-matrix, so is $(\Sigma_{TL})^{-1}$.\footnote{Rewriting $(\Sigma_{TL})^{-1}$ as the Schur complement of $\Sigma^{-1}$ with respect to its bottom-right block, the result follows from the fact that $M$-matrices are closed under Schur complements. This fact can be proved by the same induction argument as in \citet{CarlsonMarkham}.} The vector $\tilde{\gamma} = [(\Sigma_{TL})^{-1} + \tilde{Q}]^{-1} \cdot \tilde{\alpha} \in \mathbb{R}^k$ is the first $k$ coordinates of $\gamma$, so we have $\tilde{\gamma}_2 = \dots = \tilde{\gamma}_k > \tilde{\gamma}_1 > 0$. Hence, the above procedure for finding $u$ directly applies to this smaller problem, and yields a vector $\tilde{u} \in \mathbb{R}^k$ with $\tilde{u}_1 = 1$, $\tilde{u}_i < 0$ for $1 < i \leq k$ and $\tilde{u}_1 + \dots + \tilde{u}_k = 0$. In the original problem, the left-derivative of $q^x$ is thus the vector $(\tilde{u}_1, \dots, \tilde{u}_k, 0, \dots, 0)'$. Once again, locally decreasing attention towards source 1 weakly increases attention towards every other source. To be fully rigorous, we also note that by the Maximum Theorem, $q^x$ varies continuously with $x$. So the preceding form of local monotonicity implies global monotonicity (without continuity, $q^x_i$ may only be piece-wise monotone).

Our second simplifying assumption is left-differentiablity, which we now argue is without loss. Let $x$ be the infimum of those numbers $\hat{x} \in [n_1(t), T]$ such that $q^y$ is left-differentiable in $y$ for $y > \hat{x}$. By properties of the infimum, $x$ is in fact the \emph{smallest} number $\geq n_1(t)$ such that $q^y$ is left-differentiable for $y > x$. Now suppose $x > n_1(t)$, and we will deduce a contradiction. Specifically, let sources $2 \sim k$ have the maximal marginal value at $q^x$. We can use the above procedure to find the vectors $\tilde{u} \in \mathbb{R}^k$ and $(\tilde{u}_1, \dots, \tilde{u}_k, 0, \dots, 0)' \in \mathbb{R}^K$, such that if $q^x$ is perturbed slightly in the direction $-u$, the resulting attention vector maintains the equal marginal value property across sources $2 \sim k$. Formally, for any attention vector $q$ with total attention $t$ such that sources $2 \sim k$ have the maximal marginal value, we can solve for $b, \lambda \in \mathbb{R}$ and $\tilde{u} \in \mathbb{R}^k$ from \eqref{eq:attention manipulation u} and \eqref{eq:attention manipulation b} (with $\tilde{\alpha}$, $\Sigma_{TL}$ and $\diag(q)$ replacing $\alpha$, $\Sigma$ and $Q$ in those equations). It is easy to see that these solutions vary continuously with $q$, so we can write $u = f(q)$ for a continuous function $f$. This allows us to define the following system of ordinary differential equations with a right boundary condition (where the derivative at $x$ is interpreted as the left-derivative). 
\[
q'(y) = f(q(y)) \text{ for every } y \in (x- \epsilon, x], \text{ and } q(x) = q^x.
\]
By Peano's Existence Theorem, this system of ODE admits a solution when $\epsilon$ is sufficiently small. Note that by construction, for any $y$ in the interval $(x-\epsilon, x]$, we have $q_1(y) = y$, $\sum_{i = 1}^{K} q_i(y) = t$ and $q_i(y) = 0$ for $i > k$. Moreover, at the vector $q(y)$ sources $2 \sim k$ have equal marginal values, which are maximal if $\epsilon$ is sufficiently small. Hence $q(y)$ satisfies the Kuhn-Tucker conditions for minimizing $V(q)$ subject to $q_1 \geq y$, $q_i \geq 0$ and $\sum_{i = 1}^{K} q_i = t$. Since $V$ is convex, these conditions are sufficient, so that $q(y)$ coincides with $q^y$. But then we see that $q^y$ is left-differentiable for any $y > x-\epsilon$, contradicting the definition of $x$. 

This completes the entire proof of Proposition \ref{prop:manipulation decreases attention}.


\end{document}